\DeclareMathOperator*{\argmin}{arg\,min}
\newtheorem{thm}{Theorem}
\newtheorem{defi}[theorem]{Definition}
\newcommand{\Boxiang}[1]{{{\textcolor{black}{\textbf{Boxiang:}}}{\textcolor{blue}{\textbf{#1}}}}}
\newcommand{\Wendy}[1]{{{\textcolor{black}{\textbf{Wendy:}}}{\textcolor{red}{\textbf{#1}}}}}
\newcommand{\Haipei}[1]{{{\textcolor{black}{\textbf{Haipei:}}}{\textcolor{violet}{\textbf{#1}}}}}
\newcommand{\Ting}[1]{{{\textcolor{black}{\textbf{Ting:}}}{\textcolor{green}{\textbf{#1}}}}}
\newcommand{\comment}[1]{}
\begin{document}

\title{Truth Inference on Sparse Crowdsourcing Data with Local Differential Privacy}



%
%
%
%

\author{
\IEEEauthorblockN{Haipei Sun\IEEEauthorrefmark{1},
Boxiang Dong\IEEEauthorrefmark{2},
Hui (Wendy) Wang\IEEEauthorrefmark{1},
Ting Yu\IEEEauthorrefmark{3} and
Zhan Qin\IEEEauthorrefmark{4}}
\IEEEauthorblockA{\IEEEauthorrefmark{1}Stevens Institute of Technology, Hoboken, New Jersey 07030\\
Email: \{hsun15, Hui.Wang\}@stevens.edu}
\IEEEauthorblockA{\IEEEauthorrefmark{2}Montclair State University, Montclair, New Jersey 07043\\
Email: dongb@montclair.edu}
\IEEEauthorblockA{\IEEEauthorrefmark{3}Qatar Computing Research Institute, Doha, Qatar\\
Email: tyu@qf.org.qa}
\IEEEauthorblockA{\IEEEauthorrefmark{4}The University of Texas at San Antonio, San Antonio, Texas 78249\\
Email: zhan.qin@utsa.edu}}



\maketitle

\begin{abstract}
Crowdsourcing has arisen as a new problem-solving paradigm
for tasks that are difficult for computers but easy for humans. However, since the answers collected from the recruited participants (workers) may contain sensitive information, crowdsourcing raises serious privacy concerns. 
In this paper, we investigate the problem of protecting answer privacy under {\em local differential privacy} (LDP), by which individual workers randomize their answers independently and send the perturbed answers to the task requester. 
The utility goal is to enable to infer the true answer (i.e., truth) from the perturbed data with high accuracy. 
One of the challenges of LDP perturbation is the {\em sparsity} of worker answers (i.e., each worker only answers a small number of tasks). 
Simple extension of the existing approaches (e.g., Laplace perturbation and randomized response) may incur large error of truth inference on sparse data. 
Thus we design an efficient new matrix
factorization (MF) algorithm under LDP. 
We prove that our MF algorithm can provide both LDP guarantee and small error of truth inference, regardless of the sparsity of worker answers.
We perform extensive experiments on real-world and synthetic datasets, and demonstrate that the MF algorithm performs better than the existing LDP algorithms on sparse crowdsourcing data.
\end{abstract}

\IEEEpeerreviewmaketitle

\vspace{-0.1in}
\section{Introduction}
\label{sc:intro}
\vspace{-0.1in}
Crowdsourcing enables to perform tasks that are easy for humans but remain difficult for computers. 
Typically, a {\em task requester} releases the tasks on a crowdsourcing platform (e.g., Amazon Mechanical Turk (AMT)). Then the {\em workers} provide their answers to these tasks on the same  crowdsourcing platform in exchange for a reward. 
As the Internet and mobile technologies continue to advance, crowdsourcing helps organizations to improve productivity and innovation in a cost-effective way. 
 
While crowdsourcing provides an effective way for problem-solving, collecting individual worker answers may pose potential privacy risks on the workers. For instance, it has been reported that AMT was leveraged by politicians to access a large pool of Facebook profiles and collect tens of thousands of individuals' demographic data \cite{tedcruz}. Crowdsourcing-related applications such as participatory sensing \cite{burke2006participatory} and citizen science \cite{bowser2014sharing} also raise privacy concerns for  the workers. 
However, simply removing the worker names or replacing the names with pseudo-names cannot adequately protect worker privacy. 
It is still possible to de-anonymize crowd workers by matching their inputs with external datasets \cite{kandappu2015privacy}. 

Recently, differential privacy (DP) \cite{dwork2008differential} has been used in many applications to provide rigorous privacy guarantee. Classical DP requires a centralized trusted {\em data curator} (DC) to collect all the answers and publish privatized statistical information. 
However, as crowdsourcing services become increasingly popular, many untrusted task requesters (or DCs) appear to abuse the crowdsourcing services by collecting private information of the workers (see the aforementioned example that AMT was leveraged by politicians). 
In recent years, local differential privacy (LDP) \cite{duchi2013local} arises as a good alternative paradigm to prevent an untrusted DC from learning any personal information of the data providers and provides {\em per-user privacy}. In LDP, each data provider randomizes his/her data to satisfy DP locally. Then the data providers send their randomized answers to the (untrusted) DC, who aggregates the data with no access to personal information of data providers. LDP roots in {\em randomized response}, first proposed in \cite{warner1965randomized}.
It has been used in many applications such as Google's Chrome browser \cite{erlingsson2014rappor,fanti2016building} and Apple's iOS 10 \cite{appledp}. 

The goal of this paper is to protect worker privacy  with LDP guarantee for crowdsourcing systems. We aim to address two main challenges. 

\noindent{\bf Challenge 1: sparsity of worker answers.} The worker answers can be very {\em sparse}, as often times most workers only provide answers to a very small portion of the tasks. For example, the analysis of a real-world crowdsourcing dataset, named {\em AdultContent} dataset\footnote{http://dbgroup.cs.tsinghua.edu.cn/ligl/crowddata}, which includes the relevance ratings of 825 workers for approximately 11,000 documents, shows that the dataset has the average sparsity as high as 99\% (i.e., each worker rates 70 documents at average). 
The NULL values in worker answers (i.e., missing answers from the workers) should be protected, as they reveal the sensitive information as whether a worker has participated in specific tasks  \cite{ciglic2014k}. However, careless perturbation of NULL values may alter the original answer distribution and incur significant inaccuracy of truth inference. However, most of the LDP works (e.g., \cite{bassily2017practical,bassily2015local,qin2016heavy,wang2017locally}) only consider complete data. None of these works address the sparsity issue. 

\noindent{\bf Challenge 2: data utility}. One of the major utility goals of crowdsourcing systems is to aggregate the answers from workers of different (possibly unknown) quality, and infer the true answer (i.e., truth) of each task. Truth inference \cite{li2014confidence, zhao2012probabilistic} has been shown as  effective to estimate both worker quality and the truth. Most of the existing truth inference mechanisms follow an iterative process: both worker quality and inferred truth are estimated and updated iteratively, until they reach convergence. This makes it extremely difficult to preserve the accuracy of truth inference on the perturbed worker answers, as even the slight amount of initial noise on the worker answers may be propagated during iterations and amplified to the final truth inference results. 

Matrix factorization is one of the most popular methods to predict NULL values. 
Intuitively, by matrix factorization, the answers of a specific worker can be modeled as the multiplication of a worker profile vector and a task profile matrix. We observe that the adaption of a worker profile vector only relies on the worker's own answers rather than those of other workers. Therefore, matrix factorization can be performed locally. However, applying LDP to matrix factorization is not trivial. 
A direct application of LDP on the worker answers can incur 
a large perturbation error that linearly grows with
the number of tasks. 
Most of the existing methods of matrix factorization with LDP \cite{hua2015differentially,shen2014privacy,shen2016epicrec,shin2018privacy} only consider recommender systems. They cannot be applied to the crowdsourcing setting directly due to the mismatched utility function; they consider the accuracy of recommendation while we consider the accuracy of truth inference. Furthermore, 
some of these works (e.g., \cite{hua2015differentially,shin2018privacy}) require iterative feedbacks between the data curator and users, which may incur expensive communication overhead. 

\noindent{\bf Contributions.} 
To our best knowledge, this is the first work that provides  privacy protection on sparse worker answers with LDP guarantee, while preserving the accuracy of truth inference on the perturbed data. 
We summarize our main contributions as follows. 
First, we present simple extension of two existing perturbation approaches, namely Laplace perturbation (LP) and randomized response (RR), to deal with sparse crowdsourcing data. We provide detailed analysis of the expected error bound of truth inference for these two approaches, and show that both LP and RR can have large error of truth inference for sparse data. 
Second, we design a new matrix factorization (MF) algorithm with LDP. To preserve the accuracy of truth inference, we apply the perturbation on the objective function of matrix factorization, instead of on the matrix factorization results. Our formal analysis shows that the theoretical error bound of truth inference results on the perturbed data is small, and the error bound of truth inference is insensitive to data sparsity. 
Finally, we conduct extensive experiments on real-world and synthetic crowdsourcing datasets of various sparsity. The experimental results demonstrate that our MF approach significantly improves the accuracy of truth inference on the perturbed data compared with the existing work. 

The rest of this paper is organized as follows.
Section \ref{sc:back} and \ref{sc:preliminaries} present the background and preliminaries. Section \ref{sc:straw}  introduces the extension of two existing approaches. Section \ref{sc:mf} discusses our MF mechanism. Section \ref{sc:exp} presents the experiment results. Related work is discussed in Section \ref{sc:related}. Section \ref{sc:conclu} concludes the paper.

\nop{
One possible solution is to apply \emph{differential privacy} (DP) \cite{dwork2008differential}, which provides. 
However, due to the fact that the crowdsourcing data is sparse in general, enforcing DP by adding Laplace noise on many NULL answers will damage the utility of the crowdsourcing data. 
A core {\em utility} function of the crowdsourcing data is {\em truth inference}, i.e., to aggregate and infer the data analytics results from workers' (possibly noisy) answers. Adding noise for privacy protection on those worker answers, which are already noisy, no doubt will downgrade the truth inference results. 


\begin{figure}[ht!]
     \includegraphics[width=\linewidth]{./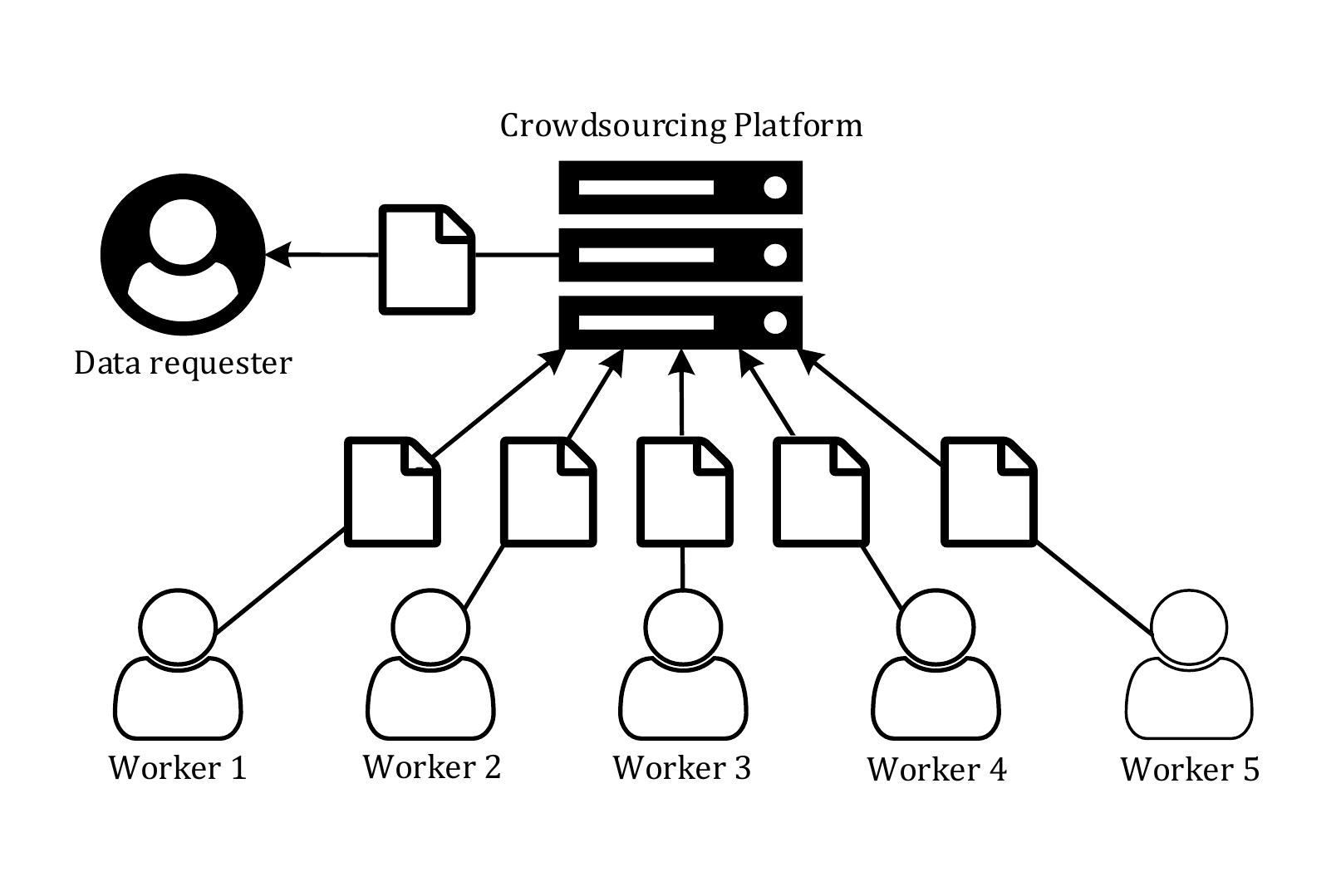}
  \caption{Crowdsourcing Framework}
  \label{fig:framework}
\end{figure}

A typical type of systems that exploit the crowd intelligence to improve the performance is the data analytics systems. Recently, several academic and industrial
organizations have started investigating and developing technologies and infrastructure that supports various crowdsourced data analytics (CDA) systems, including data collection \cite{trushkowsky2013crowdsourced,park2014crowdfill}, search \cite{franklin2011crowddb,parameswaran2011human,parameswaran2012crowdscreen,parameswaran2014optimal,yan2010crowdsearch}, ranking \cite{chen2013pairwise, guo2012so}, data cleaning \cite{wang2012crowder,chu2015katara}, to name a few. 


However due to the setting of crowdsourcing platform, the answer of each user must be protected before uploading to the platform. Here we use another technique called \emph{local differential privacy} (LDP) \cite{kasiviswanathan2011can}. But for both DP and LDP, there is an unsolved problem: since most of the platform allow user to skip tasks, i.e., only finish a subset of the tasks, the answer vector returned to the platform may be sparse and contain NULL values. There are many existing works such as \Haipei{Add citation here.}, but to the best of our knowledge, none of these works has specifically deal with sparsity on non-aggregated data.


\noindent{\bf Challenges.} There are two challenges in this project. First, we need to deal with NULL values in the answer vector. According to the setting of our problem, we want to protect the privacy of the answer vector from each user, thus the data cannot be aggregated. The action of skipping a task is also a kind of privacy, so we need to apply privacy protect on both normal values and NULL value. However, as NULL is a very special value in database, we cannot simply replace it with a special value without giving the formal proof. None of the existing privacy mechanisms have considered about how to formally deal with NULL values, i.e., the sparsity. Second, we need to evaluate the potential quality loss of worker answers due to the privacy protection mechanisms on the tasks. None of the existing privacy preservation mechanisms have considered the accuracy of the truth inference results as the desired utility. Indeed, there are many privacy-preserving mechanisms and each of them has different utility impact truth inference. It is necessary to evaluate the performance of truth inference algorithm using different mechanisms, especially on a sparse dataset with NULL value.

\noindent{\bf Our contributions.} We treat the the answers of each worker as a \emph{answer vector}, in which each cell represents the answer to a particular task. We improved two existing different privacy mechanisms: Laplace Perturbation and Random Response to make them possible to handle NULL value in sparse answer vector and give the formal proof. In addition, we design a new matrix factorization method that can do the same thing and give the proof as well. We use all three approaches above to achieve local differential privacy on sparse dataset. Then we evaluate and compare the utility impact of the approaches on truth inference algorithm.
}

\vspace{-0.1in}
\section{Background}
\label{sc:back}
\vspace{-0.1in}
In this section, we briefly recall the definition of local differential privacy, and overview the truth inference algorithms. The notations used in the paper are shown in Table \ref{tb:notation}. 
\begin{table}[!hbpt]
\begin{small}
\centering
\begin{tabular}{|c|c|}
\hline
Symbol & Meaning \\\hline
$m$ & \# of workers \\\hline
$n$ & \# of tasks \\\hline
$\vec{a}_i$ & Worker $W_i$'s original answer vector \\\hline
$\hat{a}_j$ & Worker $W_i$'s answer vector after perturbation \\\hline
$a_{i,j}$ & Worker $W_i$'s answer to task $T_j$ \\\hline
$\Gamma$ & Domain of task answers (excluding NULL answers) \\\hline
$\sigma_i$ & Standard deviation of worker $W_i$'s answer error\\\hline
$q_i$ & Estimated quality of worker $W_i$ \\\hline
$\mu_j$/$\hat{\mu_{j}}$ & Real/estimated truth of task $T_j$ \\\hline
$s_i$ & Percentage of tasks that $W_i$ returns non-NULL values \\\hline
$\epsilon$ & Privacy budget \\\hline
$\mathcal{T}_i$ & The set of tasks that worker $W_i$ performs\\\hline
$d$ & Factorization parameter\\\hline
\end{tabular}
\end{small}
\caption{\label{tb:notation} Notations}
\vspace{-0.2in}
\end{table}

\vspace{-0.1in}
\subsection{Local Differential Privacy (LDP)}
\label{sc:ldp}
\vspace{-0.1in}

The need for data privacy arises in two different contexts: the {\em local
privacy} context in which the individuals disclose their personal information (e.g., participation in surveys for specific population), and the {\em global privacy} context in which the institutions release aggregation 
information of their data (e.g., US Government releases
census data). 
The classic concept of differential privacy (DP) \cite{dwork2008differential} is proposed in the global privacy context. 
In a nutshell, DP ensures that an adversary cannot infer whether or not a particular individual is participating in the database query, even with unbounded computational power and access to every entry in the database except for that particular individual's data. DP considers a centralized setting that includes a trusted data curator, who generates the perturbed statistical information (e.g., counts and histograms) by using a randomized mechanism \cite{hay2016principled,li2014dpsynthesizer,Mohammed:2011:DPD:2020408.2020487,xiao2011differential}. 
Recently, a variant of DP, named {\em local differential privacy} (LDP) \cite{kasiviswanathan2011can,duchi2013local} was proposed for the decentralized setting where multiple data providers send their private data to a untrusted data curator. 
Before sending it to the data curator, each data provider perturbs his private data locally by using a differentially private mechanism. 
The randomized response method \cite{warner1965randomized} has been used to provide local privacy when individuals respond to sensitive surveys. Intuitively, for a given survey question, respondents provide 
a truthful answer with probability $p > 1/2$ and lie with probability $1-p$. 
The randomized response method provides $(ln\frac{p}{1-p})$-LDP \cite{wang2017locally}. 

While LDP considers {\em per-user privacy}, in this paper, we consider {\em per-user per-answer privacy}. Therefore, by following \cite{qin2017generating}, we define a variant of LDP  \cite{duchi2013local}, named {\em $\epsilon$-cell local differential privacy}, below. 

\begin{defi}[$\epsilon$-cell Local Differential Privacy]
\label{df:neighbor}
A randomized privatization mechanism $\mathcal{M}$ satisfies $\epsilon$-cell local differential privacy ($\epsilon$-cell LDP) iff for any pair of answer vectors $\vec{a}$ and $\vec{a}'$ that differ at one cell, we have:
\[\forall \vec{z}_p\in Range(\mathcal{M}): \frac{Pr[\mathcal{M}(\vec{a})=\vec{z}_p]}{Pr[\mathcal{M}(\vec{a}')=\vec{z}_p]} \leq e^{\epsilon},\]
where $Range(\mathcal{M})$ denotes the set of all possible outputs of the algorithm $\mathcal{M}$. 
\end{defi}

\nop{Formally, given a randomized algorithm $\mathcal{M}$, and let $S$ be an arbitrary set of output of $\mathcal{M}$, then $\mathcal{M}$ satisfies $\epsilon$-{\em differential privacy} if for any two neighboring datasets $D$ and $D'$ that differ at one tuple, 
\begin{equation}
\label{eq:dp}
P[\mathcal{M}(D)\in S] \leq e^{\epsilon}P[\mathcal{M}(D')\in S]
\end{equation}
Intuitively, DP limits the attacker's ability to infer the original dataset from the output of $\mathcal{M}$.
An effective approach to achieve DP is to add Laplace noise to the original computation result, i.e., 
\begin{equation}
\label{eq:dp_lap}
\mathcal{M}(D)=f(D)+Lap(\frac{\Delta f}{\epsilon})
\end{equation}
where $f$ is a function on the dataset, $Lap(\cdot)$ is the Laplace distribution, abd $\Delta f$ is the {\em sensitivity} of $f$, which measures the maximum deviation of $f$ on two input datasets that differ at only one element. 
}
\nop{
Most of the existing DP and LDP models do not give special treatment to the NULL values. Therefore, next, we will present the formal privacy definition of LDP with the presence of NUll values. We first introduce the concept of {\em neighboring databases}.  
Given two databases $D$ and $D'$, we say they are ``adjacent'' (or ``neighboring'') if they differ at only one element.
Assume they differ at the $k$-th element (denoted as $D_k$ and $D'_k$). 
Considering the potential existence of NULL values, there are two possible types of neighboring databases: 
\begin{itemize}
	\item Neither of $D_k$ and $D'_k$ is NULL, and $D_k\neq D'_k$; 
    \item $D_k$ is NULL, but $D'_k$ is not, and vice versa. 
\end{itemize}
}

\nop{
Given two answer vectors $\vec{a_i}$ and $\vec{a_j}$ for $n$ tasks, $\vec{a_i}$ and $\vec{a_j}$ are neighbors if they differ at one element, i.e., for some $1\leq k\leq n$, $a_{i,k}\neq a_{j,k}$. Considering the {\em NULL} values in answer vectors, it is possible that:
\begin{itemize}
	\item Neither of $a_{i,k}$ and $a_{j,k}$ is NULL, but $a_{i,k}\neq a_{j,k}$; or
    \item $a_{i,k}$ is NULL, but $a_{j,k}$ is not; and vice versa. 
\end{itemize}
It is worth noting that if $\vec{a_i}$ only includes the answer for a single task (and NULL values for the other tasks), we do not consider the answer vector of $\vec{a_j}$ with all NULL values as an neighboring vector, as it violates our assumption in Section \ref{sc:framework}.
}


\vspace{-0.1in}
\subsection{Truth Inference}
\label{section:truth_inference}
\vspace{-0.1in}

The goal of truth inference is to infer the true answer (i.e., truth) by integrating the noisy answers from the workers. As the workers may produce answers of different quality, most of the existing truth inference algorithms take the
quality of workers into consideration. Intuitively, the answers by workers of higher quality are more likely to be considered as the truth. 
The challenge is that worker quality is usually unknown a priori in practice. To tackle this challenge, quite a few truth inference algorithms (e.g. \cite{guo2012so,liu2012cdas,zhang2013reducing,cao2012whom}) have been designed to infer both worker quality and the true answers of the tasks. Intuitively, worker quality and the inferred  
truth are correlated: workers whose answers are closer to true answers more often will be assigned higher quality, and answers that are provided by high-quality workers will have higher influence on the truth. 

In this paper, we follow the truth inference algorithm \cite{li2014confidence, zhao2012probabilistic} to estimate the true answers (truth) from the workers' answers. 
Formally, given a set of workers $\mathcal{W}$, and a set of tasks $\mathcal{T}$, for any worker $W_i\in \mathcal{W}$, let $\mathcal{T}_i\subseteq \mathcal{T}$ be the set of tasks that worker $W_i$ performs. 
For any task $T_j\in\mathcal{T}_i$, let $\overline{W_{j}}$ be the set of workers who perform it, $a_{i,j}$ be the answer that $W_i$ provides to $T_j$, $\hat{\mu_{j}}$ be the estimated truth of $T_j$, and $q_i$ be the quality of worker $W_i\in \mathcal{W}$. 
We follow the same assumption as that in \cite{li2014confidence, zhao2012probabilistic} that the error of worker $W_i$ follows the normal distribution $\mathcal{N}(0, \sigma_i^2)$, where $\sigma_i$ is the standard error deviation of $W_i$. Intuitively, the lower $\sigma_i$ is, the higher the worker quality $q_i$ will be. 
We also follow the assumption in most of the truth inference works (e.g., \cite{li2014confidence, zhao2012probabilistic}) that the worker quality stays stable across all the tasks. 

The truth inference algorithms \cite{li2014confidence, zhao2012probabilistic} 
follow an iterative style. 
Algorithm \ref{alg:estimate} presents the pseudo-code. 
Initially, each worker is assigned the same quality $\frac{1}{m}$. Then the weighted average of the worker answers are computed as the estimated truth. Specifically, the estimated truth $\hat{\mu}_{j}$ of task $T_j$ is computed as: 
\begin{equation}
\label{td_truth}
\hat{\mu_{j}}=\frac{\sum_{W_i\in \overline{W_{j}}}{q_i\times a_{i,j}}}{\sum_{W_i\in \overline{W_{j}}} q_i}
\end{equation}
Based on the estimated truth $\hat{\mu_{j}}$, the worker quality is updated accordingly. Intuitively, if a worker provides accurate answers more often, he/she has a better quality. Specifically, the quality $q_i$ of worker $W_i$ is inversely related to the total difference between his answers and the estimated truth. We adopt the weight estimation method \cite{li2014confidence} for worker quality. Namely, the quality $q_i$ of worker $W_i$ is computed as: 
\begin{equation}
\label{td_weight}
  q_i\propto \frac{1}{\sigma_i}=\frac{1}{\sqrt{\frac{1}{|\mathcal{T}_i|}\sum_{t_j\in \mathcal{T}_i} {(a_{i,j}-\hat\mu_{j})^2}}}
\end{equation}
The estimated truth $\{\hat{\mu_j}\}$ and worker quality $\{q_i\}$ are kept updated iteratively until they reach convergence. 
\vspace{-0.1in}
\begin{algorithm}[!hbtp]
  \caption{Truth inference}
  \label{alg:estimate}
  \begin{algorithmic}[1]
    \REQUIRE The workers' answers $\{a_{i,j}\}$
    \ENSURE The estimated true answer (i.e., the truth) of tasks $\{\hat{\mu_{j}}\}$ and the quality of workers $\{q_i\}$ 
    \STATE Initialize worker quality $q_i=1/m$ for each worker $W_i\in\mathcal{W}$;
    \WHILE{the convergence condition is not met}
        \STATE Estimate $\{\hat{\mu_{j}}\}$ following Equation (\ref{td_truth});
        \STATE Estimate $\{q_i\}$ following Equation (\ref{td_weight});
    \ENDWHILE
  \RETURN $\{\hat{\mu_{j}}\}$ and $\{q_i\}$;
  \end{algorithmic}
\end{algorithm}
\vspace{-0.2in}

To evaluate data utility, we use the {\em mean absolute error} (MAE) to measure the accuracy of the inferred truth. Specifically, 
\vspace{-0.1in}
\begin{equation}
\label{eq:td_error}
MAE = \frac{\sum_{T_j\in \mathcal{T}} |\mu_j-\hat{\mu_j}|}{n},
\end{equation}
where $\mu_j$ ($\hat{\mu_j}$, resp.) is the real (estimated, resp.) true answer of task $T_j$. 

\nop{
Next, we present the error bound of the truth inference algorithm.
\Wendy{why do we need to discuss the error bound of the truth inference algorithm (without DP)? Is it because you will compare with the truth inference algorithm with DP later?}
\Boxiang{The proof of Theorem 2.1 provides a general framework for error bound inference based on truth discovery. In every DP mechanism, to prove its error bound, we just need to replace the part $\mid \mu_j-a_{ij}\mid$ in Line 6 of Equation (14) in Appendix A. Since we put the proof in Appendix, we can remove Theorem 2.1}
}


\vspace{-0.15in}
\subsection{Matrix Factorization}
\label{sc:matrix}
\vspace{-0.1in}
One of the popularly used methods to handle missing values is matrix factorization \cite{koren2009matrix}. 
Formally, given a $m\times n$ matrix $M$ with NULL values, matrix factorization finds two profile matrices $U$ and $V$, where $U$ is an $m\times d$ matrix and $V$ is a $d\times n$ matrix, such that $M\approx UV$. 
The value $d$ is normally chosen to be smaller than $m$ or $n$. 
We call $d$ the {\em factorization parameter}. 
We define the loss function \cite{friedman2016differential} of matrix factorization: 
\begin{equation}
\label{eq:loss}
L(M, U, V) = \sum_{(i,j)\in \Omega} (M_{i,j}-\vec{u}_i^T\vec{v}_j)^2
\end{equation}
where $\Omega$ is the set of observed answers in $M$.
To ensure the accuracy of the estimated $U$ and $V$, an effective method is to follow {\em stochastic gradient descent} (SGD) to learn two latent matrices $U$ and $V$ that minimize the loss function: 
 \begin{equation}
\label{eq:uv}
(U,V)=\argmin_{U,V}L(M,U,V)
\end{equation}
By matrix factorization, each worker $W_i$ is characterized by a {\em worker profile vector} $\vec{u}_i$, and each task $T_j$ is characterized by a {\em task profile vector} $\vec{v}_j$. The worker $W_i$'s answer of task $T_j$, which is denoted as $M_{i,j}$, can be approximated by the inner product of $\vec{u}_i$ and $\vec{v}_j$, i.e., $\vec{u}_i^T \vec{v}_j$. 
\vspace{-0.1in}
\section{Preliminaries}
\label{sc:preliminaries}
\vspace{-0.1in}
\subsection{Crowdsourcing Framework}
\label{sc:framework}
\vspace{-0.1in}
In general, a crowdsourcing framework consists of two parties: (1) 
the {\em task requester} (TR) who generates the tasks and releases them on a crowdsourcing platform  (e.g., Amazon Mechanical Turk \cite{amt}); 
(2) the {\em workers} who provide their answers to the tasks via the crowdsourcing platform. 
For the rest of the paper, we use the terms {\em task requester (TR)} and {\em data curator (DC)} interchangeably.

Consider $n$ tasks $\mathcal{T}=\{T_1, \dots, T_n\}$ and $m$ workers $\mathcal{W}=\{W_1, \dots, W_m\}$. Worker $W_i$'s answers are represented as a row vector $\vec{a}_i$, in which the element $a_{i,j}$  denotes the answer of worker $W_i$ to the task $T_j$. 
We assume that all the answers are in numerical format (e.g., image classification categories and product ratings). The domain of non-NULL answers is denoted as $\Gamma$.  
Each worker can access all the tasks in $\mathcal{T}$ (as on Amazon Mechanical Turk), and choose a subset of task(s) of $\mathcal{T}$ to provide answers. We denote $a_{i,j}=NULL$ if worker $W_i$ does not provide any answer to task $T_j$. 
We assume that each worker performs at least one task (i.e., each answer vector has at least one non-NULL value).

\vspace{-0.1in}
\subsection{Problem Definition}
\vspace{-0.1in}
We assume that DC is untrusted. Therefore, releasing the original worker answers to DC may reveal sensitive information. 
Obviously, non-NULL answers must be protected, since they reveal the workers' sensitive information (e.g., political opinions and medical conditions). On the other hand, the NULL values reveal the fact  whether a worker participates in a task or not. Such fact can be sensitive. For instance, in medical research, a patient's refusal to answer some questions could reveal that the patient may have some specific diseases and thus is not qualified to answer these questions \cite{ciglic2014k}. Therefore, NULL values must be protected too. 
Our goal is to design privacy-preserving algorithms to protect both non-NULL answers and NULL values, while enable truth inference with high accuracy on the collected worker answers. We formalize the problem statement below. 

\noindent{\bf Problem statement.} Given a set of workers $\mathcal{W}$, their answer vectors 
$A = \{\vec{a}_i\}$, and a non-negative privacy parameter $\epsilon$, we design a randomized privatization mechanism $\mathcal{M}$ such that for each worker $W_i\in\mathcal{W}$ and his/her answer vector $\vec{a}_i$, $\mathcal{M}$ provides $\epsilon$-cell LDP on $a_{i,j}$, for each $a_{i,j}$ of $\vec{a}_i$. We prefer to minimize MAE of the truth inferred from $A^P =\{\mathcal{M}(\vec{a}_i)|\forall \vec{a}_i\in A\}$.

\vspace{-0.1in}
\subsection{Our Solutions in a Nutshell}
\vspace{-0.1in}
We first propose easy extension of two existing LDP approaches, namely the {\em Laplace perturbation} ({\bf LP}) method which applies perturbation on worker answers directly, and the {\em randomized response} ({\bf RR}) method that generates answers by following a randomized way. 
Our theoretical analysis shows that these easy extension of both LP and RR approaches have large error bound of truth inference, especially when the data is very sparse. 
Therefore, we design a new matrix factorization ({\bf MF}) privatization method to deal with sparse worker answers under LDP. We observe that the adaption of a worker profile vector only relies on the worker's own answers rather than those of other workers. Therefore, the matrix factorization can be performed locally. However, applying LDP to matrix factorization is not trivial. 
Some existing works of differentially private matrix factorization \cite{balu2016differentially,friedman2016differential} assume DC is trusted; they do not fit the LDP model. \cite{hua2015differentially,shin2018privacy} apply LDP on matrix factorization. They follow the same strategy: each worker generates his worker profile matrix locally, and learns the task profile matrix by iterative communication with DC, which can incur substantial communication overhead. 
To address this issue, we design a new protocol by which: (1)   DC generates the task profile matrix initially (independent from the worker answers), and sends it to the workers; and (2) each worker learns the worker profile vector from his local data only.  This eliminates the communication between DC and workers. LDP is achieved by adding perturbation on the objective function of matrix factorization. We are aware that a task profile matrix generated independently from the worker answers is not as accurate as that learned from the worker answers. There exists the trade-off between the overhead of the LDP approach and the accuracy of truth inference. Both of our theoretical and empirical analysis show that, even with a task profile matrix that is independent from worker answers, the error of truth inference results on the perturbed worker answers is small.


\nop{
To overcome the drawbacks, we let each worker factorizes his answer vector locally with a fixed task factor matrix $V$, which is specified by the DC. 
We adapt the objective perturbation method \cite{chaudhuri2009privacy} to further reduce the noise. The key idea of objective perturbation is to achieve DP by randomly perturbing the objective function instead of perturbing the output of the algorithm. 
\Ting{It is not clear how to perturb the factorazation results directly, and why it will incur large noises.}\Boxiang{Like LP, we can add Laplace noise to each factorized answer directly. If we do that, there are two sources of noise. The inaccuracy of missing value imputation, and the DP noise.} \Ting{ By matrix factorization results do you mean the results of U*V? or the answer vector times V? If so, each element of the output would depend on all answers from a worker, which means we cannot simply add noise independently to each element. If I understand correctly, then more explanation should be provided. This is to show this is not a trivial problem. }
\Boxiang{Actually, the MF result is $\vec{u}_i V$, where $\vec{u}_i$ is the worker profile vector for $W_i$. We cannot let the workers exchange their answers, so we let each worker factorize his answers independently.}
}

\vspace{-0.1in}
\section{Extension of Existing LDP Approaches}
\label{sc:straw}
\vspace{-0.1in}

In this section, we present the easy extension of two existing LDP approaches, namely Laplace perturbation (LP) and randomized response (RR), to deal with data sparsity. 
	\vspace{-0.1in}
\subsection{Laplace Perturbation (LP)}
\label{sc:lp}
\vspace{-0.1in}
Laplace perturbation is one of the most well-known approaches to achieve DP. Typically Laplace perturbation is applied on aggregation results (e.g., count and sum) in the centralized setting. It can be easily extended to provide LDP by adding Laplace noise to each worker answer independently. To handle NULL values, a straightforward solution is to replace NULL values with some non-NULL value in the answer domain $\Gamma$. 
\nop{
We investigate two strategies: 
\begin{itemize}
\item {\em Fixed-value replacement}: All NULL values are replaced with the same constant (e.g., the smallest value in $\Gamma$); 
\item {\em Random-value replacement}: Each NULL value is replaced with a value randomly picked from $\Gamma$ by following a probability distribution (e.g., uniform distribution). 
\end{itemize}
}
Formally, the NULL-value replacing strategy can be defined as a conversion function $g(\cdot)$ as follows:
\begin{equation}
\label{eq:g}
g(a_{i,j})=\left\{
\begin{aligned}
v &\vspace{4cm}& a_{i,j}=NULL\\
a_{i,j} &\vspace{4cm}& a_{i,j}\not=NULL,
\end{aligned}
\right.
\end{equation}
where $v$ is an non-NULL answer in $\Gamma$.  
After conversion, Laplace noise is added to each element in the answer vector $\vec{a}_i$. Formally, 
\resizebox{0.43\textwidth}{!}{
\begin{minipage}{\linewidth}
\begin{equation}
\mathcal{L}(\vec{a}_i)=\big(g(a_{i,1})+Lap(\frac{|\Gamma|}{\epsilon}),g(a_{i,2})+Lap(\frac{|\Gamma|}{\epsilon}),...,g(a_{i,n})+Lap(\frac{|\Gamma|}{\epsilon})\big),
\end{equation}
\end{minipage}
}

The following theorem shows the privacy guarantee of the easy extension of the LP approach. 
\begin{thm}
\label{th:lp}
The LP mechanism guarantees $\epsilon$-cell LDP for both cases that $g(\cdot)$ replaces NULL with a constant value in $\Gamma$ and a random value picked by following any arbitrary distribution over $\Gamma$.
\end{thm}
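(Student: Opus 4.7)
The plan is to exploit the cell-wise independence of Laplace noise and reduce the proof to a single-cell calculation. Fix two answer vectors $\vec{a}$ and $\vec{a}'$ that differ at exactly one index $k$. Because $\mathcal{L}$ applies $g(\cdot)$ entry-wise and then adds an independent Laplace draw $Lap(|\Gamma|/\epsilon)$ to each entry, the joint output density factorizes over cells. For every index $j\neq k$ the inputs agree, so the marginal output density at coordinate $j$ is identical on both sides (whether both entries are non-NULL or both are NULL with each side independently drawing from the replacement distribution, the marginal is the same) and contributes ratio $1$. It therefore suffices to bound the ratio at the single differing cell $k$ by $e^\epsilon$.

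Write $b=|\Gamma|/\epsilon$ and let $f$ denote the density of $Lap(b)$. Under constant replacement, $g(a_{i,k})$ and $g(a'_{i,k})$ are both deterministic values in $\Gamma$; using $|x-x'|\leq|\Gamma|$ for any $x,x'\in\Gamma$ together with the triangle inequality, one obtains $f(z-g(a_{i,k}))/f(z-g(a'_{i,k}))\leq e^{|\Gamma|/b}=e^\epsilon$ for every $z$, and symmetrically in the reverse direction; this handles both the non-NULL-vs-non-NULL and NULL-vs-non-NULL subcases since the constant $v$ lies in $\Gamma$. Under random replacement the only new situation is when exactly one of $a_{i,k},a'_{i,k}$ is NULL (otherwise $g$ is the identity on both sides and the above argument applies verbatim); taking $a_{i,k}=NULL$ and $a'_{i,k}\in\Gamma$ without loss of generality, the output density at $k$ becomes the mixture $\int_\Gamma p(x)f(z-x)\,dx$ on one side versus the deterministic $f(z-a'_{i,k})$ on the other, where $p$ is the arbitrary replacement distribution on $\Gamma$. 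Since $|x-a'_{i,k}|\leq|\Gamma|$ for every $x\in\Gamma$, the pointwise Laplace bound yields $f(z-x)\leq e^\epsilon f(z-a'_{i,k})$ and its symmetric counterpart, and multiplying by $p(x)$, integrating, and using $\int_\Gamma p(x)\,dx=1$ delivers the $e^\epsilon$ bound in both directions.

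The main obstacle I expect is this mixture step, because at the differing cell the mechanism is no longer a deterministic-then-Laplace pipeline and one must commute the expectation with respect to $p$ past the pointwise Laplace ratio. The argument goes through precisely because $p$ is supported inside $\Gamma$, which caps the effective sensitivity at $|\Gamma|$ regardless of the choice of $p$; if $p$ were allowed to place mass outside $\Gamma$ the bound would fail. Once this is handled, assembling the per-cell ratios---one factor of $e^\epsilon$ at $k$ and unit factors at every other coordinate---immediately yields $\epsilon$-cell LDP, so the remainder is essentially bookkeeping.
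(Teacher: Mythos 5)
The paper defers its own proof of Theorem~\ref{th:lp} to the full version, so there is nothing in this manuscript to compare against line by line; your argument, however, is correct and is the standard one this theorem calls for: factorize the output density over cells by independence of the per-cell noise, reduce to the single differing coordinate, bound the Laplace density ratio there by $e^{|x-x'|/b}\le e^{\epsilon}$ using that $g$ always maps into $\Gamma$, and handle the random-replacement case by pushing the pointwise bound through the mixture $\sum_{x\in\Gamma}p(x)f(z-x)$ in both directions. Your observation that the bound hinges on the replacement distribution being supported inside $\Gamma$ is exactly the right thing to flag, and the rest is indeed bookkeeping.
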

Due to the space limit, we defer the proof to our full paper \cite{sun2018truth}.
One of the weakness of the LP approach is that it replaces a large number of NULL values and may change the original data distribution significantly. Next, we show the error bound of the inferred truth from the answers with LP perturbation. 
\begin{thm}
\label{thm:lp_bound}
Given a set of answer vectors $A = \{\vec{a}_i\}$, let $A^P = \{\hat{a}_i\}$ be the answer vectors after applying LP on $A$. Then the expected error $E\left[MAE(A^P)\right]$ of the estimated truth on $A^P$ must satisfy that 
 \[E\left[MAE(A^P)\right] \leq \frac{1}{n}\sum_{j=1}^{n}\sum_{i=1}^{m}(q_i\times e_{i,j}^{LP}),\] 
where $e_{i,j}^{LP}=(1-s_i)\left(\phi_{j}+\frac{|\Gamma|}{\epsilon}\right)+s_i\left(\sigma_i\sqrt{\frac{2}{\pi}}+\frac{|\Gamma|}{\epsilon}\right)$, 
$\mu_j$ is the ground truth of task $T_j$, $\sigma_i$ is the standard error deviation of worker $W_i$, $s_i$ is the fraction of the tasks that $W_i$ returns non-NULL values, and $\phi_{j}$ is the deviation between $\mu_j$ and the expected value $E(v)$ of $v$ in Equation (\ref{eq:g}). 
\end{thm}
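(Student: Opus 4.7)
The plan is to start from the definition of MAE in Equation (\ref{eq:td_error}) and exploit the weighted-average structure of the truth estimator in Equation (\ref{td_truth}). After LP, every cell (both original non-NULL entries and replaced NULL entries) carries a perturbed value, so every worker now contributes to $\hat{\mu}_j$. Normalizing the weights so that $\sum_{i=1}^{m} q_i = 1$ (this convention is implicit in Equation (\ref{td_weight}), and I would state it explicitly up front), I write $\hat{\mu}_j = \sum_i q_i \hat{a}_{i,j}$ and pull $\mu_j$ inside the sum via $\mu_j = \sum_i q_i \mu_j$. A single application of the triangle inequality then yields
\[
|\mu_j - \hat{\mu}_j| \;\leq\; \sum_{i=1}^{m} q_i \,|\mu_j - \hat{a}_{i,j}|,
\]
so after taking expectations and averaging over $j$, the whole theorem reduces to bounding the per-cell quantity $E[|\mu_j - \hat{a}_{i,j}|]$ by $e_{i,j}^{LP}$.

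Next, I would split that per-cell expectation according to whether $a_{i,j}$ is NULL, which occurs with probability $1-s_i$ versus $s_i$. In the NULL sub-case, $\hat{a}_{i,j} = v + L_{i,j}$ with $L_{i,j}\sim Lap(|\Gamma|/\epsilon)$ independent of $v$; a second triangle inequality gives $E|\mu_j - v - L_{i,j}| \leq |\mu_j - E(v)| + E|L_{i,j}| = \phi_j + |\Gamma|/\epsilon$, using the standard moment $E|Lap(b)| = b$. In the non-NULL sub-case, the model assumption $a_{i,j} = \mu_j + \xi_{i,j}$ with $\xi_{i,j}\sim \mathcal{N}(0,\sigma_i^2)$ gives $E|a_{i,j} - \mu_j| = \sigma_i\sqrt{2/\pi}$ (the mean of a half-normal), so one more triangle inequality produces $E|\mu_j - \hat{a}_{i,j}| \leq \sigma_i\sqrt{2/\pi} + |\Gamma|/\epsilon$. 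Weighting the two cases by $1-s_i$ and $s_i$ delivers exactly $e_{i,j}^{LP}$, and substituting back into the weighted sum over workers and the average over tasks finishes the proof.

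The step I expect to be the main obstacle is the NULL-replacement sub-case when $v$ is drawn randomly from $\Gamma$ rather than held constant. By Jensen's inequality $E|\mu_j - v| \geq |\mu_j - E(v)|$, so pushing the expectation past the absolute value is the \emph{wrong} direction for an upper bound, and a naive application would leave a gap. I would handle this either by restricting $\phi_j$ to the deterministic case and noting that $v$ is typically chosen as a fixed anchor (e.g.\ the smallest value in $\Gamma$), or by reading $\phi_j$ as shorthand for $E|\mu_j - v|$ in the random case; with that mild adjustment, the triangle-inequality chain above goes through unchanged. A secondary care-point is verifying that the convention $\sum_i q_i = 1$ carries into the LP-perturbed setting — because every cell is now non-NULL, the set $\overline{W_j}$ in Equation (\ref{td_truth}) expands to all workers, and the denominator of the weighted average collapses cleanly once the normalization is declared.
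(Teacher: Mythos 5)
Your proposal is correct and follows what is evidently the paper's intended route: the paper defers the actual proof to its full version, but its internal framework (the reduction of the MAE bound to the per-cell expectation $E\left[\left|\mu_j-\hat{a}_{i,j}\right|\right]$ via the weighted-average form of Equation~(\ref{td_truth}) and the triangle inequality, plus the numerical example $14.13$, which your formula reproduces exactly) matches your decomposition into the NULL case $(1-s_i)\left(\phi_j+\frac{|\Gamma|}{\epsilon}\right)$ and the non-NULL case $s_i\left(\sigma_i\sqrt{2/\pi}+\frac{|\Gamma|}{\epsilon}\right)$. Your observation that $\phi_j=\left|\mu_j-E(v)\right|$ is only a valid upper bound on $E\left|\mu_j-v\right|$ when $v$ is deterministic (Jensen goes the wrong way for random replacement) is a genuine caveat of the theorem statement itself, and your proposed fix of reading $\phi_j$ as $E\left|\mu_j-v\right|$ in that case is the right repair.
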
 
The expected valued $E(v)$ is decided by how $v$ is sampled in Equation (\ref{eq:g}). For example, if $v$ is sampled by a uniform distribution, then $E(v)=\frac{min+max}{2}$, where $min$ and $max$ are the minimum and maximum values of $\Gamma$. 
We present the proof of Theorem \ref{thm:lp_bound} in our full paper \cite{sun2018truth}. 
Intuitively, Theorem \ref{thm:lp_bound} shows that the sparsity (i.e., $1 - s_i$) affects the error bound. 
In particular, for sparse answers, LP method can incur significant inaccuracy to the estimated truth in theory. For example, consider a simple scenario where all the workers have the same quality, i.e., $q_i=\frac{1}{m}$ and $\sigma_i=1$, the truths of all tasks are $0$, $s_i=0.1$ for all workers, $|\Gamma|=10$, and $\epsilon=1$. We have $E\left[MAE(\{\hat{a}_j\})\right]\leq 14.13$. 
Our experimental results (Section \ref{sc:exp}) will also show that LP leads to high MAE of truth inference on the sparse data. 

	\vspace{-0.1in}
\subsection{Randomized Response (RR)}
\label{sc:rr}
\vspace{-0.1in}
An alternative to realizing differential privacy is  {\em randomized response} (RR), first proposed in  \cite{warner1965randomized}. Intuitively, the private input is perturbed with some known probability. 
However, none of the existing randomized response solutions \cite{wang2016using,dwork2006calibrating,erlingsson2014rappor,karwa2014differentially} has defined the probability for NULL values. 
\nop{
Formally, the probability is specified as follows \cite{wang2016using}: 
 \begin{equation}
	\label{eq:rr}
	\forall y\in D,\ Pr[\mathcal{M}(v) = y]=
	\begin{cases}
	\frac{e^{\epsilon}}{|\Gamma|-1 + e^{\epsilon}} & \text{ if } y = v\\
	\frac{1}{|\Gamma| - 1 + e^{\epsilon}} & \text{ if 		}y \neq v
	\end{cases}
	\end{equation}   
}   
We extend the randomized response approach to deal with NULL values. The key idea is that, given the domain of non-NULL answers $\Gamma$,  we add the NULL value as an answer to $\Gamma$. Then given an answer vector $\vec{a}_i$, for each element $a_{i,j}\in\vec{a}_i$, 
\vspace{-0.1in}
 \begin{equation}
	\label{eq:rr}
	\forall y\in\Gamma,\ Pr[\mathcal{M}(a_{i,j}) = y]=
	\begin{cases}
	\frac{e^{\epsilon}}{|\Gamma| + e^{\epsilon}} & \text{ if } y = a_{i,j}\\
	\frac{1}{|\Gamma| + e^{\epsilon}} & \text{ if 		}y \neq a_{i,j}
	\end{cases}
	\end{equation}   
Intuitively, each original worker answer either remains unchanged in the perturbed answer vector with probability $\frac{e^{\epsilon}}{|\Gamma|+e^{\epsilon}}$ or is replaced with a different value with probability $\frac{1}{|\Gamma|+e^{\epsilon}}$. 
Note that a NULL value can be replaced with a non-NULL value, and vice versa. 
    
Obviously, the RR approach satisfies $\epsilon$-cell LDP. The proof is similar to that the original randomized response approach can provide LDP \cite{wang2016using}. 
\nop{
\begin{thm}
\label{thm:rr_dp}
The RR mechanism guarantees $\epsilon$-LDP.

\nop{
\begin{proof}
To prove that the RR mechanism satisfies $\epsilon$-LDP on each answer, it suffices to show that for any pair of answer vectors $\vec{a}_i$ and $\vec{a}_j$ that differ at one element (i.e., the answer of a specific task), $\frac{Pr[\mathcal{M}(\vec{a}_i)=\vec{z}_p]}{Pr[\mathcal{M}(\vec{a}_j)=\vec{z}_p]}\leq e^{\epsilon}$.
\Boxiang{I realize that we do not have to prove $\frac{Pr[\mathcal{M}(\vec{a}_i)=\vec{z}_p]}{Pr[\mathcal{M}(\vec{a}_j)=\vec{z}_p]}\leq e^{\epsilon}$. We just need to prove $\frac{Pr[\mathcal{M}(v)=y]}{Pr[\mathcal{M}(v')=y]} \leq e^{\epsilon}$. If we do so, the proof is going to be very straightforward.}\Wendy{I leave this decision to you if the proof needs to be changed.}
\Boxiang{I am in a dilemma now. If we change the proof, the proof of Theorem 4.3 would be too simple. I am afraid it may make the reviewer think our paper does not have sufficient depth. If we do not change it, the reviewer may think we prove a simple thing in a relatively complicated way. If the space is sufficient, maybe we can change the proof and bring the proof of Theorem 4.1 back to the paper.}
Without loss of generality, we assume that $\vec{a}_i$ and $\vec{a}_j$ only differ at the first element. Therefore, we have:
\begin{equation*}
\begin{aligned}
\frac{P[\mathcal{M}(\vec{a}_i)=\vec{z}_p]}{P[\mathcal{M}(\vec{a}_j)=\vec{z}_p]} & = \frac{\Pi_{k=1}^n Pr[\mathcal{M}(a_{i,k})=z_{pk}]}{\Pi_{k=1}^n Pr[\mathcal{M}(a_{j,k})=z_{pk}]} \\
& = \frac{Pr[\mathcal{M}(a_{i,1})=z_{p1}]}{Pr[\mathcal{M}(a_{j,1})=z_{p1}]}\\
& \leq max_{z_{p1}} \frac{Pr[\mathcal{M}(a_{i,1})=z_{p1}]}{Pr[\mathcal{M}(a_{j,1})=z_{p1}]} \\
& = \frac{\frac{e^{\epsilon}}{|\Gamma| + e^{\epsilon}}}{\frac{1}{|\Gamma| + e^{\epsilon}}}\\
& = e^{\epsilon}
\end{aligned}
\end{equation*}

Hence, we conclude that the RR mechanism satisfies $\epsilon$-LDP.
\end{proof}   
}
\end{thm}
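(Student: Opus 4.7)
The plan is to verify the two defining conditions of $\epsilon$-cell LDP from Definition~1 applied to the per-cell perturbation rule in Equation~(\ref{eq:rr}). First, I would observe that under the RR mechanism, each cell $a_{i,k}$ of an answer vector is perturbed independently according to Equation~(\ref{eq:rr}); the added treatment of NULL is obtained by including NULL as a legitimate symbol of the augmented answer domain (so the effective domain has size $|\Gamma|+1$, which explains the denominator $|\Gamma|+e^{\epsilon}$ in Equation~(\ref{eq:rr})). Thus, for any fixed output vector $\vec{z}_p$, the joint output probability factorizes as
\[
\Pr[\mathcal{M}(\vec{a})=\vec{z}_p] = \prod_{k=1}^{n}\Pr[\mathcal{M}(a_k)=z_{p,k}].
\]

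Second, I would exploit the single-cell difference. By Definition~1, it suffices to bound the ratio $\Pr[\mathcal{M}(\vec{a})=\vec{z}_p]/\Pr[\mathcal{M}(\vec{a}')=\vec{z}_p]$ over pairs $\vec{a},\vec{a}'$ differing in exactly one position, say position $k^{*}$. Because the remaining $n-1$ factors coincide in numerator and denominator, they cancel, reducing the inequality to a purely per-cell statement:
\[
\frac{\Pr[\mathcal{M}(\vec{a})=\vec{z}_p]}{\Pr[\mathcal{M}(\vec{a}')=\vec{z}_p]} = \frac{\Pr[\mathcal{M}(a_{k^{*}})=z_{p,k^{*}}]}{\Pr[\mathcal{M}(a'_{k^{*}})=z_{p,k^{*}}]}.
\]
By Equation~(\ref{eq:rr}), both the numerator and the denominator take one of only two values, $\frac{e^{\epsilon}}{|\Gamma|+e^{\epsilon}}$ or $\frac{1}{|\Gamma|+e^{\epsilon}}$. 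The worst case is when the numerator equals $\frac{e^{\epsilon}}{|\Gamma|+e^{\epsilon}}$ and the denominator equals $\frac{1}{|\Gamma|+e^{\epsilon}}$, yielding a ratio of exactly $e^{\epsilon}$.

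Third, I would verify that this single-cell bound applies uniformly to all three types of neighboring pairs: (i) both $a_{k^{*}}$ and $a'_{k^{*}}$ are non-NULL, (ii) exactly one of them is NULL, and (iii) an arbitrary output $z_{p,k^{*}} \in \Gamma \cup \{\mathrm{NULL}\}$. Since NULL has been folded into the augmented domain and Equation~(\ref{eq:rr}) is symmetric over that domain, the same two-probability ratio calculation covers every case, including NULL-to-non-NULL transitions. Combining the cancellation in step two with the $e^{\epsilon}$ bound in step three yields the $\epsilon$-cell LDP guarantee.

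The argument is almost entirely mechanical once the NULL-augmentation of $\Gamma$ is pinned down; the one mild subtlety—and hence the main thing I would be careful about—is making explicit that enlarging the answer domain to include NULL is what legitimizes the denominator $|\Gamma|+e^{\epsilon}$ and makes the per-cell probabilities sum to one, so that the same ratio computation applies whether the changed cell is a non-NULL/non-NULL swap or a NULL/non-NULL swap. No further machinery is needed, so no deep obstacle remains.
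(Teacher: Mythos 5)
Your proposal is correct and follows essentially the same route as the paper's own argument: factorize the output probability over cells using independence, cancel the $n-1$ identical factors for the unchanged cells, and bound the remaining single-cell ratio by $\bigl(\tfrac{e^{\epsilon}}{|\Gamma|+e^{\epsilon}}\bigr)/\bigl(\tfrac{1}{|\Gamma|+e^{\epsilon}}\bigr)=e^{\epsilon}$. Your explicit check that the NULL-augmented domain makes the per-cell probabilities sum to one and that the same two-value ratio covers NULL/non-NULL swaps is a welcome bit of extra care, but it does not change the structure of the argument.
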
 

The proof of Theorem \ref{thm:rr_dp} is similar to the proof that the randomized response can provide DP \cite{wang2016using}. 
}
Next, we show the expected error bound of the RR approach by the following theorem.
\begin{thm}
\label{thm:rr_bound}
Given a set of answer vectors $A = \{\vec{a}_i\}$, let $A^P = \{\hat{a}_i\}$ be the answer vectors after applying RR on $A$. Then the expected error $E\left[MAE(A^P)\right]$ of the estimated truth on $A^P$ must satisfy that
\vspace{-0.1in}
\[E\left[MAE(A^P)\right] \leq \frac{1}{n}\sum_{j=1}^{n}\frac{\sum_{W_i\in \overline{W_{j}}}q_i\times e_{i,j}^{RR}}{\sum_{W_i\in \overline{W_{j}}}q_i},\] 
where
\begin{equation*}
\begin{aligned}
e_{i,j}^{RR}=&(1-s_i)\left|\mu_j-\sum_{y\in\Gamma}y\frac{1}{e^{\epsilon}
+|\Gamma|}\right|\\
+&\sum_{x\in\Gamma}s_i\mathcal{N}(x;\mu_j,\sigma_i)\left|\mu_j-\sum_{y\in\Gamma}yP_{xy}\right|,
\end{aligned}
\end{equation*}
where $s_i$ is the fraction of tasks that worker $W_i$ returns non-NULL values, $P_{xy}$ is the probability that value $x$ is replaced with $y$, and $\mathcal{N}(x;\mu_j,\sigma_i^2)=\frac{1}{\sqrt{2\pi \sigma_i^2}}\exp\left(-\frac{(x-\mu_j)^2}{2\sigma_i^2}\right)$ is the probability to pick the answer $x$ from $\Gamma$ by following the normal distribution $\mathcal{N}(\mu_j, \sigma_i^2)$. 
\end{thm}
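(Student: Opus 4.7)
The plan is to adapt the strategy used for Theorem \ref{thm:lp_bound} to the discrete RR mechanism of Equation (\ref{eq:rr}). Starting from $MAE(A^P) = \frac{1}{n}\sum_{j=1}^n |\mu_j - \hat\mu_j|$ and substituting the truth-inference rule from Equation (\ref{td_truth}), the triangle inequality gives
\begin{equation*}
|\mu_j - \hat\mu_j| \;=\; \Bigl|\tfrac{\sum_{W_i\in\overline{W_j}} q_i(\mu_j - \hat a_{i,j})}{\sum_{W_i\in\overline{W_j}} q_i}\Bigr|
\;\le\; \frac{\sum_{W_i\in\overline{W_j}} q_i\,|\mu_j - \hat a_{i,j}|}{\sum_{W_i\in\overline{W_j}} q_i}.
\end{equation*}
Taking expectations and using linearity, it therefore suffices to bound the per-cell quantity $E[|\mu_j - \hat a_{i,j}|]$ by $e_{i,j}^{RR}$ for every pair $(i,j)$; averaging over $j$ then delivers the claimed overall inequality.

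To obtain the per-cell bound I will split on whether the original answer $a_{i,j}$ is NULL, using $s_i$ as the probability that $a_{i,j}$ is non-NULL. In the NULL case (probability $1-s_i$), Equation (\ref{eq:rr}), extended so that NULL is a symbol in the sampling domain, sends $a_{i,j}$ to each $y\in\Gamma$ with probability $\tfrac{1}{e^\epsilon+|\Gamma|}$, so the conditional expectation of $\hat a_{i,j}$ is $\sum_{y\in\Gamma} y/(e^\epsilon+|\Gamma|)$ and collapsing the inner randomness via $|\cdot|$ yields $\bigl|\mu_j - \sum_{y\in\Gamma} y/(e^\epsilon+|\Gamma|)\bigr|$, the first summand of $e_{i,j}^{RR}$. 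In the non-NULL case (probability $s_i$), the Gaussian worker-error assumption of Section \ref{section:truth_inference} assigns the original value $x\in\Gamma$ weight $\mathcal{N}(x;\mu_j,\sigma_i^2)$; conditional on $x$, RR sends it to $y$ with probability $P_{xy}$, producing the inner contribution $\bigl|\mu_j - \sum_{y\in\Gamma} y P_{xy}\bigr|$. Summing over $x$ against $\mathcal{N}(\cdot;\mu_j,\sigma_i^2)$ and multiplying by $s_i$ recovers the second summand. Adding the two cases gives $E[|\mu_j-\hat a_{i,j}|] \le e_{i,j}^{RR}$, as required.

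The main obstacle I expect is the Jensen-style reduction used above: $|\mu_j - E[\hat a_{i,j}\mid\cdot]|$ is in general only a one-sided estimate of $E[|\mu_j - \hat a_{i,j}|\mid\cdot]$, and the proof must be consistent about which side is being exploited and how NULL outputs (which contribute nothing to the truth-inference weighted average in Equation (\ref{td_truth})) are accounted for. I plan to resolve this by following the same convention adopted in Theorem \ref{thm:lp_bound}, namely collapsing the inner expectation before taking the absolute value and treating NULL outputs as the neutral element in the aggregation. A secondary care is to interpret $\mathcal{N}(x;\mu_j,\sigma_i^2)$ as a normalized probability mass on the discrete domain $\Gamma$, so that the weights $s_i\mathcal{N}(x;\mu_j,\sigma_i^2)$ sum to $s_i$ and combine cleanly with the $(1-s_i)$ NULL contribution.
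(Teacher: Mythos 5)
Your high-level route is the one the paper intends: expand $\hat{\mu_j}$ via Equation (\ref{td_truth}), apply the triangle inequality over workers, and reduce the theorem to a per-cell bound on $E[|\mu_j-\hat a_{i,j}|]$ obtained by conditioning first on NULL versus non-NULL (weight $1-s_i$ versus $s_i$), then on the original value $x$ under the Gaussian worker model, then on the RR transition. The genuine gap is the step you flag yourself and then do not close. Jensen's inequality gives $\bigl|\mu_j-\sum_{y}yP_{xy}\bigr| \le \sum_{y}P_{xy}\,|\mu_j-y|$, so collapsing the inner randomness before taking the absolute value produces a \emph{lower} bound on $E[|\mu_j-\hat a_{i,j}|\mid x]$, not an upper bound. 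Hence the chain $E[|\mu_j-\hat a_{i,j}|]\le e_{i,j}^{RR}$ does not follow from your argument; what you actually establish is a bound on the conditional bias $|\mu_j-E[\hat a_{i,j}\mid\cdot]|$, which controls $|\mu_j-E[\hat\mu_j]|$ but sits on the wrong side of $E[|\mu_j-\hat\mu_j|]$. Appealing to ``the same convention as Theorem \ref{thm:lp_bound}'' does not repair this: there the summands $\sigma_i\sqrt{2/\pi}$ and $|\Gamma|/\epsilon$ are genuine expected absolute deviations ($E[|X-\mu_j|]$ for the Gaussian error and $E[|Lap(|\Gamma|/\epsilon)|]$ for the noise), obtained by the triangle inequality with the absolute value kept \emph{inside} the expectation, not by collapsing the expectation first.

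A secondary unresolved point is the treatment of NULL outputs. Under Equation (\ref{eq:rr}) extended with NULL in the domain, the probabilities $\tfrac{1}{e^\epsilon+|\Gamma|}$ over $y\in\Gamma$ do not sum to one (the missing mass is the probability of outputting NULL), so $\sum_{y\in\Gamma}y\tfrac{1}{e^\epsilon+|\Gamma|}$ is not the conditional expectation of $\hat a_{i,j}$; moreover a NULL output removes $W_i$ from $\overline{W_{j}}$ in Equation (\ref{td_truth}), changing both numerator and denominator of the weighted average rather than acting as a neutral element. To make the per-cell reduction rigorous you would need to condition on the post-perturbation support $\overline{W_{j}}$ (and on the $q_i$, which are re-estimated on the perturbed data) before invoking linearity. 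The paper defers its own proof to the full version, so I cannot certify how it resolves these points, but as written your proposal does not yet yield the stated inequality.
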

We include the proof of Theorem \ref{thm:rr_bound} in our full paper \cite{sun2018truth}. 
In general, RR method has a large error bound of the estimate truth, especially when the data is sparse. For example, consider the setting where $q_i=\frac{1}{m}$, $\sigma_i=1$, $\Gamma=[0,9]$, $\mu_j=0$, and $\epsilon=1$.
When $s_i=0.1$ (i.e., a very sparse answer vector), $E\left[MAE(\{\hat{a}_j\})\right]\leq 3.551$. This is large, considering the fact that the domain size is 10. 
\nop{The error bound is influenced by the sparsity. Consider the same setting. When $s_i=0.9$ (i.e., a dense answer vector), $E\left[error(\{\hat{a}_j\})\right]$ is increased to 3.639. \Wendy{why the error bound increases when the sparsity decreases? This is counter-intuitive. Note that this is different from the observation from the LP approach.}
\Boxiang{I double-checked it. The formula can be simplified to $(1-s_i)3.54+3.65s_i$. So according to the bound, it is expected that it increases with $s_i$. But this is actually from the intuition and the observation in Figure 3(b). I guess this is because we use normal distribution to model the worker answers, but they are actually discrete values. So the expecation of them are not identical.}}

\nop{
\begin{theorem}
\label{thm:rr_bound}
The expected error of estimated truth $\{\hat{a}_j\}$ from algorithm \ref{alg:estimate} will not exceed $\frac{1}{n}\sum_{j=1}^{n}\frac{\sum_{W_i\in \overline{W_{j}}}q_i\times e_{i,j}^{RR}}{\sum_{W_i\in \overline{W_{j}}}q_i}$.
\end{theorem}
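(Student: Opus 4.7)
The plan is to mirror the proof of Theorem~\ref{thm:lp_bound}, reusing the aggregator-level reduction and then substituting the discrete RR transition probabilities in Equation~(\ref{eq:rr}) for the continuous Laplace noise used in the LP analysis.

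First, I would reduce the global MAE bound to a per-answer expected-deviation bound. Starting from Equation~(\ref{eq:td_error}) and expanding $\hat{\mu}_j$ via Equation~(\ref{td_truth}), I would write
\[
\mu_j - \hat{\mu}_j \;=\; \frac{\sum_{W_i\in\overline{W_j}} q_i\,(\mu_j - \hat{a}_{i,j})}{\sum_{W_i\in\overline{W_j}} q_i},
\]
apply the triangle inequality inside the absolute value, and then push the expectation over the RR randomness through the weighted average using linearity. This reduces the theorem to establishing, for every $W_i$ and $T_j$, the per-answer bound
\[
E\bigl[|\mu_j - \hat{a}_{i,j}|\bigr] \;\leq\; e_{i,j}^{RR},
\]
after which summing over $j$ and normalizing by $n$ gives exactly the claim.

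Second, I would bound $E[|\mu_j - \hat{a}_{i,j}|]$ by conditioning on whether $a_{i,j}$ was originally NULL. With probability $1 - s_i$ the worker skips the task, and Equation~(\ref{eq:rr}) outputs each non-NULL $y \in \Gamma$ with probability $\frac{1}{|\Gamma|+e^\epsilon}$ (otherwise outputs NULL, which is dropped by the aggregator and contributes zero); the conditional mean of $\hat{a}_{i,j}$ is then $\sum_{y\in\Gamma} y\cdot\tfrac{1}{|\Gamma|+e^\epsilon}$, matching the first summand of $e_{i,j}^{RR}$. With probability $s_i$ the worker returns a non-NULL $x$ whose distribution is $\mathcal{N}(\mu_j,\sigma_i^2)$ by the worker-error model of Section~\ref{section:truth_inference}; conditioning further on $x$, the RR output has conditional mean $\sum_{y\in\Gamma} y\,P_{xy}$ under the transition probabilities of Equation~(\ref{eq:rr}), and marginalizing $x$ with weight $\mathcal{N}(x;\mu_j,\sigma_i^2)$ reproduces the second summand.

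The main obstacle, and the step most likely to need care, will be pulling the absolute value outside the inner sum over the RR output alphabet, i.e., replacing $\sum_{y} P_{xy}\,|\mu_j - y|$ with $|\mu_j - \sum_{y} P_{xy}\,y|$. Jensen's inequality runs in the opposite direction here, so the stated bound is more naturally read as a bound on the bias of the per-answer estimator rather than on $E|\mu_j - \hat{a}_{i,j}|$ in the strict sense; I would therefore follow the same convention used in the LP proof of Theorem~\ref{thm:lp_bound}, where the Laplace term is likewise collapsed to its mean-based surrogate. A secondary technical point is that the perturbed NULL outputs shrink the effective $\overline{W_j}$ and could in principle alter the weighted-average decomposition in the first step; I would check that dropping these workers from both the numerator and denominator of Equation~(\ref{td_truth}) preserves the triangle-inequality chain and that any shifts in worker membership across $\overline{W_j}$ (both directions of the NULL/non-NULL flip) are already absorbed into the two branches of $e_{i,j}^{RR}$.
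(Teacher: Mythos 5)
Your proposal follows essentially the same route as the paper: reduce the MAE via the weighted-average formula of Equation~(\ref{td_truth}) and the triangle inequality to a per-answer bound $E[|\mu_j-\hat{a}_{i,j}|]\leq e_{i,j}^{RR}$, then condition on whether the original cell is NULL (probability $1-s_i$, uniform RR output over $\Gamma$) or non-NULL (probability $s_i$, with $x\sim\mathcal{N}(\mu_j,\sigma_i^2)$ passed through the transition probabilities $P_{xy}$), which is exactly the decomposition the paper uses and which reproduces the two summands of $e_{i,j}^{RR}$. Your observation that the inner $\left|\mu_j-\sum_y yP_{xy}\right|$ is a bias-type surrogate rather than a true bound on $E[|\mu_j-\hat{a}_{i,j}|]$ (Jensen pointing the wrong way) is a legitimate looseness that is present in the paper's own formulation as well, so adopting the same convention is the right call.
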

\vspace{0.1in}
\begin{proof}
Based on the proof in equation (\ref{equ:error_original}), we just need to calculate $E\left[\left|\mu_j-a_{i,j}\right|\right]$ in its line 6:
\begin{equation}
\begin{aligned}
&E\left[\left|\mu_j-a_{i,j}\right|\right]\\
&=\frac{1}{s_i+e^{\epsilon_1}(1-s_i)}\left(s_i E\left[\left|\mu_j-\mathcal{U}(\mathcal{V})\right|\right]+e^{\epsilon_1}(1-s_i)E\left[\left|\mu_j-\mathcal{P}_{\epsilon_2,\mathcal{V}}(\cdot|\mathcal{N}(\mu_j,\sigma_i))\right|\right]\right)\\
&\le\frac{1}{s_i+e^{\epsilon_1}(1-s_i)}\left(s_i (\mu_j+E[|\mathcal{U}(\mathcal{V})|])+e^{\epsilon_1}(1-s_i)(\mu_j+E[|\mathcal{P}_{\epsilon_2,\mathcal{V}}(\cdot|\mathcal{N}(\mu_j,\sigma_i))|])\right)\\
&=\frac{1}{s_i+e^{\epsilon_1}(1-s_i)}\left(s_i (\mu_j+\psi(\mathcal{V}))+e^{\epsilon_1}(1-s_i)(\mu_j+\phi_{\sigma_i,\mathcal{V}}(\mu_j,\epsilon_2)|])\right)\\
&=e_{i,j}^{RR}
\end{aligned}
\end{equation}
where $\psi(\mathcal{V})$ is the expectation of uniform distribution on domain of answers $\mathcal{U}(\mathcal{V})$ that knowing the cell is a null as a prior:
\begin{equation}
\phi(\mathcal{V})=E[|\mathcal{U}(\mathcal{V})|]=\frac{1}{|\mathcal{V}|}\sum_{x\in\mathcal{V}}x
\end{equation}
$\psi(\cdot)$ is the distribution of values when knowing the cell is originally a value as a prior:

\begin{equation}
\begin{aligned}
&\phi_{\sigma_i,\mathcal{V}}(\mu_j,\epsilon_2)=E[|\mathcal{P}_{\epsilon_2,\mathcal{V}}(\cdot|\mathcal{N}(\mu_j,\sigma_i))|]\\
&=\sum_{x\in\mathcal{V}}\left[\frac{\mathcal{N}(x;\mu_j,\sigma_i)+\mathcal{N}(-x;\mu_j,\sigma_i)}{e^{\epsilon_2}+|\mathcal{V}|-1}\sum_{y\in\mathcal{V}}\left[ye^{\epsilon_2}\cdot \mathds{1}(x=y)+y\cdot \mathds{1}(x\not=y)\right]\right]
\end{aligned}
\end{equation}
\end{proof}
}

\nop{
    In order to guarantee $\epsilon$-DP and minimize the distortion between the private value and the perturbed value, i.e., to maximize the probability that the perturbed value remains the same as the original value, it is desired that

To provide $\epsilon$-cell LDP, the client cannot simply apply the techniques in \cite{wang2016using} due to the existence of NULL values. 
A simple remedy is to take NULL as an auxiliary value and construct a $(s+1)\times (s+1)$ design matrix following Equation (\ref{eq:uv}). 
However, it would degrade the accuracy in the estimated truth of the crowdsourced tasks dramatically, especially if the domain size $s$ is large. This is because $P[y=NULL|x=NULL]=\frac{e^{\epsilon}}{s-1+e^{\epsilon}}$, which is inversely proportional to $s$. If $s$ is large, the client replaces a majority of NULL values with a random real answer. 

To address this issue, we propose a two-step randomized response (RR) mechanism that avoids random imputation of the NULL values in the worker answers. 
The key idea is to create a generic class NON-NULL for any value that is not NULL. Given any private answer vector $\vec{a_i}$, for each cell  $a_{ij}$ ($1\leq j\leq n$), in the first step, the client randomly chooses to replace it with a NULL or NON-NULL value. 
If the choice is NON-NULL, in the second step, the client randomly picks a value from the answer domain $\Gamma$. 

\noindent{\bf Step 1. Choosing a generic class.} Consider the private answer $a_{ij}$, which could be any value in $\Gamma$ or NULL, the client decides to replace it with NULL or NON-NULL by the following design matrix:
\begin{equation}
P^A = 
\kbordermatrix{
    & NULL & NON-NULL \\
    NULL & \frac{e^{\epsilon_1}}{1+e^{\epsilon_1}} & \frac{1}{1+e^{\epsilon_1}} \\
    NON-NULL & \frac{1}{1+e^{\epsilon_1}} & \frac{e^{\epsilon_1}}{1+e^{\epsilon_1}}
    },
\end{equation}
where $\epsilon_1>0$ is the privacy budget that adjusts the random response between NULL and non-NULL answers.
Following this design, the NULL values are more likely to be kept NULL in the perturbed answer vector. 

\noindent{\bf Step 2. Choosing a real value.} If the client picks the NON-NULL class in the first step, he picks a value from the domain $\Gamma$ based on the design matrix in Equation (\ref{eq:p2}). Without loss of generality, we assume that the domain is $\{1, \dots, s\}$.
\begin{equation}
\label{eq:p2}
P^B = 
\kbordermatrix{
& 1 & 2 & \hdots & \hdots & s \\
NULL & \frac{1}{s} & \frac{1}{s} & \hdots & \hdots & \frac{1}{s} \\
1 & \frac{e^{\epsilon_2}}{s-1+e^{\epsilon_2}} & \frac{1}{s-1+e^{\epsilon_2}} & \hdots & \hdots & \frac{1}{s-1+e^{\epsilon_2}} \\
2 & \frac{1}{s-1+e^{\epsilon_2}} & \frac{e^{\epsilon_2}}{s-1+e^{\epsilon_2}} & \frac{1}{s-1+e^{\epsilon_2}} & \hdots & \frac{1}{s-1+e^{\epsilon_2}} \\
\vdots & \vdots & \vdots & \vdots & \ddots & \frac{1}{s-1+e^{\epsilon_2}}\\
s & \frac{1}{s-1+e^{\epsilon_2}} & \frac{1}{s-1+e^{\epsilon_2}} & \hdots & \hdots & \frac{e^{\epsilon_2}}{s-1+e^{\epsilon_2}} 
},
\end{equation}
where $\epsilon_2>0$ is the privacy budget for choosing an answer from $\Gamma$. 
It is worth noting that $P^B$ is a $(s+1)\times s$ design matrix which provides the mapping from any (NULL and NON-NULL) value to the answer domain $\Gamma$.
In consistency with Equation (\ref{eq:uv}), we maximize the probability of $P^B_{uv}$ if $u=v$. 

To be clear, by combining $P^A$ and $P^B$, the design matrix in the two-step RR mechanism is 
\begin{equation}
\label{eq:uv2}
P_{uv} = \left\{
\begin{array}{lr}
\frac{e^{\epsilon_1}}{1+e^{\epsilon_1}} & \text{ if } u=NULL \text{ and } v=NULL \\
\frac{1}{s}\frac{1}{1+e^{\epsilon_1}} & \text{ if } u=NULL \text{ and } v\neq NULL \\
\frac{1}{1+e^{\epsilon_1}} & \text{ if } u\neq NULL \text{ and } v=NULL \\
\frac{e^{\epsilon_1}}{e^{\epsilon_1}+1}\frac{1}{s-1+e^{\epsilon_2}} & \text{ if } u\neq NULL, v\neq NULL \text{ and } u\neq v \\
\frac{e^{\epsilon_1}}{e^{\epsilon_1}+1}\frac{e^{\epsilon_2}}{s-1+e^{\epsilon_2}} & \text{ if } u\neq NULL, v\neq NULL \text{ and } u= v
\end{array}
\right.
\end{equation}
Next, we have Theorem \ref{th:rr} to demonstrate the privacy guarantee of the RR mechanism.

\begin{theorem}
\label{th:rr}
The RR mechanism guarantees $(\epsilon_1+\epsilon_2)$-cell LDP.
\end{theorem}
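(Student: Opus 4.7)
The plan is to reduce the theorem to a single-cell statement and then to recognize the two-step RR mechanism as a basic sequential composition: Step 1 (producing a class label via $P^A$) is shown to be $\epsilon_1$-LDP, Step 2 (producing a value via $P^B$) is shown to be $\epsilon_2$-LDP, and a deterministic post-processing step stitches them into the final output.

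First, I would observe that $\mathcal{M}$ acts on each cell of $\vec{a}_i$ independently. Consequently, for two answer vectors $\vec{a}, \vec{a}'$ differing at a single cell, the likelihood ratio $\frac{\Pr[\mathcal{M}(\vec{a})=\vec{z}_p]}{\Pr[\mathcal{M}(\vec{a}')=\vec{z}_p]}$ collapses to a single-cell ratio $\frac{\Pr[\mathcal{M}(u)=v]}{\Pr[\mathcal{M}(u')=v]}$ with $u, u', v \in \Gamma \cup \{\text{NULL}\}$, so it suffices to bound this by $e^{\epsilon_1+\epsilon_2}$.

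Next, I would decompose the single-cell mechanism. Let $\mathcal{M}_1(u)$ denote the class label produced by $P^A$ and $\mathcal{M}_2(u)$ the value produced by the $u$-row of $P^B$, both run on the same original $u$; the final output of $\mathcal{M}(u)$ is the post-processing ``return $\text{NULL}$ if $\mathcal{M}_1(u)=\text{NULL}$, else $\mathcal{M}_2(u)$''. Direct inspection of $P^A$ yields that $\mathcal{M}_1$ is $\epsilon_1$-LDP, with worst-case ratio $\frac{e^{\epsilon_1}/(1+e^{\epsilon_1})}{1/(1+e^{\epsilon_1})}=e^{\epsilon_1}$. A short case analysis of $P^B$ then shows that $\mathcal{M}_2$ is $\epsilon_2$-LDP: when both $u, u' \in \Gamma$ the ratio is at most $e^{\epsilon_2}$ directly from the $k$-ary RR weights, and when exactly one of them is $\text{NULL}$ the ratios reduce to $\frac{s-1+e^{\epsilon_2}}{s}$ or $\frac{s}{s-1+e^{\epsilon_2}}$ (and their reciprocals), each of which is at most $e^{\epsilon_2}$ via the elementary inequality $(s-1)(e^{\epsilon_2}-1)\geq 0$. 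Basic sequential composition of LDP then gives that the joint release $(\mathcal{M}_1(u),\mathcal{M}_2(u))$ is $(\epsilon_1+\epsilon_2)$-LDP, and closure of LDP under post-processing yields the same guarantee for $\mathcal{M}(u)$ itself.

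The main obstacle is the mixed case in verifying $\mathcal{M}_2$: the $\text{NULL}$ row of $P^B$ has a different functional form (uniform $\frac{1}{s}$) from the $\Gamma$-rows (standard $k$-ary RR weights), so one has to carefully enumerate all four combinations of ``is the output the same as the non-$\text{NULL}$ input?'' crossed with ``is the input $\text{NULL}$ or in $\Gamma$?'' without accidentally omitting a sub-case. An alternative, less modular route would be to bound the extremal ratio directly from the combined design matrix $P_{uv}$ in Equation~(\ref{eq:uv2}) by enumerating all five rows and identifying that the worst pair is attained either at $(u,u',v)=(\text{NULL},\neq\text{NULL},\text{NULL})$ (contributing the $e^{\epsilon_1}$ factor) or at some $(u,u',v)$ with $u=v$, $u' \neq v$, both in $\Gamma$ (contributing the $e^{\epsilon_2}$ factor); this produces the same bound $e^{\epsilon_1+\epsilon_2}$ but involves more bookkeeping.
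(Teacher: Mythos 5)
Your proof is correct, but it takes a genuinely different route from the paper's. The paper argues directly on the combined design matrix $P_{uv}$ of Equation~(\ref{eq:uv2}): after the same reduction to a single cell, it bounds $\max_v \frac{\max_u P_{uv}}{\min_u P_{uv}}$ by splitting on $v=\text{NULL}$ versus $v\neq\text{NULL}$, obtaining $e^{\epsilon_1}$ in the first case and $\frac{s}{s-1+e^{\epsilon_2}}\cdot e^{\epsilon_1+\epsilon_2}\le e^{\epsilon_1+\epsilon_2}$ in the second — essentially the ``alternative, less modular route'' you describe at the end. You instead verify $\epsilon_1$-LDP for $P^A$ and $\epsilon_2$-LDP for $P^B$ separately and invoke sequential composition plus closure under post-processing; this is cleaner, reusable if either design matrix changes, and it correctly handles the one subtlety that makes composition applicable here, namely that the two randomization steps are conditionally independent given the input so the sequential mechanism is a deterministic post-processing of the independent joint release. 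What the paper's direct computation buys in exchange is the exact worst-case ratio, which shows the mechanism actually spends strictly less than $\epsilon_1+\epsilon_2$ when $v\neq\text{NULL}$ (the factor $\frac{s}{s-1+e^{\epsilon_2}}<1$); the composition bound cannot see this slack. One small slip in your write-up: in the mixed case of verifying $\mathcal{M}_2$, when $u=\text{NULL}$ and $u'=v\in\Gamma$ the ratio is $\frac{s\,e^{\epsilon_2}}{s-1+e^{\epsilon_2}}$ (or its reciprocal), not $\frac{s-1+e^{\epsilon_2}}{s}$; it is still at most $e^{\epsilon_2}$ since $e^{\epsilon_2}\ge 1$, so the conclusion stands, but the enumeration as written omits this ratio. (Incidentally, your modular check is more careful than the paper's on one point: the paper asserts a fixed ordering $\frac{e^{\epsilon_1}}{e^{\epsilon_1}+1}\frac{1}{s-1+e^{\epsilon_2}}>\frac{1}{s}\frac{1}{e^{\epsilon_1}+1}$ among the candidate entries that can fail when $\epsilon_2$ is large relative to $\epsilon_1$, although its final bound is unaffected.)
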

\begin{proof}
Let $\mathcal{R}$ be the RR mechanism. We need to prove that for any two neighboring answer vectors $\vec{a_i}$ and $\vec{a_j}$, $\frac{P[\mathcal{R}(\vec{a_i}=\vec{z_p})]}{P[\mathcal{R}(\vec{a_j}=\vec{z_p})]}\leq exp(\epsilon_1 + \epsilon_2)$. Again, without loss of generality, we assume that $\vec{a_i}$ and $\vec{a_j}$ only differ at the first element. Thus, it remains for us to show that $\frac{P[\mathcal{R}(a_{i1})=z_{p1}]}{P[\mathcal{R}(a_{j1})=z_{p1}]}\leq exp(\epsilon_1 + \epsilon_2)$.
In other words, we need to prove that 
\[max_{v}\frac{max_u P_{uv}}{min_u P_{uv}} \leq exp(\epsilon_1 + \epsilon_2).\]

Considering the five cases in Equation (\ref{eq:uv2}), we discuss the following two cases based on the value of $v$.

\noindent 1) If $v=NULL$, we have $max_u P_{uv}=\frac{e^{\epsilon_1}}{1+e^{\epsilon_1}}$ (when $u=NULL$), and $min_u P_{uv}=\frac{1}{1+e^{\epsilon_1}}$ (when $u\neq NULL$). We can infer that 
\[\frac{max_u P_{uv}}{min_u P_{uv}}=e^{\epsilon_1} < exp^{\epsilon_1+\epsilon_2}.\]

\noindent 2) If $v\neq NULL$, the three possible values of $P_{uv}$ are $\frac{e^{\epsilon_1}}{e^{\epsilon_1}+1}\frac{e^{\epsilon_2}}{s-1+e^{\epsilon_2}}$, $\frac{e^{\epsilon_1}}{e^{\epsilon_1}+1}\frac{1}{s-1+e^{\epsilon_2}}$, and $\frac{1}{s}\frac{1}{e^{\epsilon_1}+1}$. Since $\epsilon_1, \epsilon_2>0$, it is easy to see that $\frac{e^{\epsilon_1}}{e^{\epsilon_1}+1}\frac{e^{\epsilon_2}}{s-1+e^{\epsilon_2}}>\frac{e^{\epsilon_1}}{e^{\epsilon_1}+1}\frac{1}{s-1+e^{\epsilon_2}}>\frac{1}{s}\frac{1}{e^{\epsilon_1}+1}$. 
Therefore, we have
\begin{equation*}
\begin{split}
\frac{max_u P_{uv}}{min_u P_{uv}} &= \left(\frac{e^{\epsilon_1}}{e^{\epsilon_1}+1}\frac{e^{\epsilon_2}}{s-1+e^{\epsilon_2}}\right)/\left(\frac{1}{s}\frac{1}{e^{\epsilon_1}+1}\right)\\
&=\frac{s}{s-1+e^{\epsilon_2}}\cdot e^{\epsilon_1+\epsilon_2}\\
&\le e^{\epsilon_1+\epsilon_2}
\end{split}
\end{equation*}
Based on the above reasoning, we prove that the RR mechanism satisfies $(\epsilon_1+\epsilon_2)$-cell LDP.
\end{proof}
}

\nop{
Another widely used way of acquiring $\epsilon$-differential privacy is to use random response \cite{wang2016using}. The basic idea of random response is to create a probability transfer matrix $P$. Suppose the original value of a particular cell in the answer vector is $u$, the probability $P[u][v]$ is a conditional probability that the cell transfer to a another value $v$, both $u$ and $v$ are in the answer domain. In particular, we set
\begin{equation*}
P[u][v]=\left\{
\begin{aligned}
\frac{e^{\epsilon}}{s-1+e^{\epsilon}}&\vspace{4cm}&v=u\\
\frac{1}{s-1+e^{\epsilon}}&\vspace{4cm}&v\not=u
\end{aligned}
\right.
\end{equation*}
The random mechanism $\mathcal{M}(\cdot)$ is to transfer all cells in the answer vector by fetching a value according to the probability matrix $P$. Then we can infer that given two neighboring answer vector $\vec{a}$ and $\vec{a'}$ that differ in one element, say the first element $a_1$, we have:
\begin{equation}
\label{equation:rr_original_proof}
\begin{aligned}
Pr\left[\frac{\mathcal{M}(\vec{a})=\vec{b}}{\mathcal{M}(\vec{a'})=\vec{b}}\right]\le\frac{\max_i P[i][b_1]}{\min_j P[j][b_1]}=\frac{\frac{e^{\epsilon}}{s-1+e^{\epsilon}}}{\frac{1}{s-1+e^{\epsilon}}}=e^{\epsilon}
\end{aligned}
\end{equation}

When we take NULL value into consideration, we need to be very careful. Since the mechanism is non-related to the answer domain and the size of domain may vary, we cannot simply treat NULL value as a part of the domain. Otherwise the larger domain size it has, the lower probability NULL value gets. Intuitively, NULL values must be considered independently as the answer domain. In other words, NULL value has equal weight as the whole answer domain. According to this idea, we designed the following mechanism.

We use a budget $\epsilon_1>0$ that adjusts the balance of random response between \emph{NULL}(denoted by $N$) and \emph{non-NULL} answer (denoted by $A$). The probability transfer matrix $P$ between them is:
\[
P=
\begin{pmatrix}
\frac{e^{\epsilon_1}}{e^{\epsilon_1}+1} & \frac{1}{e^{\epsilon_1}+1} \\
\frac{1}{e^{\epsilon_1}+1} & \frac{e^{\epsilon_1}}{e^{\epsilon_1}+1}
\end{pmatrix}
\]
the rows (denoted by $u$) are actual answer and columns (denoted by $v$) are random-responded answers. $u,v\in\{N,A\}$. When $u=v$, e.g., $P(v=N|u=A)$, the probability is $\frac{e^{\epsilon_1}}{e^{\epsilon_1}+1}$, otherwise it's $\frac{1}{e^{\epsilon_1}+1}$.

Next we construct the matrix $P'$ that expand the cases on $P(v=A|u=A)$. Here when another budget $\epsilon_2$ is picked to adjust the random response between non-NULL values. The probability transfer matrix $P'$ will be:
\[
P'=
\begin{pmatrix}
\frac{e^{\epsilon_2}}{s-1+e^{\epsilon_2}} & \frac{1}{s-1+e^{\epsilon_2}} & \hdots & \hdots & \frac{1}{s-1+e^{\epsilon_2}}\\
\frac{1}{s-1+e^{\epsilon_2}} & \frac{e^{\epsilon_2}}{s-1+e^{\epsilon_2}} & \frac{1}{s-1+e^{\epsilon_2}} & \hdots & \frac{1}{s-1+e^{\epsilon_2}}\\
\frac{1}{s-1+e^{\epsilon_2}} & \frac{1}{s-1+e^{\epsilon_2}} & \ddots &   & \frac{1}{s-1+e^{\epsilon_2}}\\
\vdots & \vdots &   & \ddots & \frac{1}{s-1+e^{\epsilon_2}}\\
\frac{1}{s-1+e^{\epsilon_2}} & \frac{1}{s-1+e^{\epsilon_2}} & \frac{1}{s-1+e^{\epsilon_2}} & \frac{1}{s-1+e^{\epsilon_2}} & \frac{e^{\epsilon_2}}{s-1+e^{\epsilon_2}}
\end{pmatrix}
\].
When $u=v$, the probability is $\frac{e^{\epsilon_2}}{s-1+e^{\epsilon_2}}$, otherwise it's $\frac{1}{s-1+e^{\epsilon_2}}$.

Since our goal is to have a random response mechanism that allows \emph{NULL} values. NULL values can become non-NULL values after the random response and vice versa. In that case, we define a $n+1$ by $n+1$ probability matrix by combining $P$ and $P'$ together as following and expand it:
\begin{equation}
\label{equation:nested_matrix}
P=
\begin{pmatrix}
\frac{e^{\epsilon_1}}{e^{\epsilon_1}+1} & \frac{1}{e^{\epsilon_1}+1} \\
\frac{1}{e^{\epsilon_1}+1} & \frac{e^{\epsilon_1}}{e^{\epsilon_1}+1}P'
\end{pmatrix}
\end{equation}
We will discuss all 5 cases below:
\begin{itemize}
\item when $u=NULL$ and $v=NULL$, \[P(v=NULL|u=NULL)=\frac{e^{\epsilon_1}}{e^{\epsilon_1}+1}\]
\item when $u=NULL$ and $v=x$, \[P(v=x|u=NULL)=\frac{1}{s}\frac{1}{e^{\epsilon_1}+1}\]
\item when $u=x$ and $v=NULL$, \[P(v=NULL|u=x)=\frac{1}{e^{\epsilon_1}+1}\]
\item when $u=x$, $v=y$ and $x\not=y$, \[P(v=y|u=x)=\frac{e^{\epsilon_1}}{e^{\epsilon_1}+1}\frac{1}{s-1+e^{\epsilon_2}}\]
\item when $u=x$, $v=y$ and $x=y$, \[P(v=y|u=x)=\frac{e^{\epsilon_1}}{e^{\epsilon_1}+1}\frac{e^{\epsilon_2}}{s-1+e^{\epsilon_2}}\]
\end{itemize}

\begin{theorem}
A random response mechanism following the probability matrix in equation (\ref{equation:nested_matrix}) above satisfies $(\epsilon_1+\epsilon_2)$-differential privacy.
\end{theorem}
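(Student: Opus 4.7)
The plan is to reduce the vector-level LDP inequality to a per-cell inequality and then do a short case analysis on the design matrix in Equation~(\ref{equation:nested_matrix}). Because the RR mechanism randomizes each cell independently, for any two neighboring answer vectors $\vec a$ and $\vec a'$ differing only in coordinate $k$, the ratio $\Pr[\mathcal M(\vec a)=\vec z]/\Pr[\mathcal M(\vec a')=\vec z]$ collapses, by telescoping the per-cell probabilities, to $P(z_k|a_k)/P(z_k|a'_k)$. Hence it suffices to prove that for every output symbol $v$ (NULL or not),
\[
  \max_{u,u'}\frac{P(v|u)}{P(v|u')} \;\le\; e^{\epsilon_1+\epsilon_2},
\]
which is equivalent to $\max_u P(v|u)/\min_u P(v|u)\le e^{\epsilon_1+\epsilon_2}$.

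Next, I would split on the output $v$. If $v=\mathrm{NULL}$, the only two possibilities from the five cases of Equation~(\ref{equation:nested_matrix}) are $P(\mathrm{NULL}|\mathrm{NULL})=\tfrac{e^{\epsilon_1}}{e^{\epsilon_1}+1}$ and $P(\mathrm{NULL}|x)=\tfrac{1}{e^{\epsilon_1}+1}$ for any non-NULL $x$. Their ratio is exactly $e^{\epsilon_1}\le e^{\epsilon_1+\epsilon_2}$ since $\epsilon_2\ge 0$. If $v=x$ is a non-NULL value, then $P(v|u)$ takes one of three values: (A) $\tfrac{1}{s(e^{\epsilon_1}+1)}$ when $u=\mathrm{NULL}$; (B) $\tfrac{e^{\epsilon_1}}{(e^{\epsilon_1}+1)(s-1+e^{\epsilon_2})}$ when $u$ is non-NULL and $u\neq v$; and (C) $\tfrac{e^{\epsilon_1}\cdot e^{\epsilon_2}}{(e^{\epsilon_1}+1)(s-1+e^{\epsilon_2})}$ when $u=v$. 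The maximum is always (C), so it suffices to bound both (C)/(A) and (C)/(B).

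The ratio (C)/(B) is immediate: the common factor $\tfrac{e^{\epsilon_1}}{(e^{\epsilon_1}+1)(s-1+e^{\epsilon_2})}$ cancels, leaving $e^{\epsilon_2}\le e^{\epsilon_1+\epsilon_2}$. The ratio (C)/(A) simplifies to $\tfrac{s\,e^{\epsilon_1+\epsilon_2}}{s-1+e^{\epsilon_2}}$, and bounding this by $e^{\epsilon_1+\epsilon_2}$ reduces to the inequality $s\le s-1+e^{\epsilon_2}$, which is just $e^{\epsilon_2}\ge 1$ and holds because $\epsilon_2\ge 0$. Combining the two cases closes the argument.

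The main obstacle I anticipate is simply being careful about the asymmetric shape of the matrix in Equation~(\ref{equation:nested_matrix}): a NULL input produces non-NULL outputs by first flipping to the ``NON-NULL'' class (probability $\tfrac{1}{e^{\epsilon_1}+1}$) and then sampling uniformly from $\Gamma$ (factor $\tfrac{1}{s}$), whereas a non-NULL input produces non-NULL outputs via a factor $\tfrac{e^{\epsilon_1}}{e^{\epsilon_1}+1}$ composed with the inner RR mechanism (denominator $s-1+e^{\epsilon_2}$). Getting the right denominators (an $s$ versus an $s-1+e^{\epsilon_2}$) is where the algebra must be checked; once both branches are computed, the bound follows cleanly because the ``cross-type'' ratio (C)/(A) gives back the promised $\epsilon_1+\epsilon_2$ budget only through the combined cancellation $\tfrac{s}{s-1+e^{\epsilon_2}}\le 1$, which is the tightest of the comparisons.
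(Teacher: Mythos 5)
Your proposal follows essentially the same route as the paper's proof: reduce the vector-level guarantee to a single-cell ratio bound via independence of the per-cell randomization, then case-split on whether the output symbol is NULL or not, obtaining $e^{\epsilon_1}$ in the first case and tracking the three candidate probabilities (A), (B), (C) in the second. The one place where you do better than the paper is the non-NULL output case: the paper asserts the total ordering $\frac{e^{\epsilon_1}}{e^{\epsilon_1}+1}\frac{1}{s-1+e^{\epsilon_2}} > \frac{1}{s}\frac{1}{e^{\epsilon_1}+1}$, i.e.\ that (B) always exceeds (A), and then only computes the single ratio (C)/(A); but that ordering is equivalent to $s\,e^{\epsilon_1} > s-1+e^{\epsilon_2}$, which fails when $\epsilon_1$ is small and $\epsilon_2$ is large, in which case the true minimum is (B) rather than (A). Your argument sidesteps this by bounding both (C)/(B) $= e^{\epsilon_2}$ and (C)/(A) $= \frac{s}{s-1+e^{\epsilon_2}}e^{\epsilon_1+\epsilon_2}$ separately, which covers all parameter regimes and yields the same final bound; this is the cleaner and strictly correct version of the paper's own calculation.
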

\begin{proof}
According to the definition of differential privacy and the proof of original random response in equation (\ref{equation:rr_original_proof}), we just need to prove the following inequality:
\[\frac{P(v=b_1|u=a_1)}{P(v=b_1|u=a'_1)}=\frac{\max_i P(v=b_1|u=i)}{\min_j P(v=b_1|u=j)}\le e^{\epsilon}\]
where $a_1$ and $a'_1$ are the first element in two neighboring answer vector $\vec{a}$ and $\vec{a}'$. Here we prove it in two cases:\\

\noindent 1) When $b=NULL$, the probability that \[P(v=b|u=NULL)=\frac{e^{\epsilon_1}}{e^{\epsilon_1}+1},\] and for all other normal values $x$, we have \[P(v=b|u=x)=\frac{1}{e^{\epsilon_1}+1}.\] Apparently $\frac{e^{\epsilon_1}}{e^{\epsilon_1}+1}>\frac{1}{e^{\epsilon_1}+1}$ as $\epsilon_1>0$. Therefore,
\begin{equation*}
\begin{aligned}
\frac{\max_i P(v=b|u=i)}{\min_j P(v=b|u=j)}&=\frac{e^{\epsilon_1}}{e^{\epsilon_1}+1}/\frac{1}{e^{\epsilon_1}+1}\\
&=e^{\epsilon_1}\\
&\le e^{\epsilon_1+\epsilon_2}\\
\end{aligned}
\end{equation*}\\

\noindent 2) When $b$ is normal value, we have \[P(v=b|u=NULL)=\frac{1}{s}\frac{1}{e^{\epsilon_1}+1}.\] For all other normal values $x$, if $x\not=b$, we have \[P(v=b|u=x)=\frac{e^{\epsilon_1}}{e^{\epsilon_1}+1}\frac{1}{s-1+e^{\epsilon_2}}\] and if $x=b$, we have \[P(v=b|u=x)=\frac{e^{\epsilon_1}}{e^{\epsilon_1}+1}\frac{e^{\epsilon_2}}{s-1+e^{\epsilon_2}}\]. As $\epsilon_1,\epsilon_2>0$, we have $\frac{e^{\epsilon_1}}{e^{\epsilon_1}+1}\frac{e^{\epsilon_2}}{s-1+e^{\epsilon_2}}>\frac{e^{\epsilon_1}}{e^{\epsilon_1}+1}\frac{1}{s-1+e^{\epsilon_2}}>\frac{1}{s}\frac{1}{e^{\epsilon_1}+1}$. Therefore, we have:
\begin{equation*}
\begin{aligned}
\frac{\max_i P(v=b|u=i)}{\min_j P(v=b|u=j)}&=\left(\frac{e^{\epsilon_1}}{e^{\epsilon_1}+1}\frac{e^{\epsilon_2}}{s-1+e^{\epsilon_2}}\right)/\left(\frac{1}{s}\frac{1}{e^{\epsilon_1}+1}\right)\\
&=\frac{s}{s-1+e^{\epsilon_2}}\cdot e^{\epsilon_1+\epsilon_2}\\
&\le e^{\epsilon_1+\epsilon_2}
\end{aligned}
\end{equation*}
\end{proof}
}

  \vspace{-0.1in}
\section{Matrix Factorization (MF) Perturbation}
\label{sc:mf}
\vspace{-0.1in}
The existing works \cite{hua2015differentially,shin2018privacy} 
follow the same strategy: each worker computes the worker profile vector locally, without any interaction with DC. Then each worker   learns the task profile matrix from all the worker answers by iterative interactions with DC. 
Though correct, this may incur high communication cost. Therefore, we design a new protocol that does not require the communication between DC and the workers for matrix factorization under LDP. 

Initially, DC randomly generates a {\em task profile} matrix $V$, which is a $d\times n$ matrix whose values are generated independently from the worker answers, where $d$ is the factorization parameter. 
For each column $\vec{v}_j$ in $V$, we require its 1-norm is within 1, i.e., $\left\|\vec{v}_j\right\|_1 \leq 1$.
The purpose of this restriction is to provide $\epsilon$-cell LDP. 
 When $V$ is ready, DC sends it to each worker. This can be done when the workers accept the tasks. Then for any worker $W_i$ who has his answer vector $\vec{a}_i$ (i.e., a $1\times n$ answer matrix),  and the matrix $V$ from DC, he applies the MF method to compute the {\em worker profile} vector $\vec{u}_i$ that satisfies $\epsilon$-cell LDP by adding Laplace noise to the loss function (Equation (\ref{eq:loss})), i.e., 
 \begin{equation}
\label{eq:dp_loss}
L_{DP}(\vec{a}_i, \vec{u}_i, V) = \sum_{T_j\in \mathcal{T}_i} (a_{i,j}-\vec{u}_i^T \vec{v}_j)^2 + 2\vec{u}_i^T \vec{\eta}_i,
\end{equation}
where $\vec{\eta}_i=\{Lap(\frac{|\Gamma|}{\epsilon}), \dots, Lap(\frac{|\Gamma|}{\epsilon})\}$ is a $d$-dimensional vector. The perturbed worker profile vector $\vec{u_i}$ is computed as:
\begin{equation}
\label{eq:dp_u}
\vec{u}_i = \argmin_{\vec{u}_i} L_{DP}(\vec{a}_i, \vec{u}_i, V). 
\end{equation}
Based on the perturbed worker profile vector $\vec{u_i}$, the perturbed answer vector is computed as $\mathcal{M}(\vec{a}_i) = \vec{u}_i V$. Worker $W_i$ sends $\mathcal{M}(\vec{a}_i)$ to DC. 
Algorithm \ref{alg:loss_client} shows the pseudo code. We use gradient descent method (Line 3 - 5 of Algorithm \ref{alg:loss_client}) to compute $\vec{u}_i$. 

\nop{
The output $\vec{u}_i$ and $V$ must both satisfy $\epsilon$-DP. According to the sequential composition theorem \cite{mcsherry2009privacy}, the privatized answer vector $\vec{a}_i'=\vec{u}_iV$ also complies with $\epsilon$-DP.
However, this simple approach incurs significant computational complexity at the worker side. 
In particular, the time complexity of the above gradient descent procedure is $O(n(d+\ell))$, where $\ell$ is the number of iterations to reach convergence \Boxiang{Haipei, I believe that the complexity should be $O(nd\ell)$, since each iteration incurs $O(nd)$ complexity}. 
In order to provide high accuracy in the factorization result, $d$ cannot be significantly smaller than $n$. Therefore, the computational overhead can be overwhelming for the worker, especially when the number of tasks is large. 

To address the issue, our remedy is to let the client learn $\vec{u}_i$ under a fixed matrix factor $V$. 
Since it is the DC that collects the profile vectors $\{\vec{u}_i\}$ from the workers, we let the DC generates a random $d\times n$ matrix $V$ and sends it to the workers at the time of task assignment. 
After the worker produces the answer vector $\vec{a_i}$, he/she factorizes it by using the fixed profile matrix $V$ and obtains $\vec{u_i}$. This process takes $O(n\ell)$ time complexity. With an appropriate learning rate $\gamma$, $\ell$ can be a very small value. 
Thus, the computational cost at the worker side is affordable.
Next, we discuss the MF mechanism in details. 

\begin{algorithm}[!htbp]
\caption{Matrix factorization perturbation for DC}
\label{alg:loss_server}
\begin{algorithmic}[1]
    \REQUIRE The tasks $\mathcal{T}$, factorization parameter $d$
    \ENSURE Inferred truth $\{\hat{a_j}\}$
    
    \STATE Generate factor matrix $V$ s.t. each column vector $\left\|\vec{v}_j\right\|_1=1$
    \FOR{each worker $W_i\in \mathcal{W}$}
    	\STATE Send $\mathcal{T}_i$, $V$ and $d$ to $W_i$ \Wendy{who sends these information to $W_i$?}
    \ENDFOR
    \STATE Keep waiting until all workers report their factor vectors $\{\vec{u}_i\}$ and get factor matrix $U$
    \STATE $A'=UV$
    \STATE Infer truth $\{\hat{a}_j\}$ on $A'$ by following Algorithm \ref{alg:estimate}
    \RETURN $\{\hat{a}_j\}$
  \end{algorithmic}
\end{algorithm}
\Wendy{Algorithm 2 is straightforward. Don't need to include in the paper. }
Given a set of tasks $\mathcal{T}$ and the factorization parameter $d$, the DC produces a random $d\times n$ positive profile matrix $V$, where each column vector $\vec{v}_j$ satisfies $\left\|\vec{v}_j\right\|_1= 1$.
Next, the DC assigns tasks as well as $V$ to the workers.
After receiving the privatized factor vectors $\{\vec{u}_i\}$ from the workers, the DC recovers the perturbed answer matrix $A'=UV$ and applies truth inference over it. Algorithm \ref{alg:loss_server} shows the pseudo code. We use gradient descent method (Line 3 - 5 of Algorithm \ref{alg:loss_server}) to compute $\vec{u}_i$. 
}
\vspace{-0.1in}
\begin{algorithm}
\caption{Matrix factorization perturbation}
\label{alg:loss_client}
\begin{algorithmic}[1]
    \REQUIRE Factorization parameter $d$, privacy budget $\epsilon$, task profile matrix $V$, original answer vector $\vec{a}_i$. 
    \ENSURE Perturbed answer vector $\vec{a}_i$
    
    \STATE Randomly generate a $1\times d$ vector $\vec{u}_i$;
    \STATE Generate Laplace perturbation vector $\vec{\eta}_i$; 
	\REPEAT
    	\STATE $\vec{u}_i=\vec{u}_i-\gamma \nabla_{\vec{u}_i} L_{DP}$ ($\gamma$: the learning rate); 
	\UNTIL{$\nabla_{\vec{u}_i} L_{DP}=0$}
    \RETURN $\vec{a}_i = \vec{u}_i V$ as the perturbed worker answers;
  \end{algorithmic}
\end{algorithm}
\vspace{-0.1in}

Next, we present Theorem \ref{th:mf_dp} to formally prove the privacy guarantee. 
\begin{thm}
\label{th:mf_dp}
The MF mechanism guarantees $\epsilon$-cell LDP.

\noindent{\bf proof}
In order to prove that MF satisfies $\epsilon$-cell LDP on each answer, we first show that for any pair of answer vectors $\vec{a}_i$ and $\vec{a}_j$ that differ at one element,
\vspace{-0.1in}
\[\frac{Pr[\argmin_{\vec{u}_i} L_{DP}(\vec{a}_i,\vec{u}_i,V)=\vec{u}]}{Pr[\argmin_{\vec{u}_j} L_{DP}(\vec{a}_j,\vec{u}_j,V)=\vec{u}]}\le e^{\epsilon}.\] 
Without loss of generality, we assume that $\vec{a}_i$ and $\vec{a}_j$ differ at the first element. 
In Algorithm \ref{alg:loss_client}, the perturbed factor vector $\vec{u}_i$ is computed by requiring $\nabla_{\vec{u}_i} L_{DP}=0$. Therefore, we have: 

\resizebox{0.40\textwidth}{!}{
\begin{minipage}{\linewidth}
\begin{equation*}
\label{eq:loss_gradient}
\begin{aligned}
\nabla_{\vec{u}_i}L_{DP}(\vec{a}_i, \vec{u}_i, V) &= \sum_{T_k\in \mathcal{T}_i}\big[2(a_{ik}-\vec{u}_i^T\vec{v}_k)\cdot\nabla_{\vec{u}_i}(a_{ik}-\vec{u}_i^T\vec{v}_k) \big] + 2\vec{\eta}_i \\
&= \sum_{T_k\in \mathcal{T}_i}\big[2(a_{ik}-\vec{u}_i^T\vec{v}_k)(-\vec{v}_k)\big] + 2\vec{\eta}_i \\
&= 2\vec{\eta}_i - \sum_{T_k\in \mathcal{T}_i} 2\vec{v}_k(a_{ik}-\vec{u}_i^T \vec{v}_k).
\end{aligned}
\end{equation*}
\end{minipage}
}
Since it must be true that 
\vspace{-0.1in}
\begin{equation*}
\begin{aligned}
\nabla_{\vec{u}_i} L_{DP}(\vec{a}_i,\vec{u}_i,V)=\nabla_{\vec{u}_j} L_{DP}(\vec{a}_j,\vec{u}_j,V)=0.
\end{aligned}
\end{equation*}
\vspace{-0.1in}
We have:
\[2\vec{\eta}_i-\sum_{T_k\in \mathcal{T}_i} 2\vec{v}_k\left( a_{ik}-\vec{u}_i^T\vec{v}_k \right)=2\vec{\eta}_j-\sum_{T_k\in \mathcal{T}_j}2\vec{v}_k\left( a_{jk}-\vec{u}_j^T\vec{v}_k \right)\]
and
\[\sum_{T_k\in \mathcal{T}_i\cup \mathcal{T}_j} \vec{v}_k(a_{ik}-a_{jk})+(\vec{u}_j^T-\vec{u}_i^T)\vec{v}_k = \vec{\eta}_i-\vec{\eta}_j.\]
Since $\vec{u}_i=\vec{u}_j=\vec{u}$, we have:
\vspace{-0.1in}
\begin{equation*}
\begin{aligned}
\vec{\eta}_i-\vec{\eta}_j &= \sum_{T_k\in \mathcal{T}_i\cup \mathcal{T}_j} \vec{v}_k(a_{ik}-a_{jk}) \\
\left\|\vec{\eta}_i-\vec{\eta}_j\right\|_1&\le\left\|\vec{v}_1\right\|_1\left\|(a_{i,1}-a_{j,1})\right\|_1\\
\left\|\vec{\eta}_i-\vec{\eta}_j\right\|_1&\le|\Gamma|.
\end{aligned}
\end{equation*}
Now, we are ready to show that:
\begin{equation*}
\begin{aligned}
\frac{Pr[\argmin_{\vec{u}_i} L_{DP}(\vec{a}_i,\vec{u}_i,V)=\vec{u}]}{Pr[\argmin_{\vec{u}_j} L_{DP}(\vec{a}_j,\vec{u}_j,V)=\vec{u}]} &= \frac{Pr(\vec{\eta}_i)}{Pr(\vec{\eta}_j)} \\
&= \frac{\prod_{k=1}^d \exp\left(-\frac{\epsilon\cdot|\eta_{ik}|}{|\Gamma|}\right)}{\prod_{k=1}^d \exp\left(-\frac{\epsilon\cdot|\eta_{jk}|}{|\Gamma|}\right)} \\ 
&\leq \prod_{k=1}^{d} \exp\left(\frac{\epsilon\cdot|\eta_{jk}-\eta_{ik}|}{|\Gamma|}\right)\\
& = \exp\left(\frac{\epsilon\cdot\left\|\vec{\eta}_j-\vec{\eta}_i\right\|_1}{|\Gamma|}\right)\\
& \le \exp\left(\epsilon\right).
\end{aligned}
\end{equation*}
Therefore, $\vec{u}_i$ satisfies $\epsilon$-cell LDP. Since applying any deterministic function over a differentially private output still satisfies DP \cite{Mohammed:2011:DPD:2020408.2020487},  $\mathcal{M}(\vec{a}_i)=\vec{u}_iV$ also satisfies $\epsilon$-cell LDP. 
\end{thm}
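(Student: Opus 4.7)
The plan is to follow an objective perturbation argument in the spirit of Chaudhuri--Monteleoni, and then invoke post-processing to extend the guarantee from $\vec{u}_i$ to $\mathcal{M}(\vec{a}_i) = \vec{u}_i V$. The core insight is that the randomness enters only through the Laplace noise $\vec{\eta}_i$ inside $L_{DP}$, and the first-order optimality condition $\nabla_{\vec{u}_i} L_{DP} = \vec 0$ induces a bijection between the noise realization and the optimizer $\vec{u}_i^\ast$. Consequently, comparing output distributions on neighboring inputs reduces to comparing Laplace densities on the two noise vectors that are forced to produce the same optimizer.

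First I would fix two neighboring answer vectors $\vec{a}_i$ and $\vec{a}_j$ that differ in a single cell $k$, and an arbitrary target output $\vec{u}$. Setting $\nabla_{\vec{u}} L_{DP}(\vec{a}_i,\vec{u},V) = \vec 0$ and $\nabla_{\vec{u}} L_{DP}(\vec{a}_j,\vec{u},V) = \vec 0$ and subtracting cancels the common quadratic terms and leaves
\[
\vec{\eta}_i - \vec{\eta}_j \;=\; \vec{v}_k\bigl(a_{i,k}-a_{j,k}\bigr),
\]
where one should handle the NULL case explicitly: if say $a_{j,k} = \mathrm{NULL}$ then task $T_k$ simply drops out of the $\vec{a}_j$ sum, and the term $\vec{v}_k\, a_{i,k}$ from the $\vec{a}_i$ gradient plays the role of the right-hand side. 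Either way only a single $\vec{v}_k$ term survives because the two vectors agree on every other cell.

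Second, I would bound the $\ell_1$-sensitivity. Using $\|\vec{v}_k\|_1 \leq 1$ by construction of $V$ and $|a_{i,k} - a_{j,k}| \leq |\Gamma|$ because both lie in the answer domain (or one is replaced by a bounded representative for the NULL case), we obtain
\[
\|\vec{\eta}_i - \vec{\eta}_j\|_1 \;\leq\; \|\vec{v}_k\|_1\,|a_{i,k}-a_{j,k}| \;\leq\; |\Gamma|.
\]
Third, since each coordinate of $\vec{\eta}$ is an independent $\mathrm{Lap}(|\Gamma|/\epsilon)$ draw, the density ratio factors coordinate-wise and the triangle inequality for absolute values yields
\[
\frac{\Pr[\vec{\eta}=\vec{\eta}_i]}{\Pr[\vec{\eta}=\vec{\eta}_j]}
\;\leq\; \exp\!\left(\frac{\epsilon\,\|\vec{\eta}_i-\vec{\eta}_j\|_1}{|\Gamma|}\right)
\;\leq\; e^{\epsilon}.
\]
Combined with the bijection from step one (which has Jacobian independent of $\vec{a}_i$), this gives the $e^{\epsilon}$-ratio for the density of $\vec{u}_i$.

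Finally, because $V$ is drawn by DC independently of any worker's data and the mapping $\vec{u}_i \mapsto \vec{u}_i V$ is deterministic once $V$ is fixed, the post-processing property of differential privacy transfers the guarantee from $\vec{u}_i$ to the reported $\mathcal{M}(\vec{a}_i)$. The main obstacle I anticipate is the NULL-value bookkeeping in step one: the sets $\mathcal{T}_i$ and $\mathcal{T}_j$ may themselves differ when the altered cell flips between NULL and non-NULL, so one must verify that the gradient subtraction still collapses to a single $\vec{v}_k$ contribution rather than a full sum over $\mathcal{T}_i\cup\mathcal{T}_j$. The secondary subtlety is justifying the bijection (uniqueness of the minimizer of $L_{DP}$), which follows because $L_{DP}$ is a strictly convex quadratic in $\vec{u}_i$ whenever the $\vec{v}_k$'s span $\mathbb{R}^d$; otherwise one restricts attention to the affine subspace of optima and the argument goes through unchanged.
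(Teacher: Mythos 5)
Your proposal is correct and follows essentially the same route as the paper's own proof: derive the first-order optimality condition, subtract the two gradient equations so that only the single differing cell's $\vec{v}_k$ term survives, bound $\|\vec{\eta}_i-\vec{\eta}_j\|_1\le\|\vec{v}_k\|_1\,|a_{i,k}-a_{j,k}|\le|\Gamma|$, take the coordinate-wise Laplace density ratio, and finish by post-processing through the fixed matrix $V$. The two subtleties you flag (the NULL-value bookkeeping when $\mathcal{T}_i\neq\mathcal{T}_j$, and the uniqueness of the minimizer needed to equate the output probability with the noise probability) are genuine gaps that the paper's proof glosses over, and your treatment of them is sound.
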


Next, we have Theorem \ref{thm:mf_bound} to show the upper-bound of the expected error of the inferred truth of the MF approach. 
\begin{thm}
\label{thm:mf_bound}
Given a set of answer vectors $A = \{\vec{a}_i\}$, let $A^P = \{\hat{a}_i\}$ be the answer vectors after applying MF on $A$. The expected error $E\left[MAE(A^P)\right]$ of estimated truth based on the answer vectors perturbed by the MF mechanism satisfies that:
\[E\left[MAE(A^P)\right]\leq\tilde{q}m\left(\sqrt{\frac{2}{\pi}}+\frac{d|\Gamma|}{n\epsilon}\right),\] where $\tilde{q}=\max_i\{q_i\}$ and $d$ is the factorization parameter. 
\end{thm}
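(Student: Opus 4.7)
The plan is to mirror the structure of the proofs of Theorems \ref{thm:lp_bound} and \ref{thm:rr_bound}: reduce the $MAE$ bound to a sum of per-worker, per-task error bounds $e_{i,j}^{MF}=E[|\mu_j-\hat{a}_{i,j}|]$, and then do the technical work of bounding $e_{i,j}^{MF}$ for the MF mechanism. First I would start from Equation (\ref{eq:td_error}) and substitute $\hat{\mu}_j$ via Equation (\ref{td_truth}). The triangle inequality and linearity of expectation give $E[|\mu_j-\hat{\mu}_j|]\le\sum_{W_i\in\overline{W_j}} q_i\,e_{i,j}^{MF}$, following the same accounting convention used in Theorem \ref{thm:lp_bound}. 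Summing over $j$ and dividing by $n$ reduces the theorem to showing $e_{i,j}^{MF}\le\sqrt{2/\pi}+\frac{d|\Gamma|}{n\epsilon}$ (under a normalized worker-noise regime $\sigma_i\le 1$), after which the $\tilde{q}m$ prefactor drops out immediately from $\sum_i q_i\le m\tilde{q}$.

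The main work is bounding $e_{i,j}^{MF}$. I would split it as $|\mu_j-\hat{a}_{i,j}|\le|\mu_j-a_{i,j}|+|a_{i,j}-\vec{u}_i^{\ast T}\vec{v}_j|+|\vec{u}_i^{\ast T}\vec{v}_j-\vec{u}_i^{T}\vec{v}_j|$, where $\vec{u}_i^{\ast}=\argmin_{\vec{u}_i}L(\vec{a}_i,\vec{u}_i,V)$ is the noise-free MF minimizer. The first term has expected absolute value $\sigma_i\sqrt{2/\pi}$ under the Gaussian worker-error assumption of Section \ref{section:truth_inference}, producing the $\sqrt{2/\pi}$ inside the final bound. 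The second term is the intrinsic MF approximation error; for the purpose of the stated bound, it is absorbed into the worker-noise term by appealing to the choice of factorization parameter $d$ and the noise-free MF guarantee (or treated as negligible in expectation). The third term is the privacy cost and is where the analytic effort concentrates.

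For the privacy-noise term I would reuse the KKT identity derived in the proof of Theorem \ref{th:mf_dp}: $\sum_{T_k\in\mathcal{T}_i}\vec{v}_k\vec{v}_k^T(\vec{u}_i^{\ast}-\vec{u}_i)=\vec{\eta}_i$. Solving for $\vec{u}_i-\vec{u}_i^{\ast}$ and taking the inner product with $\vec{v}_j$ expresses the perturbation effect on $\hat{a}_{i,j}$ as a linear functional of $\vec{\eta}_i$. Combining the restriction $\|\vec{v}_j\|_1\le 1$ imposed on every column of $V$ by Algorithm \ref{alg:loss_client} with the fact that $E[|Lap(|\Gamma|/\epsilon)|]=|\Gamma|/\epsilon$ for each of the $d$ coordinates of $\vec{\eta}_i$ produces a coordinate-wise bound of order $d|\Gamma|/\epsilon$ on $|\vec{u}_i^{T}\vec{v}_j-\vec{u}_i^{\ast T}\vec{v}_j|$; averaging the effect of a single noise draw over the $n$ task columns of $V$ that it influences gives the $\frac{d|\Gamma|}{n\epsilon}$ scaling that appears in the statement.

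The main obstacle is precisely that last step: controlling the inverse Hessian $\bigl(\sum_{T_k\in\mathcal{T}_i}\vec{v}_k\vec{v}_k^T\bigr)^{-1}$ appearing implicitly in $\vec{u}_i-\vec{u}_i^{\ast}$ and extracting the $1/n$ factor from the normalization $\|\vec{v}_j\|_1\le 1$ without the bound degrading through the conditioning of $V$. This is where the random unit-$\ell_1$-norm construction of $V$ in Algorithm \ref{alg:loss_client} is decisive, since it lets the sensitivity of $\vec{u}_i^{T}\vec{v}_j$ to a single $Lap(|\Gamma|/\epsilon)$ coordinate be handled in closed form rather than via spectral arguments. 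Once that step is carried out, combining the intrinsic and privacy contributions gives $e_{i,j}^{MF}\le\sqrt{2/\pi}+\frac{d|\Gamma|}{n\epsilon}$, and the summing-and-averaging argument from the first paragraph then yields $E[MAE(A^P)]\le\tilde{q}m\bigl(\sqrt{2/\pi}+\frac{d|\Gamma|}{n\epsilon}\bigr)$ as claimed.
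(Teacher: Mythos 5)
The paper itself defers the proof of Theorem~\ref{thm:mf_bound} to the full version, so there is no in-paper argument to compare against line by line; your proposal therefore has to stand on its own, and it does not yet close. The overall architecture is reasonable and matches the pattern of Theorems~\ref{thm:lp_bound} and~\ref{thm:rr_bound}: reduce $MAE$ to per-worker per-task errors, extract $\sqrt{2/\pi}$ from the Gaussian worker-error assumption (correctly noting that this silently requires $\sigma_i\le 1$), and attribute $\frac{d|\Gamma|}{n\epsilon}$ to the $d$ Laplace coordinates with $E[|Lap(|\Gamma|/\epsilon)|]=|\Gamma|/\epsilon$ averaged over $n$ tasks. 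But the decisive step --- converting the stationarity identity $\bigl(\sum_{T_k\in\mathcal{T}_i}\vec{v}_k\vec{v}_k^T\bigr)(\vec{u}_i^{\ast}-\vec{u}_i)=\vec{\eta}_i$ into the per-task bound $E\bigl[|(\vec{u}_i-\vec{u}_i^{\ast})^T\vec{v}_j|\bigr]\le\frac{d|\Gamma|}{n\epsilon}$ --- is exactly the step you name as ``the main obstacle'' and then assert can be ``carried out'' without carrying it out. That is not a presentational omission; it is where the theorem's central claim (insensitivity to sparsity) lives, and your route actively works against it: the Hessian $H=\sum_{T_k\in\mathcal{T}_i}\vec{v}_k\vec{v}_k^T$ has $\mathrm{tr}(H)=\sum_k\|\vec{v}_k\|_2^2\le|\mathcal{T}_i|$, so for a sparse worker ($|\mathcal{T}_i|<d$) it is singular, and even when invertible its smallest eigenvalue is at most $|\mathcal{T}_i|/d$, so $\|H^{-1}\vec{\eta}_i\|$ scales like $d\|\vec{\eta}_i\|/|\mathcal{T}_i|$, which grows as the data gets sparser and does not produce a clean $1/n$. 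The column normalization $\|\vec{v}_j\|_1\le 1$ makes $H$ smaller, not better conditioned, so it cannot by itself rescue this step.

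Two smaller but real issues: first, the intrinsic approximation error $|a_{i,j}-\vec{u}_i^{\ast T}\vec{v}_j|$ cannot be ``absorbed'' by ``appealing to the choice of $d$,'' because $V$ is generated randomly and independently of the answers (the paper explicitly concedes this costs accuracy); if this term is to disappear from the bound, the argument must say why, e.g.\ by bounding $|\mu_j-\vec{u}_i^{\ast T}\vec{v}_j|$ directly rather than routing through $a_{i,j}$. Second, your reduction writes $E[|\mu_j-\hat{\mu}_j|]\le\sum_{W_i}q_i e_{i,j}^{MF}$ and then uses $\sum_i q_i\le m\tilde{q}$; to get the stated prefactor you are implicitly dropping the denominator $\sum_i q_i$ from Equation~(\ref{td_truth}), which should be justified (e.g.\ by the normalization $\sum_i q_i=1$ after perturbation, when every worker contributes a non-NULL value to every task). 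As written, the proposal is a plausible outline whose two hardest ingredients --- the inverse-Hessian control and the treatment of the factorization residual --- are named but not supplied.
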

The proof of Theorem \ref{thm:mf_bound} can be found in our full paper \cite{sun2018truth}. 
Importantly, Theorem \ref{thm:mf_bound} shows that unlike LP and RR approaches, the error bound of the MF approach is insensitive to answer sparsity. This shows the advantage of the MF approach when dealing with sparse worker answers. 
Furthermore, the expected error of truth inference on the perturbed data is small. For example, consider the setting where $n=1,000$, $d=100$, $|\Gamma|=10$ and $\epsilon=1$. The expected error of the MF mechanism does not exceed 1.8, which is substantially smaller than that of the LP and RR approaches. Our empirical study also demonstrates the good utility of the MF mechanism in practice (more details in Section \ref{sc:exp}). 

\nop{
The basic idea of Laplace perturbation and random response are the same, they both perturb the value of cells directly. Due to this feature, NULL value must be specially take care of. Different from Laplace perturbation and random response, here we designed a new approach based on matrix factorization \Haipei{cite} called \emph{loss function perturbation} that can achieve the conditions of local differential privacy, and deal with NULL value naturally.

Matrix factorization approach is widely used in recommender system. It's a effective approach to predict the missing cells in the matrix and keep the consistency with the original value. In the follow: Suppose we have a $m$ by $n$ matrix $A$, we want to find two matrix $U$ and $V$, where $U$ is an $m$ by $k$ matrix and $V$ is a $k$ by $n$ matrix, that satisfy \[A\approx A'=UV\]. The solution is to use gradient descent to minimize the following loss function:
\begin{equation}
f(A,U,V)=\sum_{i=1}^m \sum_{j=1}^n (A_{i,j}-\sum_{r=1}^k U_{i,k}V_{k,j})^2
\end{equation}
and finally 
\begin{equation}
U,V=\argmin_{U,V}f(A,U,V)
\end{equation}

Following the idea of matrix factorization, our basic idea is to consider each single answer vector as a $1$ by $n$ matrix and factorize it as $\vec{a}=\vec{u}V$. We add Laplace perturbation on the loss function and $\vec{u}$ and $V$ will both satisfy $\epsilon$-differential privacy. Finally we multiply them together and get $\vec{a}'$. According to the sequential composition, $\vec{a}'$ will also satisfy differential privacy. 

However, in this case, matrix multiplication has a complexity as high as $O(n(k+l))$, where $l$ is the iteration time of gradient descent. Obviously, obviously $k$ cannot be too small, otherwise the accuracy of matrix factorization will drop dramatically. In that case, $O(n(k+l))$ will be too expensive for local client. Our solution is to fix one of the factor, the matrix $V$. Since the crowdsourcing platform will push tasks to the workers at the beginning and collect the answer vector in the end, we let the platform push a randomly generated matrix $V$ to each worker as well. After the workers finish their tasks, the factorize their answer vector locally and get $\vec{u}$. This step will on cost $O(nl)$. In most of the cases $l$ can be much smaller than $k$ as long as we appropriate learning rate. Finally the worker will only upload $\vec{u}$ to the platform, the the platform will do matrix multiplication instead. we will talk about the details in the following.

For the crowdsourcing platform, it will pick generates a matrix $k$ by $n$ matrix $V$, in which each column vector $\vec{v}_j$ satisfies $\left\|\vec{v}_j\right\|_1\le 1$.
After that, the DC assigns tasks to the workers, as well as $V$.. And push the matrix along with the tasks to each individual worker. Then it keeps waiting for workers' replies, i.e., the vectors $\vec{u}$. After getting the vectors $\vec{u}$ from all the workers, it will do matrix multiplication $U\times V$, where $U$ is the matrix that consist of all the vectors $\vec{u}$ and get the perturbed answer vectors of each individual worker that satisfy $\epsilon$-differential privacy. Finally the platform will infer truth from the perturbed answer vectors. The pseudo-code of the server part is shown in algorithm \ref{alg:loss_server}.

For the local clients, after getting the matrix $V$ along with the tasks, the worker will finish at lease one of the tasks and generate his answer vector $\vec{a}$. Then the worker will factorize and perturb his answer vector locally. Following the idea above, our loss function is:
\begin{equation}
f(\vec{a},\vec{u},V)=\sum_{j=1}^{n}\left( a_j-\vec{u} \vec{v}_j \right)^2
\end{equation}
where $V$ is a pre-generated constant from the crowdsourcing platform. Then we add Laplace perturbation on the loss function
\begin{equation}
f_{DP}(\vec{a},\vec{u},V)=\sum_{j=1}^{n}\left( a_j-\vec{u} \vec{v}_j \right)^2+2\vec{u}\vec{\xi}
\end{equation}
In this equation, $\vec{\xi}=\left\{Lap\left(\frac{\Delta}{\epsilon}\right)\right\}_k$ is a perturbation column vector with a dimension of $k$. And $\Delta$ is the range of answers, i.e., $\Delta=\max\{a_j\}-\min\{a_j\}$. The worker will use gradient descent to calculate perturbed $\vec{u}$ by following equation:
\begin{equation}
\label{equation:minimizing_loss_function}
\vec{u}=\argmin_{\vec{u}}f_{DP}(\vec{a},\vec{u},V)
\end{equation}
The pseudo-code is shown in algorithm \ref{alg:loss_client}. Next, we will give a formal proof that the mechanism satisfies $\epsilon$-differential privacy.

}

\begin{table*}[!htbp]
\centering
\begin{tabular}{|c|c|c|c|c|c|c|}
\hline
Dataset & \# of Workers & \# of Tasks & \# of Answers & Maximum Sparsity & Minimum Sparsity & Average Sparsity \\ 
\hline
Web&34&177&770&0.903955&0&0.705882 \\
\hline
AdultContent&825&11,040&89,796&0.999909&0.316033&0.993666\\
\hline
\end{tabular}
\caption{\label{table:realworld} Details of real-world datasets}
\vspace*{-0.25in}
\end{table*}

\nop{
\begin{figure}[!htbp]
\vspace{-0.05in}
\begin{center}
\begin{tabular}{@{}c@{}c@{}}
	\includegraphics[width=0.25\textwidth]{./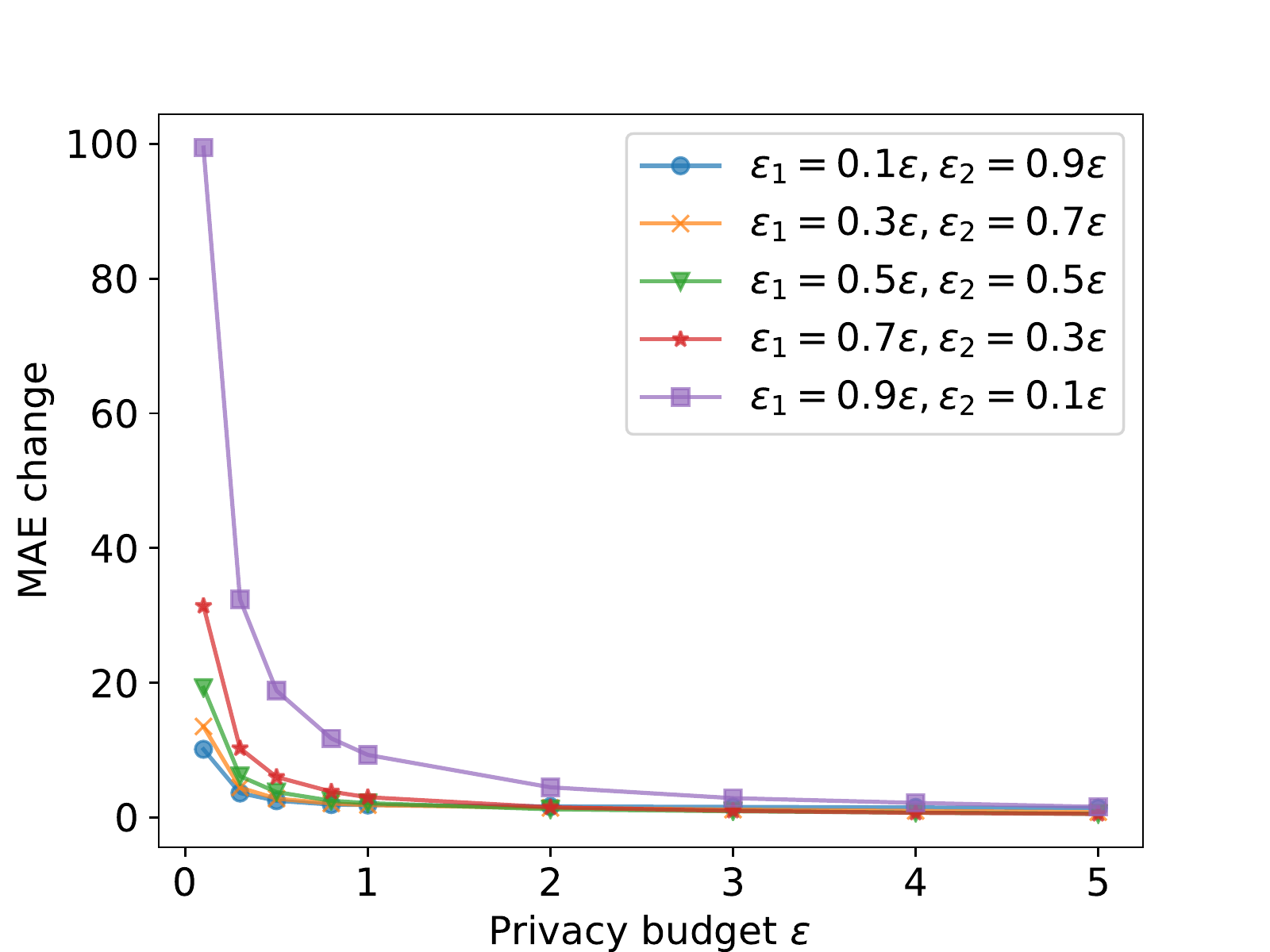} 
	&
	\includegraphics[width=0.25\textwidth]{./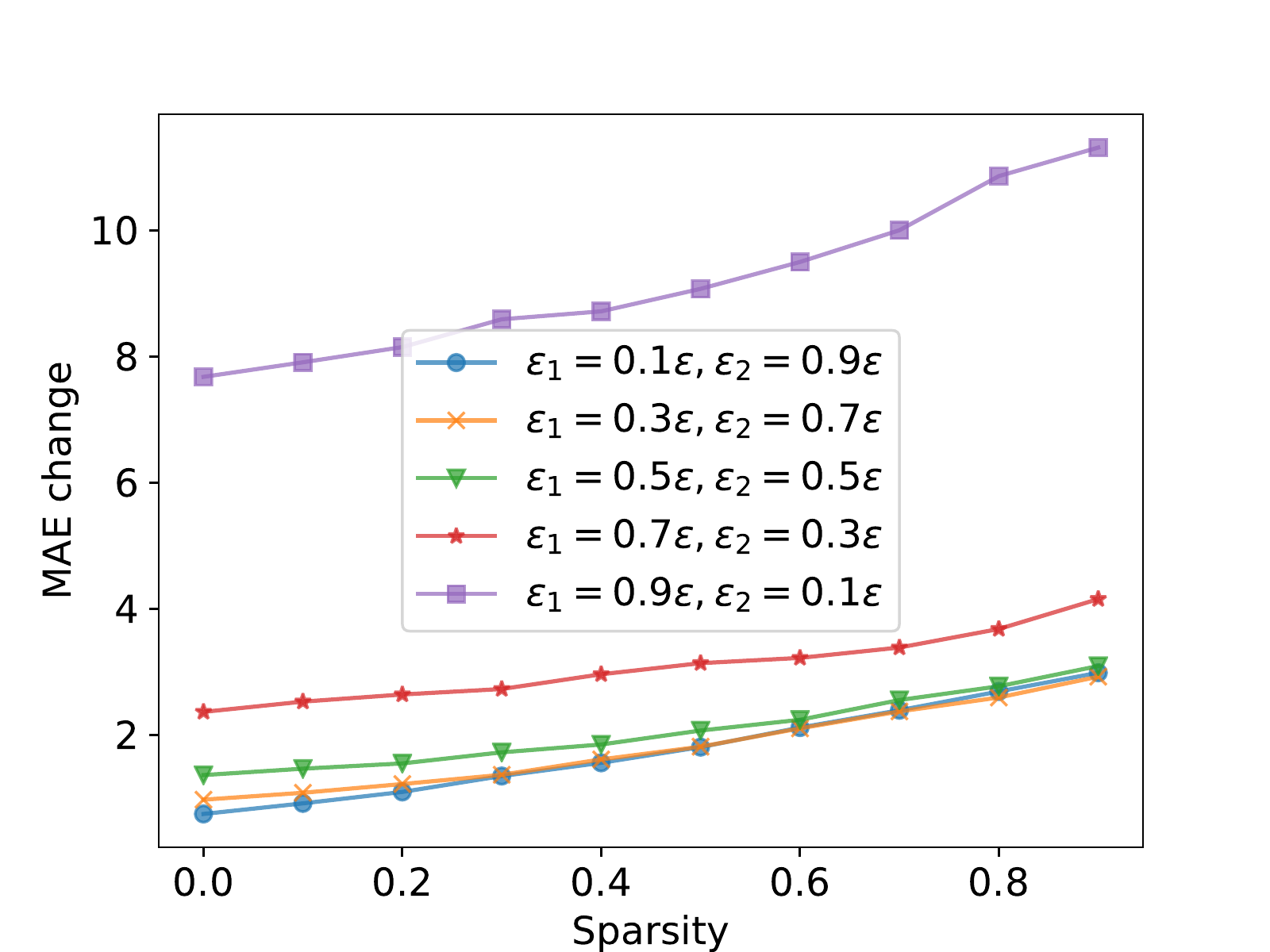}
    \\
	{\scriptsize(a) Various $\epsilon$}
	&
	{\scriptsize(b) Various sparsity}
\end{tabular}
\vspace{-0.1in}
 \caption{\small \label{fig:rrlp_parameter_assignment} Accuracy of truth inference with different privacy budget assignment strategy of RR+LP ( synthetic dataset, 200 workers and 20 tasks, truth distribution $\mathcal{N}(0,1)$)}
\vspace{-0.2in}
\end{center}
\end{figure}
}

\vspace{-0.1in}
\section{Experiments}
\label{sc:exp}
\vspace{-0.1in}
\vspace{-0.1in}
\subsection{Setup}
\label{section:exp_setup}
\vspace{-0.1in}

\noindent{\bf Synthetic datasets.} We generate several synthetic datasets for evaluation. The answer domain $\Gamma$ of these  datasets includes the integers from 0 to 9. For each task, we generate its ground truth by following $\mathcal{N}(0,1)$, i.e., the ground truth centers at answer 0. 
By following the assumption of the truth inference algorithm, for each worker $W_i$, his answers are generated by adding the Gaussian noise $\mathcal{N}(0,\sigma_i^2)$ on the ground truth, where $\sigma_i$ is decided by the worker quality. We define two types of workers, namely {\em high-quality} workers with $\sigma_i=1$ and {\em low-quality} workers with $\sigma_i=5$. We pick 50\% of the workers randomly as high-quality and the rest as low-quality. 

\noindent{\bf Real-world datasets.} We use two real-world crowdsourcing datasets: \emph{Web} dataset and \emph{AdultContent} dataset from a public data repository\footnote{\url{http://dbgroup.cs.tsinghua.edu.cn/ligl/crowddata/}}. In the \emph{Web} dataset, 34 workers provide the relevance score (from 0 to 4) of 177 pairs of URLs. In the \emph{AdultContent} dataset, 825 workers assign a score (from 0 to 4) of adult content of 11 thousand websites. 
Both real-world datasets have the ground truth of tasks available. More details of the real-world datasets are included in Table \ref{table:realworld}.

  
\noindent{\bf Utility metric and parameters.}
We measure the {\em MAE change} as the utility metrics. That is, we compare the MAE of the truth inference results derived from the original answers and that from the perturbed answers. Formally, let $MAE_{O}$ and $MAE_{P}$ be the MAE of the truth inference results before and after applying LDP on the worker answers. Then the MAE change $MAE_{C}$ is measured as $MAE_{C} = MAE_{P} - MAE_{O}$. The smaller the MAE change, the better the utility. 
We study the impacts of different parameters on $MAE_C$, including the size of the dataset, the privacy budget $\epsilon$, and the density of worker answers.
\nop{
To measure the impact of different parameters on $MAE_C$, we change various parameter settings, including the size of dataset, privacy budget $\epsilon$, and density of worker answers. }


\noindent{\bf Compared method.} We compare the performance of our perturbation mechanism with the \emph{2-Layer} approach \cite{li2018an}, which is the most relevant work to ours. The 2-layer approach assumes the data is complete. It relies on sampling and randomized response to realize LDP. We extend the 2-layer approach to make it deal with NULL values by treating the NULL value as a unique answer. Each worker samples his own probability for NULL values. 





\vspace{-0.1in}
\subsection{Distribution Analysis of Real-world Datasets}
\label{sc:realworld_analysis}
\vspace{-0.1in}
We analyze the answer sparsity and worker answer distribution of the two real-world datasets that we used for the experiments. 

{\bf Data sparsity.} We observe that both real-world datasets are very sparse. The sparsity is measured as the fraction of NULL values of each worker answer vector. In Table \ref{table:realworld}, we report the minimum, maximum, and average sparsity of the two real-world datasets. In particular, the \emph{AdultContent} dataset is extremely sparse. The average sparsity is greater than 0.99, while the maximum sparsity is as high as 0.999909 (i.e., the worker only answers one task). 

\begin{figure}[!htbp]
\begin{center}
\begin{tabular}{@{}c@{}c@{}}
    \includegraphics[width=0.25\textwidth]{./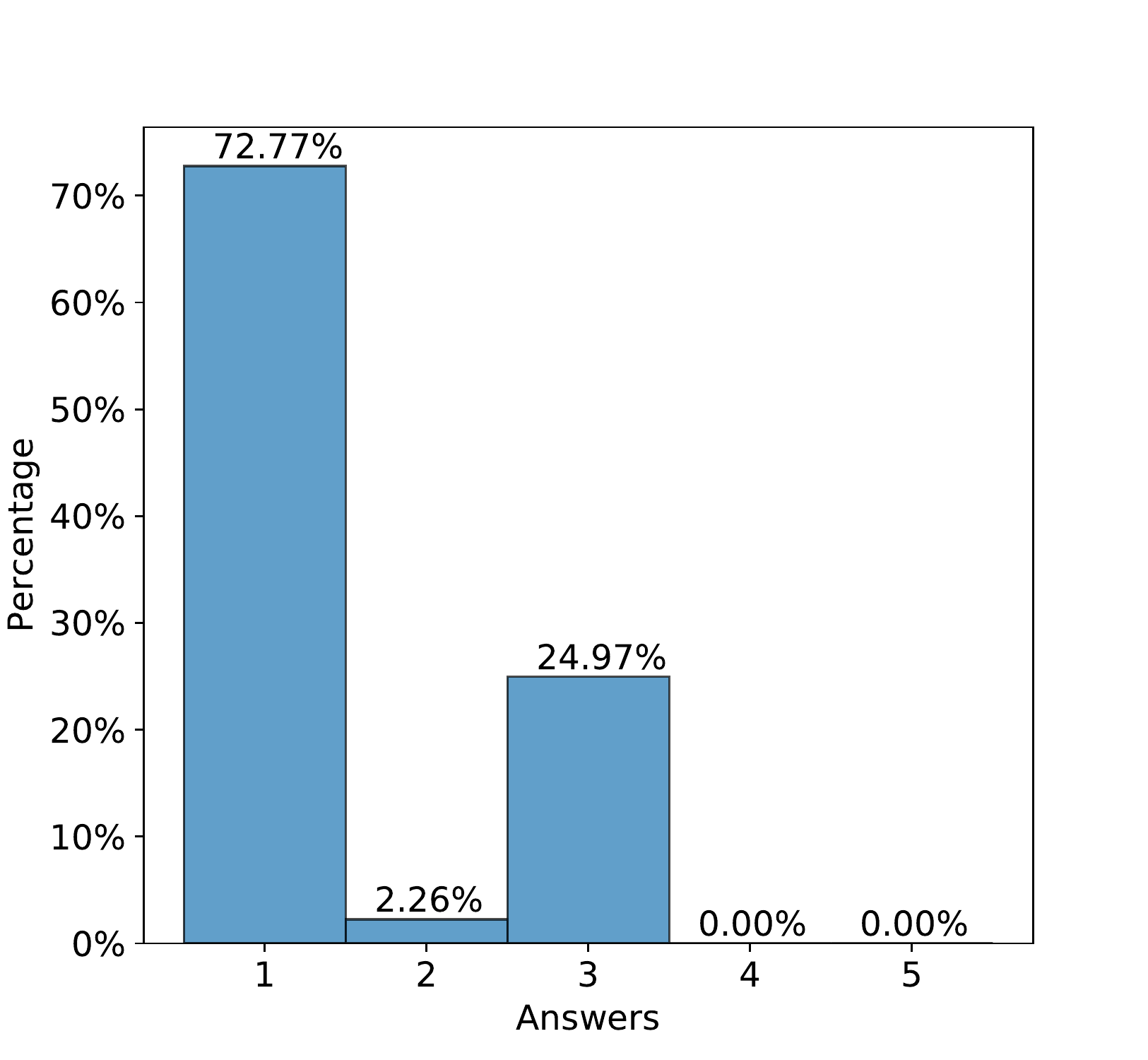} 
	&
	\includegraphics[width=0.25\textwidth]{./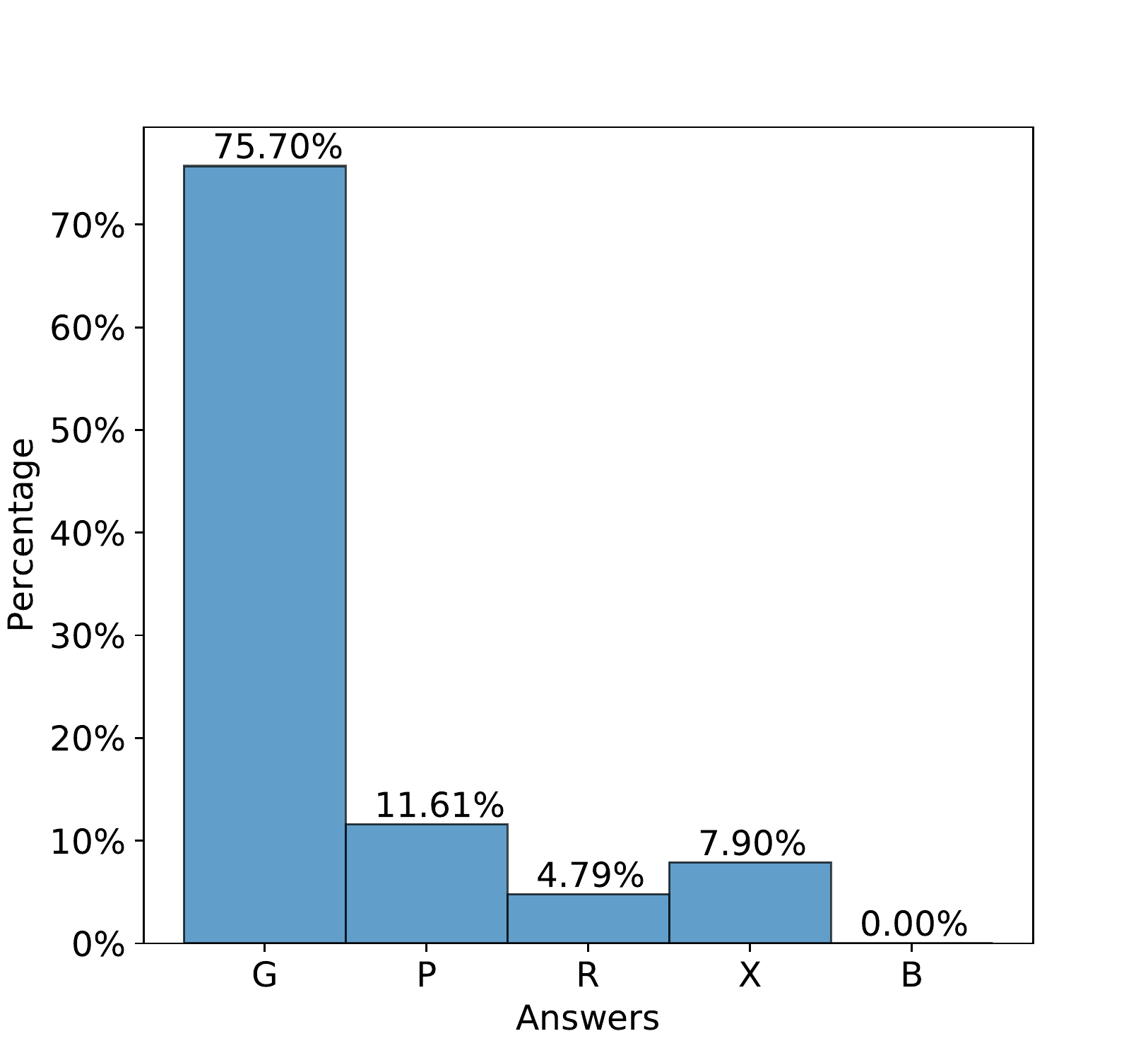}
	\\
	{\scriptsize(a) Web dataset}
	&
	{\scriptsize(b) AdultContent dataset}
\end{tabular}
\vspace{-0.1in}
  \caption{\small \label{fig:original_answer_distribution} Answer distribution of the real-world datasets}
\end{center}
\vspace*{-0.2in}
\end{figure}

\noindent{\bf Distributions of worker answers.} 
Figure \ref{fig:original_answer_distribution} shows the answer distribution of the two real-world datasets. The analysis shows that the worker answer distribution of both datasets is skewed. 
In both {\em Web} and the {\em AdultContent} datasets, the ground truth of more than 72\% tasks is value 0. Our analysis (Figure \ref{fig:original_answer_distribution} (a) and (b)) shows that the majority of answers in theses two datasets are correct (i.e., the same as the ground truth). Furthermore, the remaining worker answers are distributed to a small number of values in a non-uniform fashion.  



\nop{
\begin{figure*}[!htbp]
\begin{center}
\begin{tabular}{@{}c@{}c@{}c@{}}
	\includegraphics[width=0.33\textwidth]{./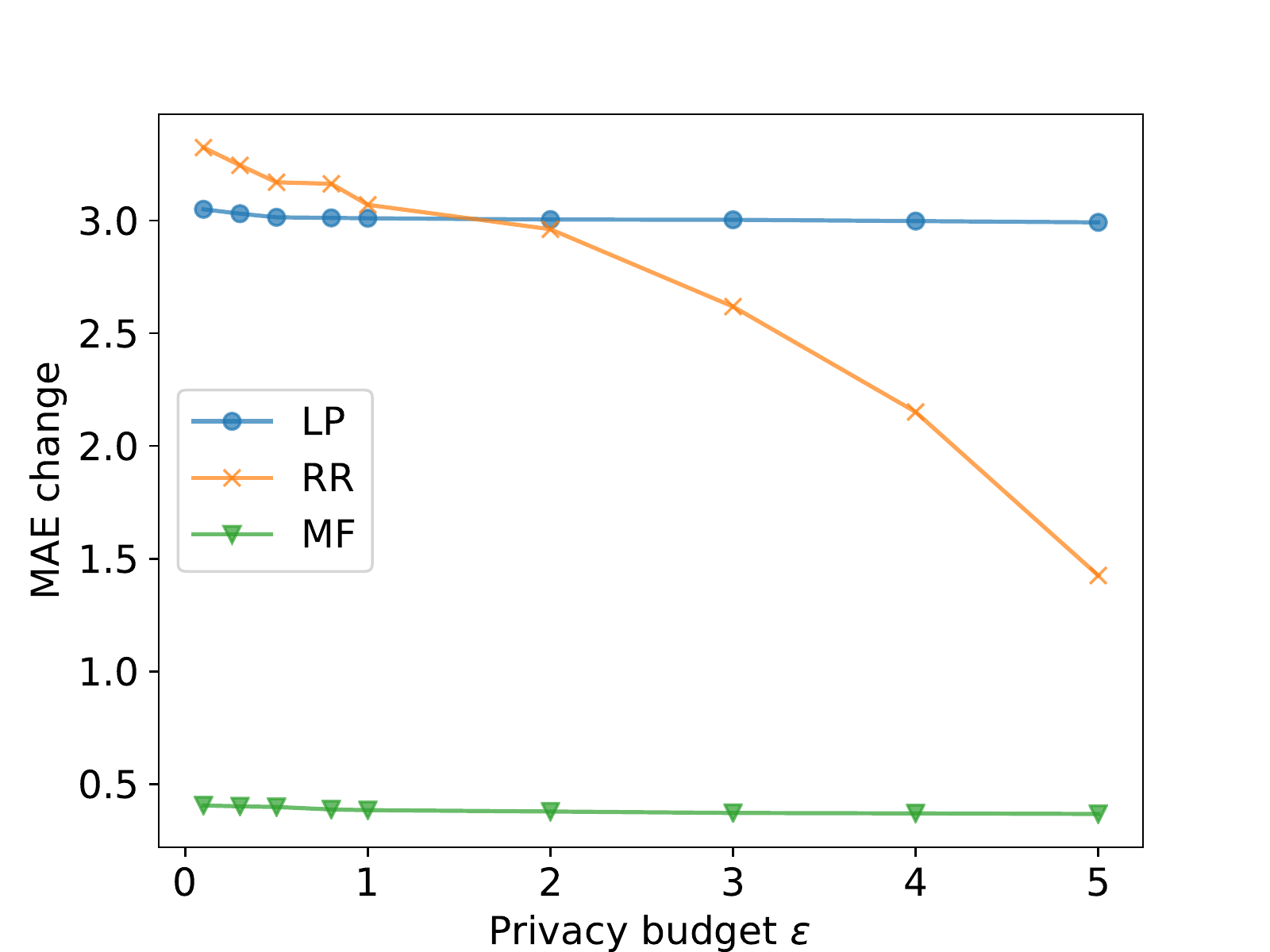} 
	&
	\includegraphics[width=0.33\textwidth]{./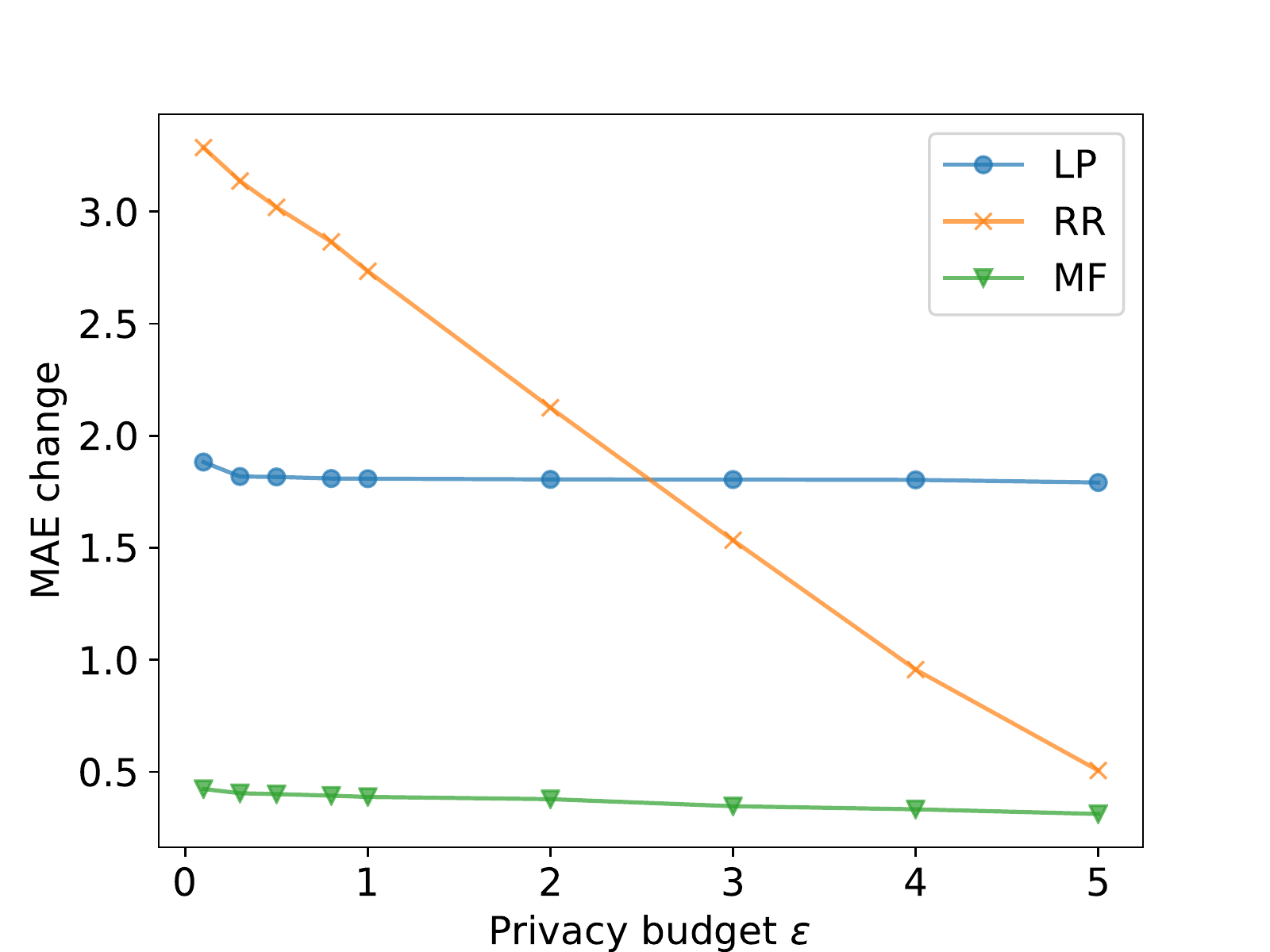}
	&
	\includegraphics[width=0.33\textwidth]{./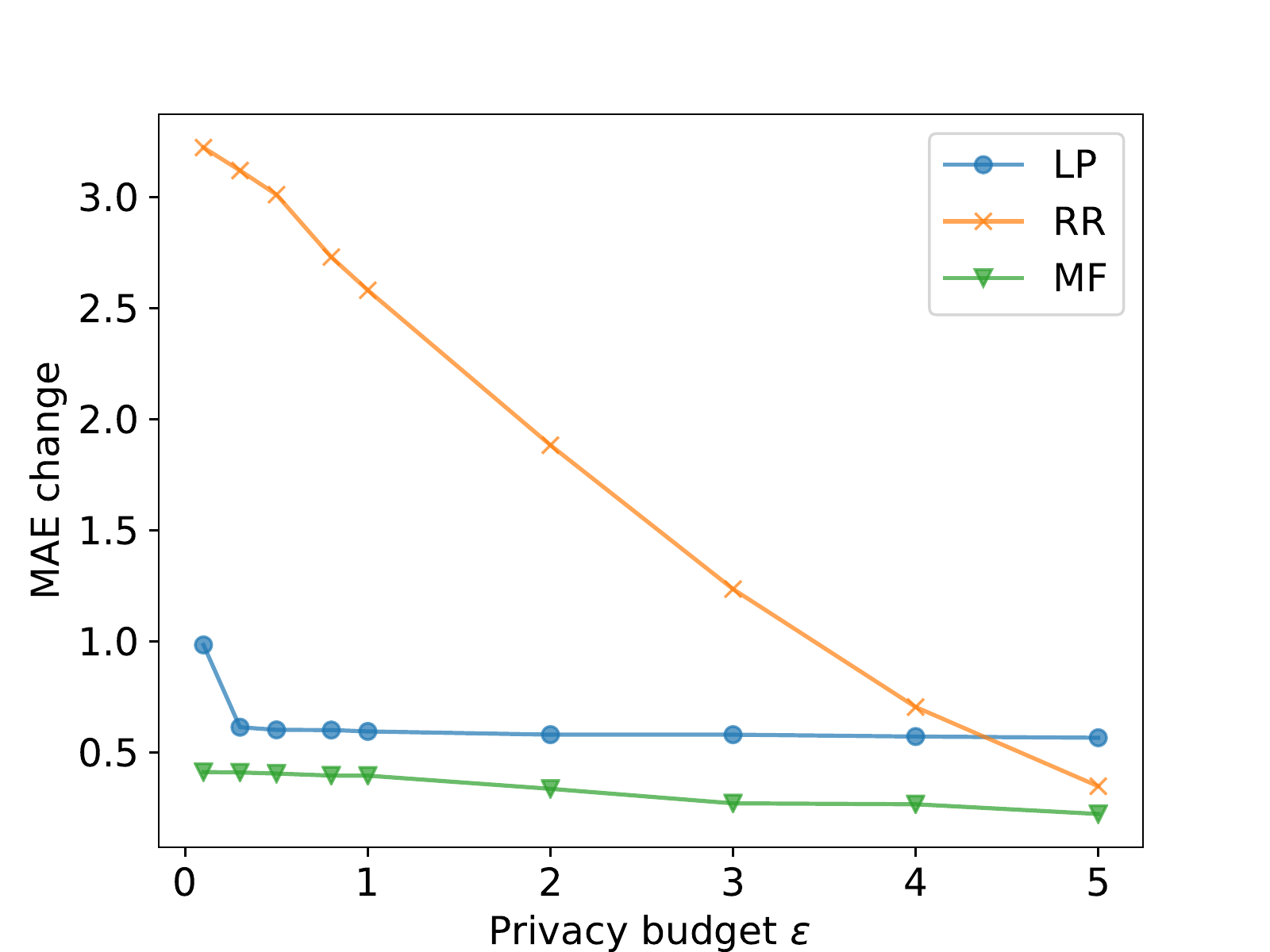}
	\\
	{\scriptsize(a) sparsity$=0.9$, truth distribution $\mathcal{N}(0,1)$}
	&
	{\scriptsize(b) sparsity$=0.5$, truth distribution $\mathcal{N}(0,1)$}
	&
	{\scriptsize(c) sparsity$=0.1$, truth distribution $\mathcal{N}(0,1)$}
	\\
    \includegraphics[width=0.33\textwidth]{./Figures/m10000n1000den0_10skew5_comp4_varyEpsilon.pdf} 
	&
	\includegraphics[width=0.33\textwidth]{./Figures/m10000n1000den0_50skew5_comp4_varyEpsilon.pdf}
	&
	\includegraphics[width=0.33\textwidth]{./Figures/m10000n1000den0_90skew5_comp4_varyEpsilon.pdf}
	\\
	{\scriptsize(d) sparsity$=0.9$, truth distribution $\mathcal{N}(4,1)$}
	&
	{\scriptsize(e) sparsity$=0.5$, truth distribution $\mathcal{N}(4,1)$}
	&
	{\scriptsize(f) sparsity$=0.1$, truth distribution $\mathcal{N}(4,1)$}
\end{tabular}
\vspace{-0.1in}
\caption{\small \label{fig:ae_wrt_epsilon_10000} Accuracy of truth inference w.r.t. different privacy budget $\epsilon$ (10000 workers and 1000 tasks).}
\end{center}
\end{figure*}
}

\nop{
\subsection{Privacy Budget Assignment for RR+LP}

Recall that the RR+LP mechanism follows a two-step process to provide $(\epsilon_1+\epsilon_2)$-LDP (Section \ref{sc:rrlp}), where $\epsilon_1$ and $\epsilon_2$  are the budget allocated to RR and LP respectively. 
We examine the MAE change of truth inference algorithm on RR+LP with various privacy budget assignment schemes for $\epsilon_1$ and $\epsilon_2$.
Part of the results are displayed in Figure \ref{fig:rrlp_parameter_assignment}.
 Intuitively, a larger $\epsilon_2$ value leads to smaller noise in the perturbed answer vectors and better utility. This intuition is consistent with our results. In particular, RR+LP delivers the smallest MAE change when $\epsilon_1=0.1\epsilon$ and $\epsilon_2=0.9\epsilon$, regardless of data sparsity and $\epsilon$. The remaining results are similar to our findings in Figure \ref{fig:rrlp_parameter_assignment} and thus are omitted due to limited space. Therefore, in the remaining experiments, we use $\epsilon_1=0.1\epsilon$ and $\epsilon_2=0.9\epsilon$ as the privacy budget configuration for the RR+LP mechanism.
}

\vspace{-0.1in}
\subsection{Parameter Impact on Truth Inference Accuracy} 
\label{sc:impact}
\vspace{-0.1in}
In this section, we present the impact of sparsity, data distribution, and data size on the accuracy of truth inference results. 

\vspace{-0.1in}
\begin{figure}[!htbp]
\begin{center}
\begin{tabular}{@{}c@{}c@{}}
	\includegraphics[width=0.25\textwidth]{./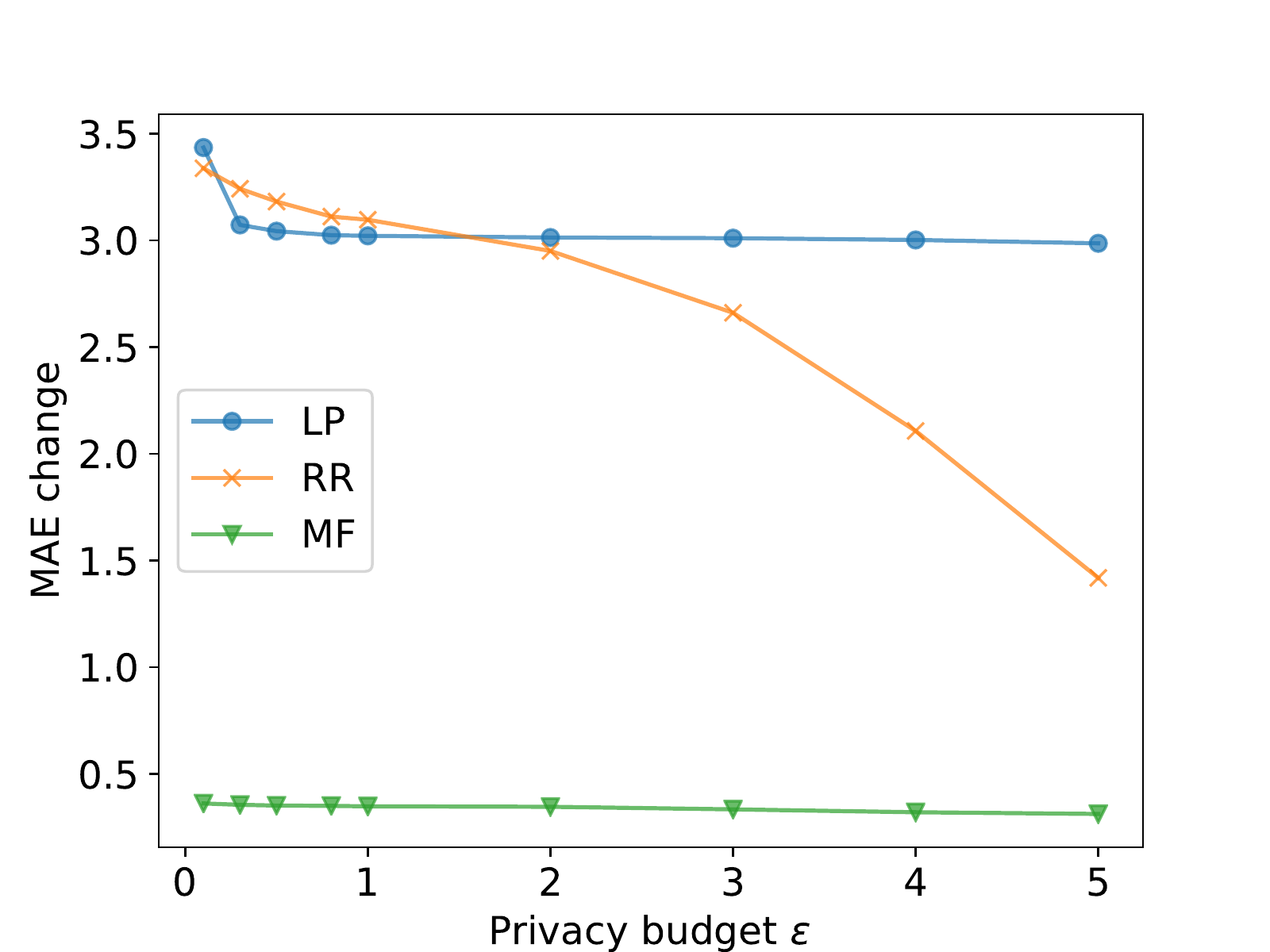} 
	&
	\includegraphics[width=0.25\textwidth]{./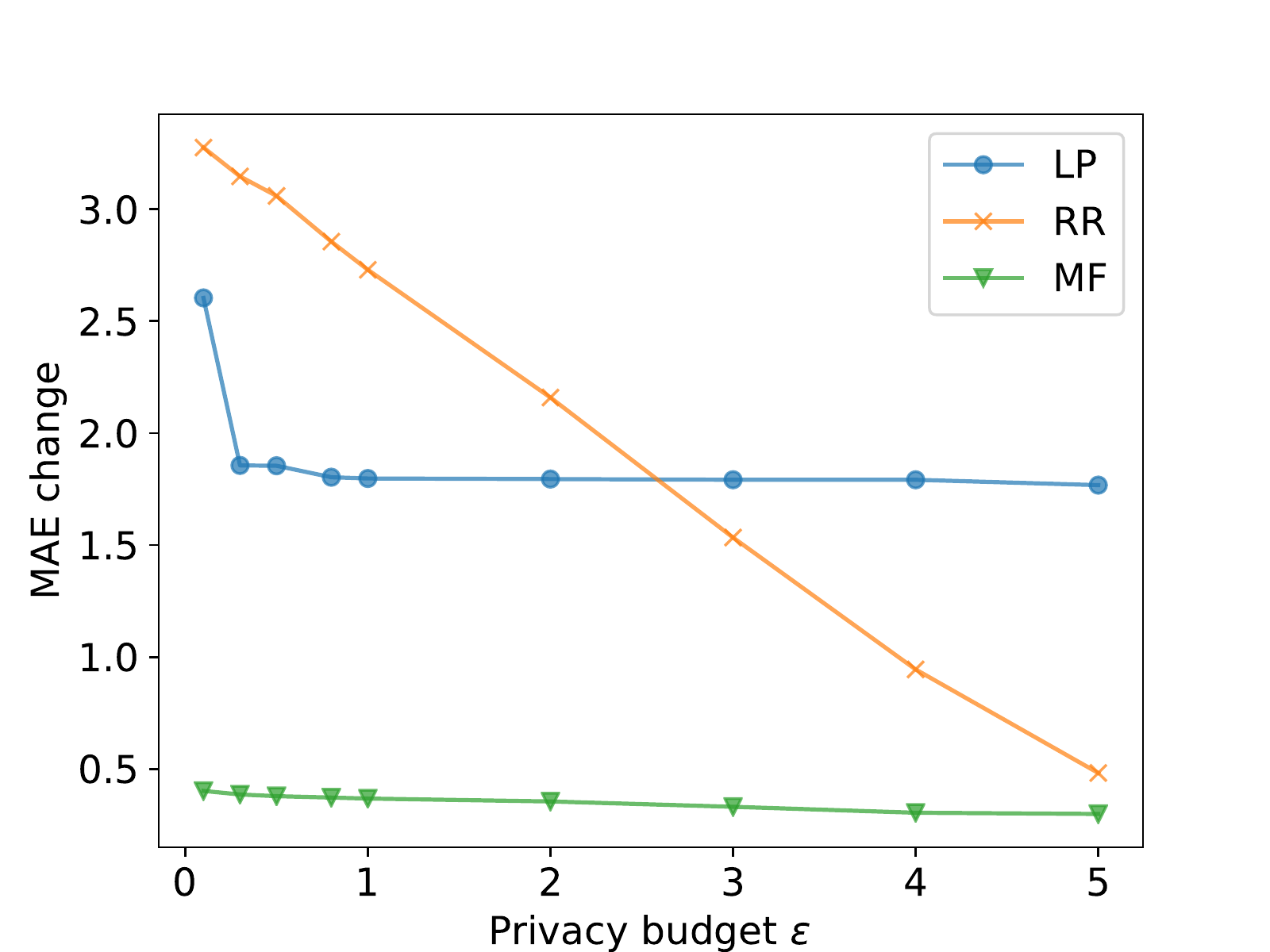}
	\\
	{\scriptsize(a) sparsity$=0.9$}
	&
	{\scriptsize(b) sparsity$=0.5$}
\nop{    
	\\
    \includegraphics[width=0.33\textwidth]{./Figures/m2000n200den0_10skew5_comp4_varyEpsilon.pdf} 
	&
	\includegraphics[width=0.33\textwidth]{./Figures/m2000n200den0_50skew5_comp4_varyEpsilon.pdf}
	&
	\includegraphics[width=0.33\textwidth]{./Figures/m2000n200den0_90skew5_comp4_varyEpsilon.pdf}
	\\
	{\scriptsize(d) sparsity$=0.9$, truth distribution $\mathcal{N}(4,1)$}
	&
	{\scriptsize(e) sparsity$=0.5$, truth distribution $\mathcal{N}(4,1)$}
	&
	{\scriptsize(f) sparsity$=0.1$, truth distribution $\mathcal{N}(4,1)$}
}    
\end{tabular}
\vspace{-0.1in}
\caption{\small \label{fig:ae_wrt_epsilon_200} Accuracy of truth inference w.r.t. different privacy budget $\epsilon$ (2000 workers and 200 tasks).}
\end{center}
\end{figure}

\noindent{\bf Impact of privacy budget.} First, we vary the privacy budget $\epsilon$, and measure MAE change of RR, LP, and MF. Figure \ref{fig:ae_wrt_epsilon_200} (a) - (b) show the results on the tasks whose truth follows the distribution $\mathcal{N}(0,1)$. 
Overall, MF always provides the smallest MAE change. 
The MAE change of LP witnesses a sudden drop when $\epsilon$ increases from 0.1 to 1, and keeps stable hereafter. This is because the Laplace noise is substantial when $\epsilon$ is small. 
On the other hand, the MAE change of RR keeps decreasing sharply with the growth of $\epsilon$.  This is because according to Equation (\ref{eq:rr}), when $\epsilon$ is as small as 0.1, $e^{\epsilon}\approx 1$. So the probability that RR keeps an answer unchanged is close to the probability that it substitutes the answer with a different random value in the domain. In other words, the perturbed answers are generated by following a uniform distribution. When $\epsilon$ increases, the existing values are much more likely to be kept in the dataset than being replaced. This leads to sharp drop of MAE change. 
Furthermore, when comparing Figure \ref{fig:ae_wrt_epsilon_200} (a) and (b), the MAE change of LP decreases when the sparsity goes down. This is because LP replace NULL values by following a uniform distribution over the domain [0,9] with the mean value 4.5, which is far from the center of answers 0. Thus, it leads to larger MAE change on sparse data. In the contrast, MF is not affected by the change of sparsity. This is consistent with our theoretical analysis.
\nop{
In Figure \ref{fig:ae_wrt_epsilon_200} (d) - (f), we show the MAE change of the inferred truth, where the real truth follows the distribution $\mathcal{N}(4,1)$ (i.e., the truth centered at value 4). With the shift of truth and consequently the distribution of worker answers, we observe the following interesting patterns of MAE change. 
First, the benefit of MF over the other approaches in Figure 3 (d) is not as significant as shown in Figure 3 (a).
The reason is that LP, RR and RR+LP replace the NULL answers with the answers that are centered at the value 4.5.  
This leads to relatively small MAE change for the truth distribution $\mathcal{N}(4,1)$ (i.e., centered at value 4). 
In addition, the latter reason leads to other two phenomenons. 
First, compared with the previous setting that the truth is centered at value 0, the MAE change of RR in this setting (truth centered at value 4) dramatically decreases. 
It even can win MF when $\epsilon$ is higher than 4, because when $\epsilon$ is larger than 4, with at least 85\% probability RR keeps the original answer unaltered, which further reduces the MAE change. 
Second, the MAE change of LP, RR, and RR+LP in Figure \ref{fig:ae_wrt_epsilon_200} (d) - (f) is not heavily impacted by sparsity. 
This is surprising, as these three approaches are expected to be influenced by sparsity (see our theoretical analysis). 
We analyze the reason behind this phenomenon. It turned out that the insensitivity to sparsity of LP, RR, and RR+LP only appears for this setting, due to the fact that the NULL values are replaced with the values whose mean is 4.5, which is very close to where the truth is centered.

\nop{We further compare Figure \ref{fig:ae_wrt_epsilon_200} (e) with Figure \ref{fig:ae_wrt_epsilon_200} (b), and notice that the MAE change of LP and RR+LP decrease at a faster pace in Figure \ref{fig:ae_wrt_epsilon_200} (e)  than Figure \ref{fig:ae_wrt_epsilon_200} (b) when $\epsilon$ increases from 0.1 to 1. This is because the MAE change of LP and RR+LP in Figure \ref{fig:ae_wrt_epsilon_200} (b) arise from both replacement of NULL values and Laplace noise. While in Figure \ref{fig:ae_wrt_epsilon_200} (e), the MAE change is mainly incurred by the Laplace noise. Therefore, for this case, LP and RR+LP is more sensitive to the variation of $\epsilon$.RR has the opposite pattern in Figure \ref{fig:ae_wrt_epsilon_200} (b) and (d). When the answers are centered at value 4, the original answers are more similar to uniform distribution. Hence, even a larger $\epsilon$ means higher probability to keep an answer unaltered rather than replacing it by a random value, the reduction in MAE change is limited. 
We also observe that in Figure \ref{fig:ae_wrt_epsilon_200} (e), LP starts with the same MAE change as Figure \ref{fig:ae_wrt_epsilon_200} (b) (around 2.5 for both figures), but sharp decreases (e.g., drops from 1.8 to 0.9 when $\epsilon=0.5$. }
For same reason, we witness smaller variation in the MAE change of RR when comparing Figure \ref{fig:ae_wrt_epsilon_200} (e) with (b).
In contrast, when $\epsilon=0.1$, the MAE change of RR and RR+LP remains large. When $\epsilon$ is small, the substantial Laplace noise makes the truth inference results considerably erroneous. 
}

\nop{
Another important observation is that, as shown in Figure \ref{fig:ae_wrt_epsilon_200} (f), the MAE change of MF becomes worse than that of RR, LP, and RR+LP when $\epsilon$ is 2 and higher. These are the only cases that MF loses to the straw-man approaches. 
This is because when $\epsilon$ is 2 and higher, the straw-man approaches either keep almost all the original answers unchanged or replace the NULL values with the values that are close to the truth.
\nop{
\Wendy{4. Compare figure 3 (d), (e), and (f), why MAE change of all approaches in 3 figures do not change much when sparsity decreases? This is different from Figure 3 (a) - (c), in which the MAE change of each specific approaches indeed changes when the sparsity changes.}
\Boxiang{Actually only RR has this result. The reason is the same as the discussion in Figure \ref{fig:ae_wrt_epsilon_200} (e). That is because the original answers are close to uniform distribution. So the replaced answers and original answers are very similar.}

LP and RR+LP result in the largest MAE when $\epsilon$ is small. This is because of the large Laplace noise introduced by these two mechanisms. However, when $\epsilon$ is larger than 1, the utility of LP and RR+LP is much better. The reason is similar to the discussion in Figure \ref{fig:ae_wrt_epsilon_200} (a) - (c).
In general, RR and MF produce the smallest MAE, regardless of the privacy budget and sparsity.  Recall that RR uniformly randomly maps an original rating to the answer domain. The average result of the perturbed answers is close to 4, which is the ground truth of most tasks. 
We also note that when $\epsilon$ is larger than 4, RR can slightly outperform MF. This is because when $\epsilon$ is large, RR keeps almost all answers unaltered in the perturbed data.
}
We also generate tasks whose truth centered at value 9. The observation is similar to that in Figure \ref{fig:ae_wrt_epsilon_200} (a) - (c). We omit the result due to the space limit. 
}

\vspace{-0.1in}
\begin{figure}[!htbp]
\begin{center}
\begin{tabular}{@{}c@{}c@{}}
	\includegraphics[width=0.25\textwidth]{./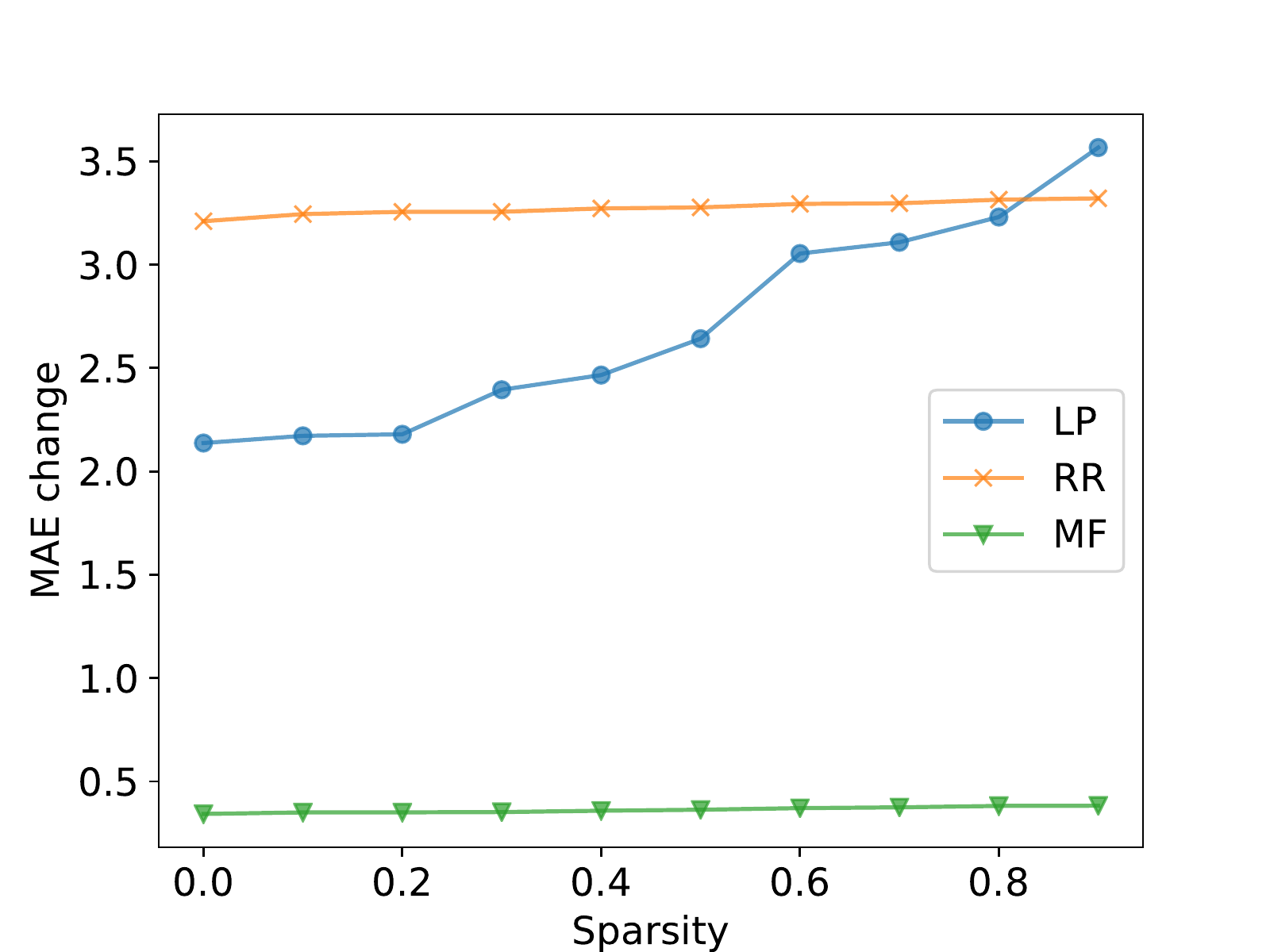} 
	&
	\includegraphics[width=0.25\textwidth]{./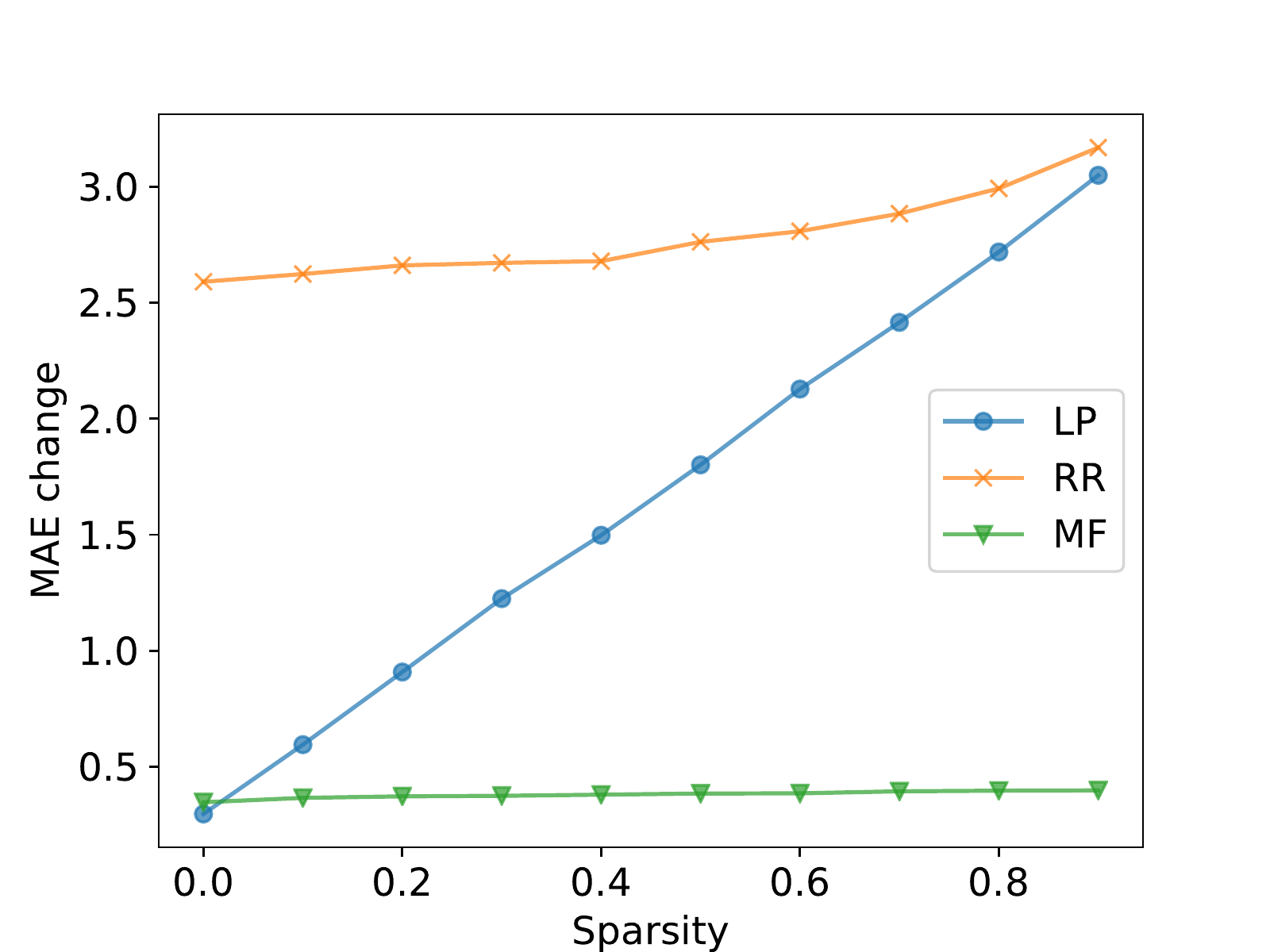}
	\\
	{\scriptsize(a) $\epsilon=0.1$}
	&
	{\scriptsize(b) $\epsilon=1.0$}
    \nop{
	\\
    \includegraphics[width=0.33\textwidth]{./Figures/m2000n200eps0_10skew5_comp4_varyDensity.pdf} 
	&
	\includegraphics[width=0.33\textwidth]{./Figures/m2000n200eps1_00skew5_comp4_varyDensity.pdf}
	&
	\includegraphics[width=0.33\textwidth]{./Figures/m2000n200eps5_00skew5_comp4_varyDensity.pdf}
	\\
	{\scriptsize(d) $\epsilon=0.1$, truth distribution $\mathcal{N}(4,1)$}
	&
	{\scriptsize(e) $\epsilon=1.0$, truth distribution $\mathcal{N}(4,1)$}
	&
	{\scriptsize(f) $\epsilon=5.0$, truth distribution $\mathcal{N}(4,1)$}
}    
\end{tabular}
\vspace{-0.1in}
\caption{\small \label{fig:ae_wrt_density_2000} Accuracy of truth inference w.r.t. different data sparsity (2000 workers and 200 tasks)}
\end{center}
\end{figure}

\noindent{\bf Impact of data sparsity.}
We measure the MAE change of the three mechanisms under various answer sparsity and show the results in Figure \ref{fig:ae_wrt_density_2000}. 
We observe that the MAE change of MF is very small in all settings. More importantly, it is insensitive to answer sparsity. In all cases, the MAE change never exceeds 0.5, even when the sparsity is as high as 0.9. 
This is because MF learns a latent joint distribution between the workers and tasks, and predicts the missing answers accurately. 
In comparison, the utility of LP and RR is sensitive to answer sparsity. 
Overall, the MAE change of these two approaches increases when the sparsity grows. 

\nop{
We also measure the MAE change on the answer distribution where the answers centered at value 4, and show the results in Figure \ref{fig:ae_wrt_density_2000} (d) - (f). 
In Figure \ref{fig:ae_wrt_density_2000} (c) and (f), we observe that MF can only lose to the straw-man approaches when the answers are centered at 4 and the sparsity is small.
When the answers are centered at 0, the original answers distribution are totally different from the uniform distribution. As a consequence, the straw-man approaches result in large errors when replacing the NULL values.
}

\nop{
\Wendy{2. Compare Figure 4 (b) and (e), Why MF loses RR+LP and LP when sparsity is larger than 0.7? Why LP wins RR+LP in Figure 4 (e)? It lost to RR+LP in figure 4 (b).}
By comparing Figure \ref{fig:ae_wrt_density_2000} (b) and (e), the MAE change of LP and RR+LP reduces dramatically when the answers are centered at 4. Its MAE can even be smaller than that of MF when the data is dense. 
Recall that these two approaches firstly replace NULL values by using a uniform distribution, and then adds Laplace noise. Even though the perturbed answers are centered at 4, \Boxiang{This is tricky. I cannot explain it well. I guess the reason is that these approaches fill in the missing cells with $\mathcal{U}(0,9)+Lap(\frac{10}{1})$. And it is different from the way we generate answers $\mathcal{N}(4,1)+\mathcal{N}(0, \sigma_i^2)$, even though both of them are centered at 4. So when the data is sparse, the MAE change can be relatively large.}
}


\nop{
First, we generate synthetic datasets with different density, and measure the truth inference error after applying various perturbation mechanisms. The results are shown in Figure \ref{fig:ae_wrt_density_2000}.
We observe that the error of MF is very small and stable. In all cases, it is no larger than 1.0. This shows the advantage of MF, since it imputes the NULL values according to a joint latent factor space between the workers and minimizes the regularized distortion in Equation (\ref{eq:dp_loss}). 
}

\vspace{-0.2in}
\begin{figure}[!htbp]
\begin{center}
\begin{tabular}{@{}c@{}c@{}}
	\includegraphics[width=0.25\textwidth]{./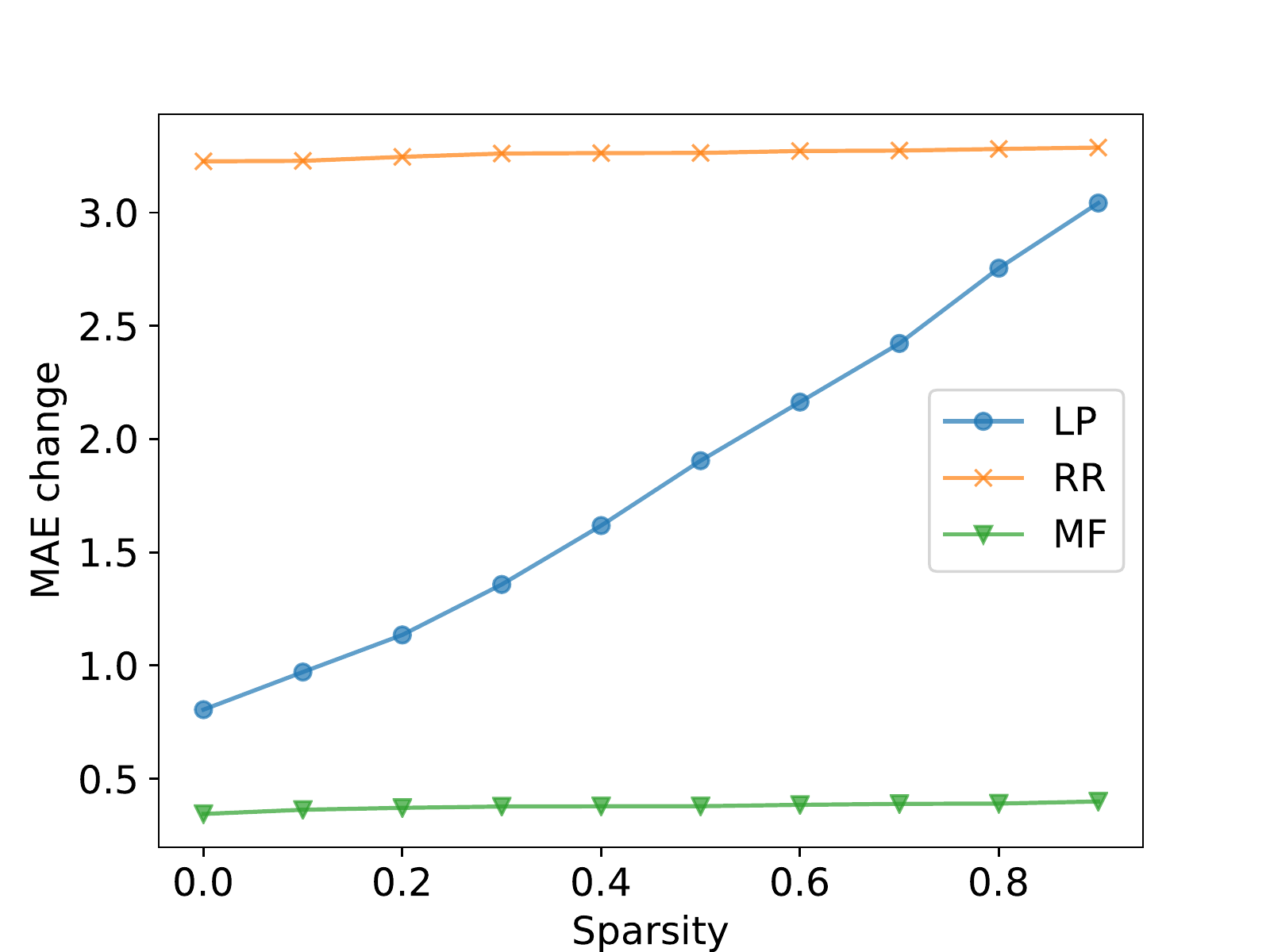} 
	&
	\includegraphics[width=0.25\textwidth]{./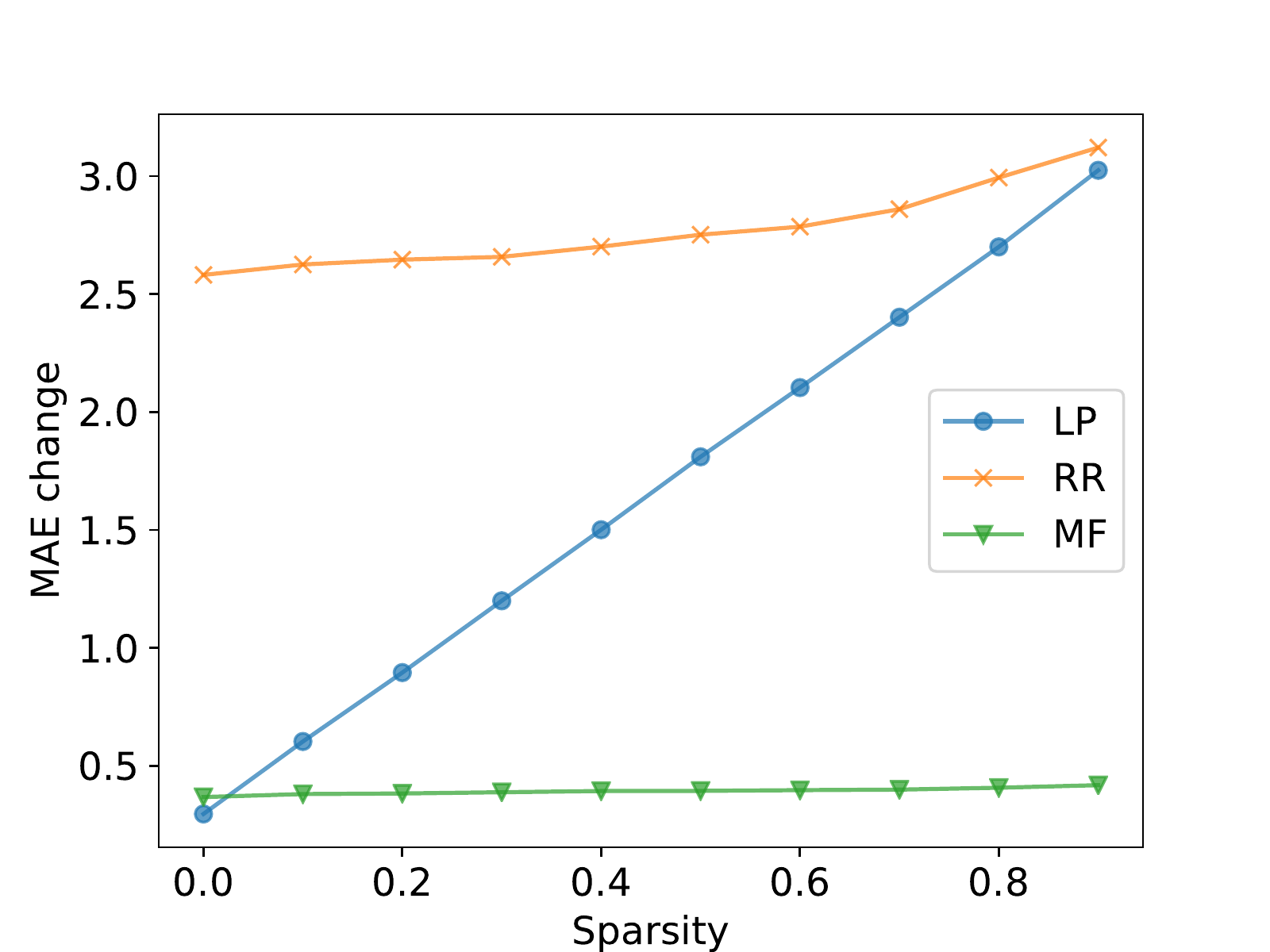}
	\\
	{\scriptsize(a) $\epsilon=0.1$}
	&
	{\scriptsize(b) $\epsilon=1.0$}
    \nop{
	\\
    \includegraphics[width=0.33\textwidth]{./Figures/m10000n1000eps0_10skew5_comp4_varyDensity.pdf} 
	&
	\includegraphics[width=0.33\textwidth]{./Figures/m10000n1000eps1_00skew5_comp4_varyDensity.pdf}
	&
	\includegraphics[width=0.33\textwidth]{./Figures/m10000n1000eps5_00skew5_comp4_varyDensity.pdf}
	\\
	{\scriptsize(d) $\epsilon=0.1$, truth distribution $\mathcal{N}(4,1)$}
	&
	{\scriptsize(e) $\epsilon=1.0$, truth distribution $\mathcal{N}(4,1)$}
	&
	{\scriptsize(f) $\epsilon=5.0$, truth distribution $\mathcal{N}(4,1)$}
    }
\end{tabular}
\vspace{-0.1in}
\caption{\small \label{fig:ae_wrt_density_10000} Accuracy of truth inference w.r.t. different data sparsity (10000 workers and 1000 tasks)}
\end{center}
\vspace*{-0.05in}
\end{figure}

\noindent{\bf Impact of data size.}
We generate the datasets that consist of 10,000 workers and 1,000 tasks. The results for MAE change on the datasets with different sparsity are displayed in 
Figure \ref{fig:ae_wrt_density_10000}. The observations are very similar to those on the small datasets (Figure \ref{fig:ae_wrt_density_2000}). 
We highlight the most interesting observations.
First, when $\epsilon=0.1$, the MAE change of LP  is smaller when the data size is larger. LP requires to add the Laplace noise $Lap(\frac{|\Gamma|}{\epsilon})$ to each answer. The perturbation for each answer is large when $\epsilon=0.1$. However, for the same task, the average of the noise is inversely proportional to the number of answers. Therefore, when there are more answers, the aggregate noise from the workers reduces. We also notice that the MAE change resides in the same range for both small and large datasets. The reason is that the noise added to each answer is independent from the data size. 
We also vary the privacy budget $\epsilon$ on the large datasets and measure the MAE change. The results are similar to the small datasets (Figure \ref{fig:ae_wrt_epsilon_200}). We include the results in the full paper \cite{sun2018truth}.

\nop{
firstly try different size of synthetic datasets to evaluate the Scalability of all three approaches. Here, we choose two datasets: (1) 2000 workers, 200 tasks; (2) 10000 workers, 500 tasks. The two synthetic datasets have 50\% of high-quality workers and 50\% of low-quality workers. Figure \ref{fig:ae_wrt_density_2000} shows the accuracy of smaller dataset and figure \ref{fig:ae_wrt_density_10000} the accuracy of the larger one. We can see that all three approaches are not significantly affected by the size of input. However, comparing sub-figure (a), (b) and (c) on both datasets, we can see that Laplace perturbation (LP) is very sensitive to privacy budget $\epsilon$. When $\epsilon=0.01$, the MAE of LP on both dataset is much larger than the other two approaches, while random response (RR) and matrix factorization (MF) have quite stable MAE because the privacy budge is beyond decreasing segment of them. When $\epsilon$ gets larger, the MAE of LP is gets closer to MF as they share the same very basic ideas: generate noise following Laplace distribution.

Another observation is that MF and LP is more sensitive to the answer density than RR. Because MF and LP are both using Laplace noise and RR uses completely different mechanism. However, when density is larger than 0.5, MF and RR (when $\epsilon\ge 0.1$) can have better accuracy than RR.

Then for the three real-world datasets with different size, since the density of the real-world datasets are fixed, evaluated the mean absolute error w.r.t. different privacy budget $\epsilon$. The results are shown in figure \ref{fig:ae_wrt_epsilon_realworld}. The overall observation remains the same as synthetic datasets: the Laplace perturbation approach is very sensitive to privacy budget $\epsilon$ when we vary it in interval $[0.1,1.0]$, and random response and our matrix factorization approach have quite stable and high accuracy because the privacy budget is beyond the decreasing segment of the two approaches. When we increase privacy budget $\epsilon$, the performance of Laplace perturbation is getting closer to matrix factorization because they share the same very basic perturbation idea. However, in figure \ref{fig:ae_wrt_epsilon_realworld} (c), in the relevance, RR has a better accuracy than MF and LP. \Haipei{???}
}

\nop{
\subsection{Impact of Different Perturbation Approaches on Worker Distribution}
\label{section:perturbation_impact_distribution}

In this section, explore how three approaches affect the original worker quality distribution that inferred from algorithm \ref{alg:estimate}. As we defined in section \ref{section:truth_inference}, we assign equal quality $\frac{1}{n}$ to all workers and keep the sum of qualities to be 1 during the iteration. This will make the quality difficult to show because they are very small when $n$ is large. Here, we let all the worker qualities multiplied by $n$. Therefore, that the workers will have a quality of 1 at the beginning. When algorithm ends, we define the workers with quality larger or equal to 1 as good quality workers and the workers with quality smaller than 1 as bad quality workers.

Before we start, we need to know if the worker quality that inferred by algorithm \ref{alg:estimate} is accurate. Here, do experiments on synthetic dataset with 2000 workers and 200 tasks, which contains the ground truth of worker quality. We use the dataset with a density of 0.8, half high-quality workers and half low-quality workers. The result is shown in figure \ref{fig:symthetic_quality} (a), in which the x-axis is the relative quality values and y-axis is the percentage in each slice. We can see that algorithm \ref{alg:estimate} has successfully and accurately inferred the worker quality distribution, i.e., 50\% of the workers has quality less than 1 and another 50\% is larger than 1.
We can also estimate how the good workers dominate the final result by using following 

After that, we apply three approaches on the dataset and evaluated how they impact the worker quality distribution. The result are shown in \ref{fig:synthetic_quality} (b)-(d)

We can see that LP has destroyed the original worker distribution, and on the contrary, RR and MP somehow preserved the original distribution to some extend. This result explain the order of accuracy in figure \ref{fig:ae_wrt_density_2000} (b).
}

\nop{
\begin{figure*}[!htbp]
\begin{center}
\begin{tabular}{@{}c@{}c@{}c@{}c@{}}
    \includegraphics[width=0.25\textwidth]{./Figures/quality_websearch.pdf} 
	&
	\includegraphics[width=0.25\textwidth]{./Figures/quality_websearch_distribution_eps0_10_.pdf}
	&
	\includegraphics[width=0.25\textwidth]{./Figures/quality_websearch_distribution_eps1_00_.pdf}
    &
	\includegraphics[width=0.25\textwidth]{./Figures/quality_websearch_distribution_eps5_00_.pdf}
    \\
    {\scriptsize(a) Original quality distribution}
	&
	{\scriptsize(b) Perturbed quality distribution}
	&
	{\scriptsize(c) Perturbed quality distribution}
    &
    {\scriptsize(d) Perturbed quality distribution}
    \\
	&
	{\scriptsize privacy budget $\epsilon=0.1$}
	&
	{\scriptsize privacy budget $\epsilon=1.0$}
    &
    {\scriptsize privacy budget $\epsilon=5.0$}
\end{tabular}
\vspace{-0.1in}
  \caption{\small \label{fig:websearch_quality} Worker quality distributions of web search dataset, 34 workers, 177 tasks, density=0.294118, privacy budget $\epsilon=0.1$}
\end{center}
\end{figure*}

\begin{figure}[!htbp]
\begin{center}
\begin{tabular}{@{}c@{}c@{}c@{}}
    \includegraphics[width=0.25\textwidth]{./Figures/quality_websearch.pdf} 
	&
	\includegraphics[width=0.25\textwidth]{./Figures/quality_adult_distribution_eps0_10_.pdf}
	&
	\includegraphics[width=0.25\textwidth]{./Figures/quality_adult_distribution_eps1_00_.pdf}
    &
	\includegraphics[width=0.25\textwidth]{./Figures/quality_adult_distribution_eps5_00_.pdf}
    \\
    {\scriptsize(a) Original quality distribution}
	&
	{\scriptsize(b) Perturbed quality distribution}
	&
	{\scriptsize(c) Perturbed quality distribution}
    &
    {\scriptsize(d) Perturbed quality distribution}
    \\
	&
	{\scriptsize privacy budget $\epsilon=0.1$}
	&
	{\scriptsize privacy budget $\epsilon=1.0$}
    &
    {\scriptsize privacy budget $\epsilon=5.0$}
\end{tabular}
\vspace{-0.1in}
  \caption{\small \label{fig:adultcontent_quality} Worker quality distributions of adult content dataset, 825 workers, 11,040 tasks, density=0.006334 \Boxiang{I think we can remove Figure 8, 9 and 10. The only explanation we can give is that MF is accurate in imputation and perturbation, so that it preserves the worker quality distribution somehow. While the other approaches simply destroy it.}}
\end{center}
\end{figure}

\begin{figure*}[!htbp]
\begin{center}
\begin{tabular}{@{}c@{}c@{}c@{}c@{}}
    \includegraphics[width=0.25\textwidth]{./Figures/quality_websearch.pdf} 
	&
	\includegraphics[width=0.25\textwidth]{./Figures/quality_relevance_distribution_eps0_10_.pdf}
	&
	\includegraphics[width=0.25\textwidth]{./Figures/quality_relevance_distribution_eps1_00_.pdf}
    &
	\includegraphics[width=0.25\textwidth]{./Figures/quality_relevance_distribution_eps5_00_.pdf}
    \\
    {\scriptsize(a) Original quality distribution}
	&
	{\scriptsize(b) Perturbed quality distribution}
	&
	{\scriptsize(c) Perturbed quality distribution}
    &
    {\scriptsize(d) Perturbed quality distribution}
    \\
	&
	{\scriptsize privacy budget $\epsilon=0.1$}
	&
	{\scriptsize privacy budget $\epsilon=1.0$}
    &
    {\scriptsize privacy budget $\epsilon=5.0$}
\end{tabular}
\vspace{-0.1in}
  \caption{\small \label{fig:relevance_quality} Worker quality distributions of relevance finding dataset, 766 workers, 19,902 tasks, density=0.009859}
\end{center}
\end{figure*}
}



\begin{figure}[!htpb]
\begin{center}
\begin{tabular}{@{}c@{}c@{}}
	\includegraphics[width=0.25\textwidth]{./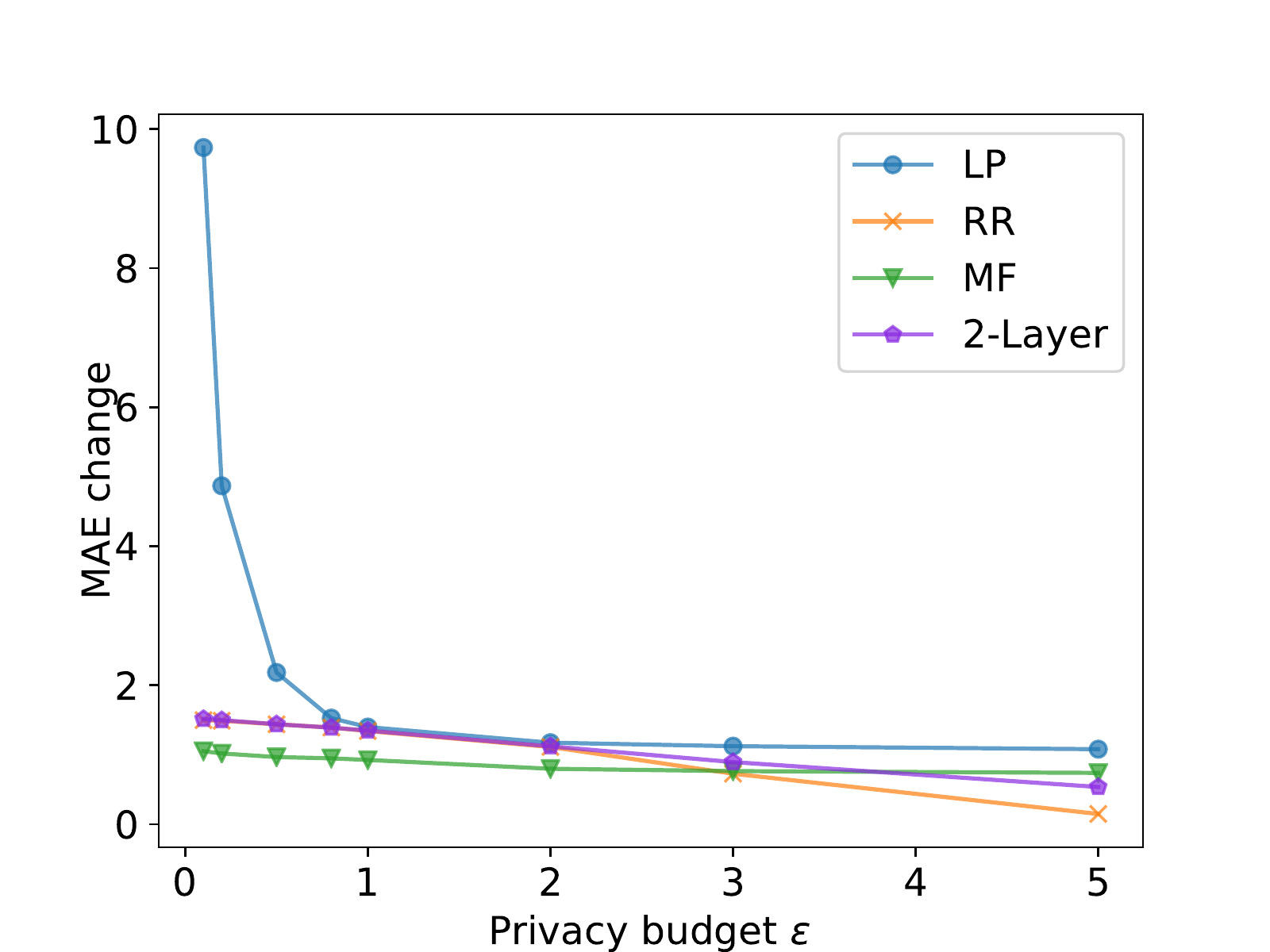} 
	&
	\includegraphics[width=0.25\textwidth]{./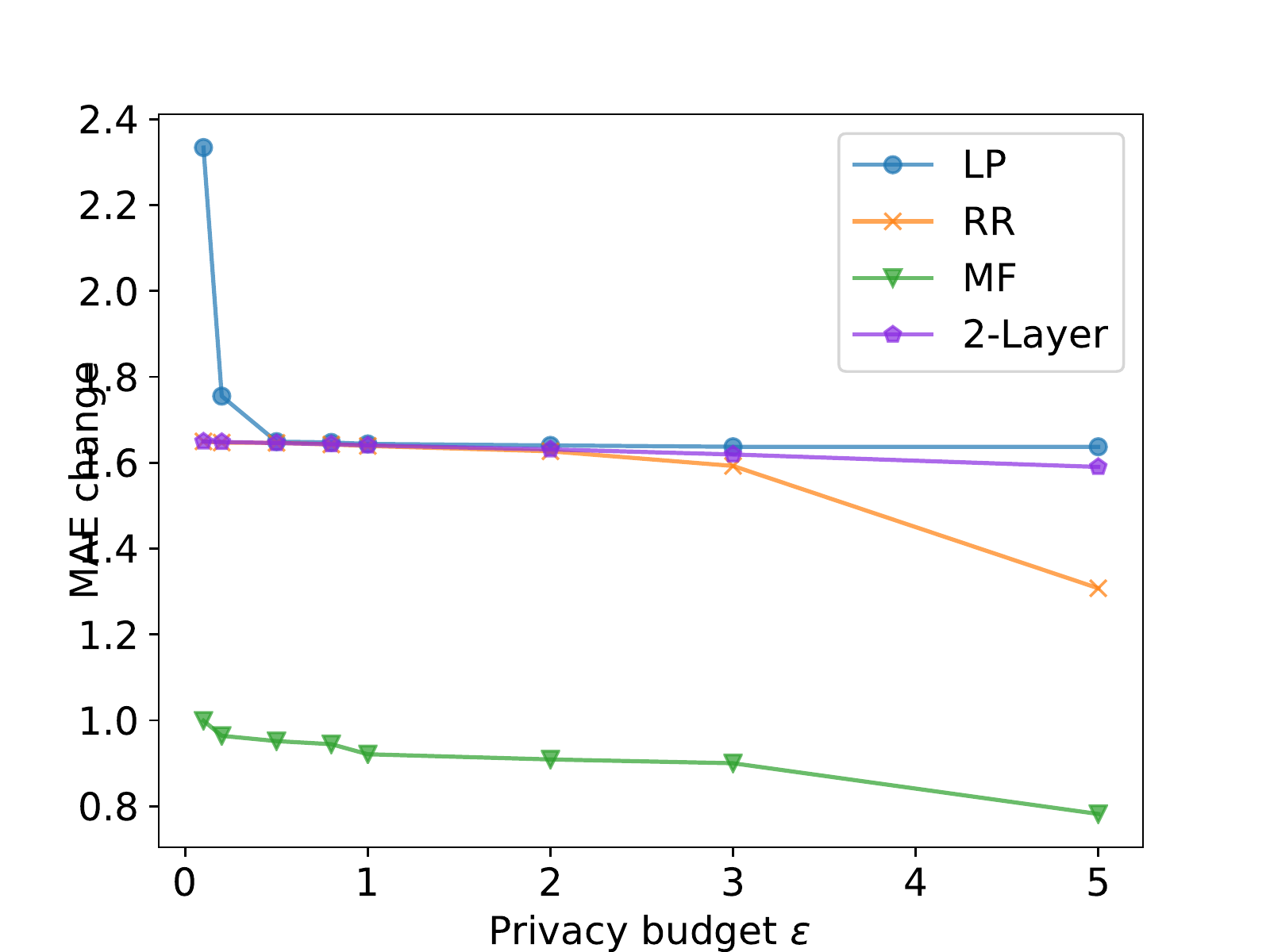}
	\\
	{\scriptsize(a) Web dataset}
	&
	{\scriptsize(b) AdultContent dataset}
\end{tabular}
\vspace{-0.1in}
  \caption{\small \label{fig:ae_wrt_epsilon_realworld} Accuracy of truth inference w.r.t. privacy budget $\epsilon$ on two real-world datasets.}
  \vspace*{-0.1in}
\end{center}
\end{figure}

\vspace{-0.1in}
\subsection{Comparison with Existing Method}
\vspace{-0.1in}
We compare the MAE change of RR, LP, and MF with the 2-Layer approach \cite{li2018an} on the two real-world datasets. We vary the privacy budget $\epsilon$ and evaluate the MAE change of the four approaches. The results are shown in Figure \ref{fig:ae_wrt_epsilon_realworld}. We have the following observations. 
First, similar to the results on the synthetic datasets, our MF approach has the lowest MAE change for most of the cases. 
By matrix factorization, the MF approach fills NULL values with non-NULL values that follow the current  answer distribution. Therefore, the inferred truths from the perturbed answers are close to the inferred truths of the original data, which leads to a small MAE change. In contrast, by LP, RR, and 2-layer approaches, the NULL values are replaced with values in the range [0,4] with mean 2. Since the dataset is very sparse, such NULL value replacement shifts the inferred truth closer to 2 (compared with the inferred truth 0 from the original dataset), and thus further away from the ground truth. Therefore, the MAE change of these three approaches is larger than that of MF. 
Second, the 2-Layer approach shows comparable MAE change as RR. This is because it decides the privacy budget $\epsilon_i$ of each worker $W_i$ by following a uniform distribution with mean of $\epsilon$ and then perturbs the answers by following randomized response. Such data perturbation scheme introduces similar noise as RR, especially when the number of workers is large.
Third, in the {\em AdultContent} dataset, the MAE change of the RR approach has a sharp drop when $\epsilon$ gets larger than 3. The reason is that, 
when $\epsilon$ gets larger than 3, RR keeps the NULL values unchanged with high probability (approximately $80\%$). The inferred truth on such perturbed dataset is very close to that from the original dataset, which leads to smaller MAE change. 



\nop{
\subsection{Summary of Experiments}
To summarize, our experiments show that, in general, the MF approach delivers better utility than the other approaches, especially when: (1) the datasets are very sparse; and (2) the privacy budget is small. 
}



\vspace{-0.1in}
\section{Related Work}
\label{sc:related}
\vspace{-0.1in}
\nop{
\noindent{\bf Data Collection with Privacy Protection.}
To et al. \cite{to2015privgeocrowd} propose a approach called PrivGeoCrowd to protect the workers' privacy in spatial crowdsourcing that can be inferred by knowing the exact location. Unlike our work, the workers in their setting reports their privacy data to a Cell Service Provider without perturbation, i.e., they are using centralized differential privacy.
Erlingsson et al. \cite{erlingsson2014rappor} design a random response-based approach named RAPPOR. In their setting the clients are assumed to answer each question, i.e., they do not consider sparsity.
To preserve the workers' privacy, \cite{wang2016incentive} use cryptographic approaches, rather than differential privacy, to preserve the workers' privacy in mobile crowdsourcing systems; \cite{liu2012cloud,keren2014monitoring,hsu2012distributed,chan2012differentially} restricts the amount of communications between workers and the untrusted data collector;
\cite{hsu2012distributed} assumes there is a trustable third party (TTP) and \cite{de2011short} uses anonymous data collection to protect the workers' privacy.
}

Differential privacy (DP) \cite{Dwork06differentialprivacy} is a privacy definition that requires the output of a computation should not allow inference about any record’s presence in or absence from the computation's input. 
However, the classic DP model assumes the data curator is trusted. Therefore, it cannot be adapted to the crowdsourcing model in which the data curator is untrusted. 
Local differential privacy (LDP) has recently surfaced as a strong measure of privacy in contexts where personal information remains private even from data analysts.
LDP has been widely used in practice.
\cite{erlingsson2014rappor} propose RAPPOR to collect and analyze users' data without privacy violation. RAPPOR is built on the concept of randomized response, which was proposed \cite{warner1965randomized} initially for collection of survey answers. 
RAPPOR has been 
deployed in Google's Chrome Web browser for reporting the statistics of users' browsing history \cite{rapporgoogle}. 
However, RAPPOR is limited to simple data analysis such
as counting. It cannot deal with aggregates and sophisticated analysis such as classification and clustering \cite{shin2018privacy}. 
Apple applies LDP to learn new words typed by users and identify frequently used Emojis while retaining user privacy \cite{thakurta2017learning}. 
Microsoft adapts LDP to its collection process of
a variety of telemetry data \cite{ding2017collecting}.
A series of work \cite{bassily2017practical,bassily2015local,qin2016heavy,wang2017locally} aim to make LDP practical by improving the error of frequency estimation and heavy hitters on the data perturbed by LDP.
Chen et al. \cite{chen2016private} design a framework for aggregation of private location data with LDP. Since the same location is not equally sensitive to multiple users, the authors develop a personalized LDP protocol based on randomized response.
Qin et al. \cite{qin2017generating} designed a protocol to reconstructed decentralized social networks by collecting neighbor lists from users while preserving LDP. 
\nop{
considers social graphs, and design the algorithms to protect the private neighbor lists with LDP.}
 Nyugen et al. \cite{nguyen2016collecting} design an LDP mechanism named Harmony to collect and analyze data from users of smart devices. Harmony supports both basic statistics (e.g., mean and frequency
estimates), and complex machine learning tasks (e.g., linear
regression, logistic regression and SVM classification).
Other works on LDP \cite{erlingsson2014rappor,fanti2016building,qin2016heavy,wang2017locally} only focus on simple statistical aggregation functions, such as identifying heavy hitters and count estimation. None of these works consider truth inference as the utility function. 
\nop{
\noindent{\bf Sparse Data Handling.}
So far there is not much work that considers how to handle sparsity in differential privacy mechanisms, especially the sparsity is presented as NULL value. 
Cormode et al. \cite{cormode2011differentially} add differentially private noise on the sparse dataset and filter negative entries. Then they generate a compact summary of the noisy data by sampling. Unlike our work, they consider 0 values in the original dataset as the missing data rather than NULL.
Nikolov et al. \cite{nikolov2013geometry} present a geometric approach to achieve differential privacy on sparse database. They firstly achieve give $(\epsilon,\delta)$-differential privacy by using orthogonal decomposition and Gaussian noise, then the give $\epsilon$-differential privacy approach by using approximation of K-norm distribution. However, different from our work, their approaches are designed for aggregate linear queries over histograms such as contingency tables and range queries, not specific cells in the database.
So as in \cite{venanzi2014community,misra2014crowdsourcing}.
}

The problem of protecting worker privacy in crowdsourcing systems has received much attention recently. 
Kairouz et al. \cite{kairouz2014extremal} investigate the trade-off between worker privacy and data utility as a constraint optimization problem, where the utility is measured as information theoretic statistics such as mutual information and divergence. They prove that solving the privacy-utility problem is equivalent to solving a finite dimensional linear program. Their utility measures are different from ours (truth inference). 
Ren et al. \cite{ren2018textsf} consider the problem of publishing high-dimensional crowdsourced data with LDP. The server aggregates the data (in the format of randomized bit strings) from individual users and estimates their join distribution. Thus their utility function (i.e., estimation of joint distribution) is fundamentally different from ours. 
Among all the previous works, probably \cite{li2018an} is the most relevant to ours. The authors propose a two-layer perturbation mechanism based on randomized response to protect worker privacy, while providing high accuracy for truth discovery. However, \cite{li2018an} assumes that the data is complete. 
B{\'e}ziaud et al. \cite{beziaud2017lightweight} protects worker profiles by allowing the workers to perturb their profiles locally by using randomized response. They do not consider truth inference as the utility function. 

One research direction that is relevant to our work is to apply DP/LDP on recommender systems. 
The data-obfuscation
solutions (e.g. \cite{canny2002collaborative,shen2014privacy,shen2016epicrec}) adapts DP to the recommender systems and rely on adding noise to the original
data or computation results to restrict the information leakage
from recommender outputs. Shin et al. \cite{shin2018privacy} focus on the LDP setting, aiming to protect both rating and item privacy for the users in recommendation systems. They develop a new matrix factorization algorithm under LDP. In specific, gradient perturbation is applied in the iterative factorization process. To reduce the perturbation error, the authors adopt random projection for matrix dimension reduction.

\vspace{-0.1in}
\section{Conclusion}
\label{sc:conclu}
\vspace{-0.1in}
In this paper, we consider the problem of privacy preserving Big data collection under the crowdsourcing setting, aiming to 
protecting worker privacy with LDP guarantee while providing highly accurate truth inference results. 
We overcome the weakness of two existing LDP approaches (namely  randomized response and noise perturbation) due to worker answer sparsity, and design a new LDP matrix factorization method that adds perturbation on objective functions. 

In the future, we plan to continue the study of privacy protection for crowdsourcing systems. 
We plan to design the privacy preservation method for other truth inference algorithms, for example, majority voting based methods \cite{tian2015max}.
We also plan to investigate how to protect task privacy, i.e., the sensitive information in the tasks,  while allow the workers to provide high-quality answers.

\bibliographystyle{abbrv}
{
\bibliography{bib}  

\begin{thebibliography}{10}

\bibitem{appledp}
Apple's differential privacy is about collecting your data - but not your data.
\newblock
  \url{https://www.wired.com/2016/06/apples-differential-privacy-collecting-data/}.

\bibitem{tedcruz}
Ted cruz using firm that harvested data on millions of unwitting facebook
  users.
\newblock
  https://www.theguardian.com/us-news/2015/dec/11/senator-ted-cruz-president-campaign-facebook-user-data.

\bibitem{balu2016differentially}
R.~Balu and T.~Furon.
\newblock Differentially private matrix factorization using sketching
  techniques.
\newblock In {\em Workshop on Information Hiding and Multimedia Security},
  pages 57--62. ACM, 2016.

\bibitem{bassily2015local}
R.~Bassily and A.~Smith.
\newblock Local, private, efficient protocols for succinct histograms.
\newblock In {\em Symposium on Theory of Computing}, pages 127--135. ACM, 2015.

\bibitem{bassily2017practical}
R.~Bassily, U.~Stemmer, A.~G. Thakurta, et~al.
\newblock Practical locally private heavy hitters.
\newblock In {\em Advances in Neural Information Processing Systems}, pages
  2285--2293, 2017.

\bibitem{beziaud2017lightweight}
L.~B{\'e}ziaud, T.~Allard, and D.~Gross-Amblard.
\newblock Lightweight privacy-preserving task assignment in skill-aware
  crowdsourcing.
\newblock In {\em International Conference on Database and Expert Systems
  Applications}, pages 18--26. Springer, 2017.

\bibitem{bowser2014sharing}
A.~Bowser, A.~Wiggins, L.~Shanley, J.~Preece, and S.~Henderson.
\newblock Sharing data while protecting privacy in citizen science.
\newblock {\em interactions}, 21(1):70--73, 2014.

\bibitem{burke2006participatory}
J.~A. Burke, D.~Estrin, M.~Hansen, A.~Parker, N.~Ramanathan, S.~Reddy, and
  M.~B. Srivastava.
\newblock Participatory sensing.
\newblock 2006.

\bibitem{canny2002collaborative}
J.~Canny.
\newblock Collaborative filtering with privacy.
\newblock In {\em IEEE Symposium on Security and Privacy}, pages 45--57, 2002.

\bibitem{cao2012whom}
C.~C. Cao, J.~She, Y.~Tong, and L.~Chen.
\newblock Whom to ask?: jury selection for decision making tasks on micro-blog
  services.
\newblock {\em Proceedings of the VLDB Endowment}, 5(11):1495--1506, 2012.

\bibitem{chen2016private}
R.~Chen, H.~Li, A.~Qin, S.~P. Kasiviswanathan, and H.~Jin.
\newblock Private spatial data aggregation in the local setting.
\newblock In {\em International Conference on Data Engineering (ICDE)}, pages
  289--300. IEEE, 2016.

\bibitem{rapporgoogle}
Chromium.org.
\newblock Design documents: Rappor (randomized aggregatable privacy preserving
  ordinal responses).
\newblock \url{http://www.chromium.org/developers/ design-documents/rappor}.

\bibitem{ciglic2014k}
M.~Ciglic, J.~Eder, and C.~Koncilia.
\newblock k-anonymity of microdata with null values.
\newblock In {\em International Conference on Database and Expert Systems
  Applications}, pages 328--342. Springer, 2014.

\bibitem{ding2017collecting}
B.~Ding, J.~Kulkarni, and S.~Yekhanin.
\newblock Collecting telemetry data privately.
\newblock In {\em Advances in Neural Information Processing Systems}, pages
  3574--3583, 2017.

\bibitem{duchi2013local}
J.~C. Duchi, M.~I. Jordan, and M.~J. Wainwright.
\newblock Local privacy and statistical minimax rates.
\newblock In {\em IEEE Symposium on Foundations of Computer Science (FOCS)},
  pages 429--438, 2013.

\bibitem{Dwork06differentialprivacy}
C.~Dwork.
\newblock Differential privacy.
\newblock In {\em the International Colloquium on Automata, Languages and
  Programming (ICALP)}, pages 1--12. Springer, 2006.

\bibitem{dwork2008differential}
C.~Dwork.
\newblock Differential privacy: A survey of results.
\newblock In {\em International Conference on Theory and Applications of Models
  of Computation}, pages 1--19. Springer, 2008.

\bibitem{dwork2006calibrating}
C.~Dwork, F.~McSherry, K.~Nissim, and A.~Smith.
\newblock Calibrating noise to sensitivity in private data analysis.
\newblock In {\em Theory of Cryptography}, pages 265--284. Springer, 2006.

\bibitem{erlingsson2014rappor}
{\'U}.~Erlingsson, V.~Pihur, and A.~Korolova.
\newblock Rappor: Randomized aggregatable privacy-preserving ordinal response.
\newblock In {\em International Conference on Computer and Communications
  Security}, pages 1054--1067, 2014.

\bibitem{fanti2016building}
G.~Fanti, V.~Pihur, and {\'U}.~Erlingsson.
\newblock Building a rappor with the unknown: Privacy-preserving learning of
  associations and data dictionaries.
\newblock {\em Proceedings on Privacy Enhancing Technologies}, 2016(3):41--61,
  2016.

\bibitem{friedman2016differential}
A.~Friedman, S.~Berkovsky, and M.~A. Kaafar.
\newblock A differential privacy framework for matrix factorization recommender
  systems.
\newblock {\em User Modeling and User-Adapted Interaction}, 26(5):425--458,
  2016.

\bibitem{guo2012so}
S.~Guo, A.~Parameswaran, and H.~Garcia-Molina.
\newblock So who won?: dynamic max discovery with the crowd.
\newblock In {\em International Conference on Management of Data}, pages
  385--396, 2012.

\bibitem{hay2016principled}
M.~Hay, A.~Machanavajjhala, G.~Miklau, Y.~Chen, and D.~Zhang.
\newblock Principled evaluation of differentially private algorithms using
  dpbench.
\newblock In {\em International Conference on Management of Data}, pages
  139--154. ACM, 2016.

\bibitem{hua2015differentially}
J.~Hua, C.~Xia, and S.~Zhong.
\newblock Differentially private matrix factorization.
\newblock In {\em IJCAI}, pages 1763--1770, 2015.

\bibitem{amt}
A.~Inc.
\newblock Amazon mechanical turk.
\newblock \url{https://www.mturk.com/mturk}.

\bibitem{kairouz2014extremal}
P.~Kairouz, S.~Oh, and P.~Viswanath.
\newblock Extremal mechanisms for local differential privacy.
\newblock In {\em Advances in Neural Information Processing Systems}, pages
  2879--2887, 2014.

\bibitem{kandappu2015privacy}
T.~Kandappu, A.~Friedman, V.~Sivaraman, and R.~Boreli.
\newblock Privacy in crowdsourced platforms.
\newblock In {\em Privacy in a Digital, Networked World}, pages 57--84. 2015.

\bibitem{karwa2014differentially}
V.~Karwa, A.~B. Slavkovi{\'c}, and P.~Krivitsky.
\newblock Differentially private exponential random graphs.
\newblock In {\em International Conference on Privacy in Statistical
  Databases}, pages 143--155. Springer, 2014.

\bibitem{kasiviswanathan2011can}
S.~P. Kasiviswanathan, H.~K. Lee, K.~Nissim, S.~Raskhodnikova, and A.~Smith.
\newblock What can we learn privately?
\newblock {\em SIAM Journal on Computing}, 40(3):793--826, 2011.

\bibitem{koren2009matrix}
Y.~Koren, R.~Bell, and C.~Volinsky.
\newblock Matrix factorization techniques for recommender systems.
\newblock {\em Computer}, 42(8), 2009.

\bibitem{li2014dpsynthesizer}
H.~Li, L.~Xiong, L.~Zhang, and X.~Jiang.
\newblock Dpsynthesizer: differentially private data synthesizer for privacy
  preserving data sharing.
\newblock {\em Proceedings of the VLDB Endowment}, 7(13):1677--1680, 2014.

\bibitem{li2014confidence}
Q.~Li, Y.~Li, J.~Gao, L.~Su, B.~Zhao, M.~Demirbas, W.~Fan, and J.~Han.
\newblock A confidence-aware approach for truth discovery on long-tail data.
\newblock {\em Proceedings of the VLDB Endowment}, 8(4):425--436, 2014.

\bibitem{li2018an}
Y.~Li, C.~Miao, L.~Su, J.~Gao, Q.~Li, B.~Ding, and K.~Ren.
\newblock An efficient two-layer mechanism for privacy-preserving truth
  discovery.
\newblock In {\em International Conference on Knowledge Discovery and Data
  Mining}, 2018.

\bibitem{liu2012cdas}
X.~Liu, M.~Lu, B.~C. Ooi, Y.~Shen, S.~Wu, and M.~Zhang.
\newblock Cdas: a crowdsourcing data analytics system.
\newblock {\em Proceedings of the VLDB Endowment}, 5(10):1040--1051, 2012.

\bibitem{Mohammed:2011:DPD:2020408.2020487}
N.~Mohammed, R.~Chen, B.~C. Fung, and P.~S. Yu.
\newblock Differentially private data release for data mining.
\newblock In {\em Proceedings of the 17th ACM SIGKDD International Conference
  on Knowledge Discovery and Data Mining}, pages 493--501, 2011.

\bibitem{nguyen2016collecting}
T.~T. Nguy{\^e}n, X.~Xiao, Y.~Yang, S.~C. Hui, H.~Shin, and J.~Shin.
\newblock Collecting and analyzing data from smart device users with local
  differential privacy.
\newblock {\em arXiv preprint arXiv:1606.05053}, 2016.

\bibitem{qin2016heavy}
Z.~Qin, Y.~Yang, T.~Yu, I.~Khalil, X.~Xiao, and K.~Ren.
\newblock Heavy hitter estimation over set-valued data with local differential
  privacy.
\newblock In {\em ACM Conference on Computer and Communications Security},
  pages 192--203. ACM, 2016.

\bibitem{qin2017generating}
Z.~Qin, T.~Yu, Y.~Yang, I.~Khalil, X.~Xiao, and K.~Ren.
\newblock Generating synthetic decentralized social graphs with local
  differential privacy.
\newblock In {\em ACM Conference on Computer and Communications Security},
  pages 425--438. ACM, 2017.

\bibitem{ren2018textsf}
X.~Ren, C.-M. Yu, W.~Yu, S.~Yang, X.~Yang, J.~A. McCann, and S.~Y. Philip.
\newblock Lopub: High-dimensional crowdsourced data publication with local
  differential privacy.
\newblock {\em Transactions on Information Forensics and Security},
  13(9):2151--2166, 2018.

\bibitem{shen2014privacy}
Y.~Shen and H.~Jin.
\newblock Privacy-preserving personalized recommendation: An instance-based
  approach via differential privacy.
\newblock In {\em International Conference on Data Mining (ICDM)}, pages
  540--549. IEEE, 2014.

\bibitem{shen2016epicrec}
Y.~Shen and H.~Jin.
\newblock Epicrec: Towards practical differentially private framework for
  personalized recommendation.
\newblock In {\em International Conference on Computer and Communications
  Security}, pages 180--191. ACM, 2016.

\bibitem{shin2018privacy}
H.~Shin, S.~Kim, J.~Shin, and X.~Xiao.
\newblock Privacy enhanced matrix factorization for recommendation with local
  differential privacy.
\newblock {\em Transactions on Knowledge and Data Engineering}, 2018.

\bibitem{sun2018truth}
H.~Sun, B.~Dong, and H.~W. Wang.
\newblock Truth inference on sparse crowdsourcing data with local
  differentially privacy.
\newblock
  \url{http://dmlab.cs.stevens.edu/publications?name=truth-inference-sparse},
  2018.

\bibitem{thakurta2017learning}
A.~G. Thakurta, A.~H. Vyrros, U.~S. Vaishampayan, G.~Kapoor, J.~Freudiger,
  V.~R. Sridhar, and D.~Davidson.
\newblock Learning new words, 2017.
\newblock US Patent 9,594,741.

\bibitem{tian2015max}
T.~Tian and J.~Zhu.
\newblock Max-margin majority voting for learning from crowds.
\newblock In {\em Advances in Neural Information Processing Systems}, pages
  1621--1629, 2015.

\bibitem{wang2017locally}
T.~Wang, J.~Blocki, N.~Li, and S.~Jha.
\newblock Locally differentially private protocols for frequency estimation.
\newblock In {\em Proc. of the 26th USENIX Security Symposium}, pages 729--745,
  2017.

\bibitem{wang2016using}
Y.~Wang, X.~Wu, and D.~Hu.
\newblock Using randomized response for differential privacy preserving data
  collection.
\newblock In {\em EDBT/ICDT Workshops}, 2016.

\bibitem{warner1965randomized}
S.~L. Warner.
\newblock Randomized response: A survey technique for eliminating evasive
  answer bias.
\newblock {\em Journal of the American Statistical Association},
  60(309):63--69, 1965.

\bibitem{xiao2011differential}
X.~Xiao, G.~Wang, and J.~Gehrke.
\newblock Differential privacy via wavelet transforms.
\newblock {\em IEEE Transactions on Knowledge and Data Engineering},
  23(8):1200--1214, 2011.

\bibitem{zhang2013reducing}
C.~J. Zhang, L.~Chen, H.~Jagadish, and C.~C. Cao.
\newblock Reducing uncertainty of schema matching via crowdsourcing.
\newblock {\em Proceedings of the VLDB Endowment}, 6(9):757--768, 2013.

\bibitem{zhao2012probabilistic}
B.~Zhao and J.~Han.
\newblock A probabilistic model for estimating real-valued truth from
  conflicting sources.
\newblock {\em International Workshop on Quality in Databases}, 2012.

\end{thebibliography}
}
\newpage

\end{document}